\tikzstyle{node}=[fill=black, draw=black, shape=circle, scale=0.5]
\tikzstyle{wnode}=[fill=white, draw=black, shape=circle, scale=0.5]
\tikzstyle{textbox}=[inner sep=2pt, shape=rectangle, fill=none]
\tikzstyle{textnode}=[inner sep=0mm, shape=circle, fill=white]
\tikzstyle{gnode}=[inner sep=0mm, minimum size=1mm, fill={rgb,255: red,221; green,221; blue,221}, draw={rgb,255: red,221; green,221; blue,221}, shape=circle]
\tikzstyle{refine}=[fill=black, draw=black, shape=regular polygon, regular polygon sides=3, rotate=180, scale=0.5]
\tikzstyle{coarsen}=[fill=white, draw=black, shape=regular polygon, regular polygon sides=3, scale=0.5]
\tikzstyle{bdytextbox}=[fill=white, draw=black, shape=rectangle]
\tikzstyle{redbox}=[fill=white, draw=red, shape=rectangle, text=red]
\tikzstyle{bluecirc}=[inner sep=1mm, fill=white, draw={rgb,255: red,4; green,51; blue,255}, shape=circle, text={rgb,255: red,4; green,51; blue,255}]
\tikzstyle{rednode}=[fill={rgb,255: red,255; green,128; blue,128}, draw={rgb,255: red,255; green,128; blue,128}, shape=circle]
\tikzstyle{new style 0}=[fill=white, draw=red, shape=circle]
\tikzstyle{bluenode}=[fill={rgb,255: red,125; green,221; blue,255}, draw={rgb,255: red,123; green,202; blue,255}, shape=circle]
\tikzstyle{yellownode}=[fill={rgb,255: red,255; green,210; blue,75}, draw={rgb,255: red,255; green,210; blue,75}, shape=circle]
\tikzstyle{blacksq}=[fill=black, draw=black, shape=rectangle, scale=0.5]
\tikzstyle{bluetext}=[fill=none, draw=none, shape=rectangle, text={rgb,255: red,4; green,51; blue,255}]
\tikzstyle{reg}=[draw, fill=white, rounded rectangle, rounded rectangle left arc=none, minimum height=1em, minimum width=1em, node font={\scriptsize}]
\tikzstyle{coreg}=[draw, fill=white, rounded rectangle, rounded rectangle right arc=none, minimum height=1em, minimum width=1em, node font={\scriptsize}]
\tikzstyle{turquoisenode}=[fill={rgb,255: red,115; green,255; blue,239}, draw={rgb,255: red,115; green,255; blue,239}, shape=circle]
\tikzstyle{resistor}=[R]
\tikzstyle{inductor}=[L]
\tikzstyle{capacitor}=[C]
\tikzstyle{voltage-source}=[american voltage source]
\tikzstyle{current-source}=[american current source]
\tikzstyle{edge}=[-, draw=black]
\tikzstyle{diredge}=[->, draw=black]
\tikzstyle{dashed edge}=[-, dashed, dash pattern=on 1pt off 1.5pt, draw=black]
\tikzstyle{dirdash}=[->, dashed, dash pattern=on 2pt off 0.5pt, draw=black]
\tikzstyle{mapsto}=[{|->}, draw=black]
\tikzstyle{gray diredge}=[draw={rgb,255: red,221; green,221; blue,221}, ->]
\tikzstyle{dark grey dirdash}=[->, dashed, dash pattern=on 2pt off 0.5pt, draw={rgb,255: red,81; green,81; blue,81}]
\tikzstyle{doubedge}=[-, draw=black, double=none, double distance=3pt, inner sep=0pt, thick]
\tikzstyle{thedge}=[-, line width=1.5pt, draw=black]
\tikzstyle{gray dashed}=[-, dashed, dash pattern=on 1pt off 1.5pt, draw={rgb,255: red,128; green,128; blue,128}]
\tikzstyle{rededge}=[-, draw=red]
\tikzstyle{gray edge}=[-, draw={rgb,255: red,128; green,128; blue,128}]
\tikzstyle{blthedge}=[-, thick, draw={rgb,255: red,4; green,51; blue,255}]
\tikzstyle{blthdash}=[-, dashed, dash pattern=on 1pt off 1.5pt, thick, draw={rgb,255: red,4; green,51; blue,255}]
\tikzstyle{dirrededge}=[draw=red, ->]
\tikzstyle{tildethrough}=[-, draw=black, decoration={markings,mark=at position 0.5 with {\arrow{Bar}}}, postaction={decorate}]
\title{Disconnection Rules are Complete for Chemical Reactions}
\author{
Ella Gale \footnote{University of Bristol, UK \texttt{ella.gale@bristol.ac.uk}} \and
Leo Lobski \footremember{ucl}{University College London, UK \texttt{leo.lobski.21@ucl.ac.uk f.zanasi@ucl.ac.uk}} \and
Fabio Zanasi \footrecall{ucl} \footnote{University of Bologna, Italy}
}
\date{}
\begin{document}
\maketitle

\begin{abstract}
We provide a category theoretical framework capturing two approaches to graph-based models of chemistry: formal {\em reactions} and {\em disconnection rules}. We model a translation from the latter to the former as a functor, which is faithful, and full up to isomorphism. This allows us to state, as our main result, that the disconnection rules are sound, complete and universal with respect to the reactions. Concretely, this means that every reaction can be decomposed into a sequence of disconnection rules in an essentially unique way. This provides a uniform way to store reaction data, and gives an algorithmic interface between (forward) reaction prediction and (backward) reaction search or retrosynthesis.
\end{abstract}

\section{Introduction}\label{sec:intro}

Graph-based models of chemical processes typically come at two different levels of abstraction: formal {\em reactions} and {\em disconnection rules}. Formal reactions are combinatorial rearrangements of atoms and charge, and are used for reaction prediction and storage of reaction data (see e.g.~the rightmost part of Figure~\ref{fig:clayden} below). Disconnection rules constitute hypothetical bond breaking in the direction opposite to a reaction, and are used for designing synthetic pathways and reaction search, known as {\em retrosynthesis}~\cite{logic-chemical,organic-synthesis,organic-chemistry,compuuter-aided2022,zhong2024recent}  (see e.g.~Figure~\ref{fig:disc-rules} below). Retrosynthetic analysis starts with a target molecule we wish to produce but do not know how. The aim is to ``reduce'' the target molecule to known starting molecules in such a way that when the starting molecules react, the target molecule is obtained as a product. This is done by (formally) partitioning the target molecule into functional parts referred to as {\em synthons}, which are replaced by actual molecules acting as the new targets~\cite{corey1967general,logic-chemical,organic-synthesis,organic-chemistry}. We refer the reader to our previous work on formalising retrosynthesis~\cite{approach-ictac} for the details. In the current work, we are merely interested in the first step of this process: our key observation is that the bond breaking disconnection rules (together with their converse rules), as found in the theory and practise of retrosynthesis (see Figure~\ref{fig:disc-rules}), are, in fact, enough to capture all reactions.
\begin{figure}
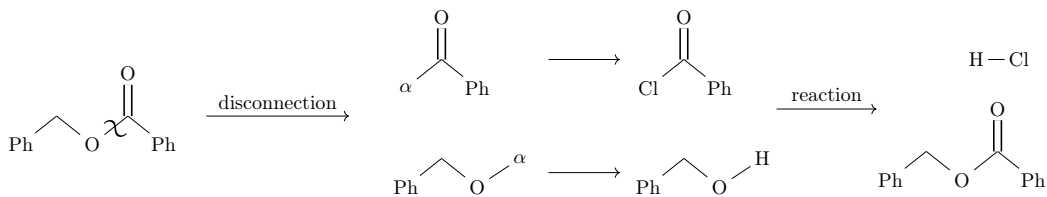

\centering
\scalebox{.75}{\tikzfig{retro-example}}
\caption{A simple retrosynthetic sequence. A molecule (far left) is disconnected at the O-COPh bond giving rise to two synthons (left) which can be mapped to precursor molecules (right) which can react to give the product (far right).  \label{fig:clayden}}
\end{figure}
\begin{figure}
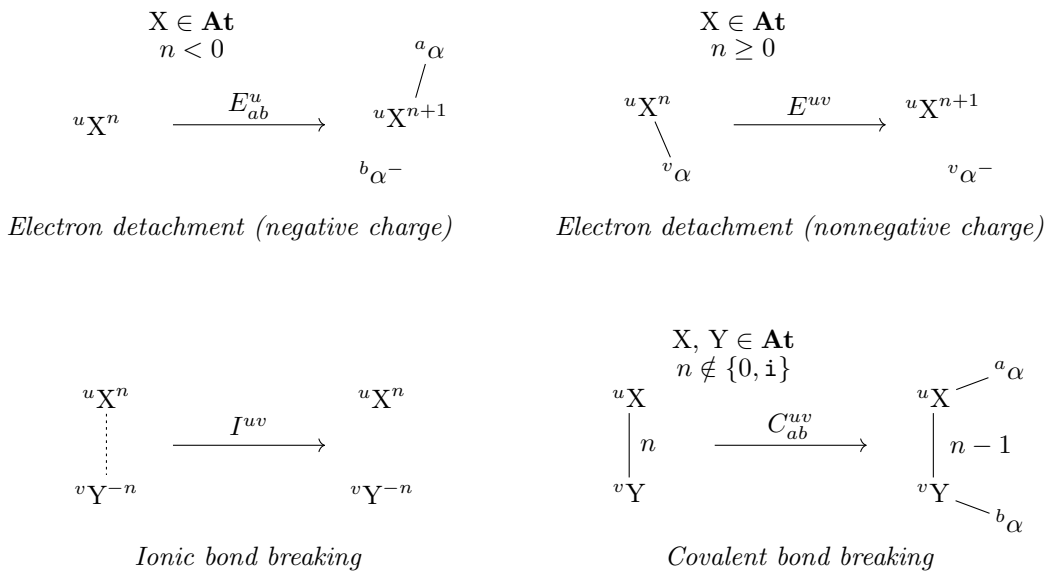

\centering
\scalebox{1}{\tikzfig{disc-rules-as-rewrites}}
\caption{The four disconnection rules\label{fig:disc-rules}}
\end{figure}

Whereas chemical reactions have been studied formally before, a mathematical description of disconnection rules has received far less attention~\cite{rules-to-term2006,approach-ictac}. Our approach takes a novel perspective on the basic units of retrosynthetic analysis -- the disconnection rules -- by making them first-class citizens of reaction representation. The mathematical and conceptual justification for doing so is the fact that, as we show, both disconnection rules and reactions can be arranged into (monoidal) categories~\cite{approach-ictac}, such that there is a functor taking each sequence of disconnection rules to a reaction. More broadly, our contribution incorporates disconnection rules within the framework of applied category theory~\cite{SevenSketches}, which emphasises compositional modelling as a means to uniformly study systems across various disciplines of science.

The reactions are formalised as certain partial bijections between labelled graphs representing molecules (Definition~\ref{def:category-reactions}). The disconnection rules are defined as partial functions on the labelled graphs, corresponding to the four basic rules in Figure~\ref{fig:disc-rules} (Definition~\ref{def:disc-rules}). The sequences of disconnection rules together with equations between them form the {\em disconnection category} (Definition~\ref{def:disc-cat}), which we think of as syntax for the reactions. This is made precise by constructing a functor from the disconnection category to the category of reactions (Section~\ref{sec:disc-to-react}). Our main results state that the functor is faithful (Theorem~\ref{thm:completeness}) and full up to isomorphism (Theorem~\ref{thm:universality}). Such a categorical perspective provides a precise mathematical meaning to the claim that disconnection rules are sound, complete and universal with respect to the reactions. This implies that every reaction can be decomposed into a sequence of disconnection rules (universality) in an essentially unique way (completeness).

The rest of the paper is structured as follows. Section~\ref{sec:reactions} defines the chemical graphs and the category of reactions. Section~\ref{sec:disc-rules} defines the disconnection rules and their category. It is moreover shown that any sequence of well-typed disconnection rules has a normal form. Section~\ref{sec:disc-to-react} defines a functor from the disconnection category to the category of reactions, and proves completeness and universality. Section~\ref{sec:conclusion} concludes.

\section{Chemical Graphs and Reactions}\label{sec:reactions}

We first define the chemical graphs in Subsection~\ref{subsec:chem-graphs}, which form the objects both in the category of reactions (Definition~\ref{def:category-reactions}) and the disconnection category (Definition~\ref{def:disc-cat}). Chemical reactions are modelled as certain combinatorial rearrangements of chemical graphs that preserve matter and charge in the appropriate sense. The core of the section is the category of reactions (Definition~\ref{def:category-reactions}), the semantic domain for our interpretation of disconnection rules in Section~\ref{sec:disc-to-react}.

\subsection{Chemical Graphs}\label{subsec:chem-graphs}

Let us fix a finite set of {\em vertex labels} $\Atset$, containing the special symbol $\alpha$. Formally, the only assumptions we make about $\Atset$ are (1) it is finite, (2) it contains the special symbol $\alpha$, and (3) $\Atset\setminus\{\alpha\}$ has at least two elements. However, in all the examples we shall assume that $\Atset$ contains a symbol for each main-group element of the periodic table: $\{H,C,O,P,\dots\}\sse\Atset$. For this reason we will also refer to $\Atset$ as the {\em atom labels}. The special symbol $\alpha$ may be thought of as representing an unpaired electron. Similarly, we fix a {\em valence function} $\mathbf v:\Atset\rightarrow\N$ with the only formal assumption that $\mathbf v(\alpha)=1$, but shall assume in the examples that the valence of an element symbol is the number of (unpaired) electrons in its outer electron shell.
\begin{remark}
The reason for choosing such level of generality for the atom labels and their valencies is the ability to model elements which exhibit different valence depending on the context. For instance, one could have separate atom labels for nitrogen whose valence is $5$ (all outer shell electrons are shared or take part in a reaction) or $3$ (two of the outer shell electrons pair with each other).
\end{remark}
Let $\Lab\coloneqq\{0,1,2,3,4,\ib\}$ denote the set of {\em edge labels}, where the integers stand for a covalent bond, and $\ib$ for an ionic bond. We further define maps $\cov,\ion:\Lab\rightarrow\N$: for $\cov$, assign to each edge label $0$, $1$, $2$, $3$, and $4$ the corresponding natural number and let $\ib\mapsto 0$, while for $\ion$, let $0,1,2,3,4\mapsto 0$ and $\ib\mapsto 1$. Finally, let us fix a countable set $\VS$ of {\em vertex names}; we usually denote the elements of $\VS$ by lowercase Latin letters $u,v,w,\dots$.

\begin{definition}[Chemically labelled graph]\label{def:prechemgraph}
A {\em chemically labelled graph} is a triple $(V,\tau,m)$, where $V\sse\VS$ is a finite set of {\em vertices}, $\tau:V\rightarrow\Atset\times\Z$ is a {\em vertex labelling function}, and $m:V\times V\rightarrow\Lab$ is an {\em edge labelling function} satisfying $m(v,v)=0$ and $m(v,w)=m(w,v)$ for all $v,w\in V$.
\end{definition}
Thus, a chemically labelled graph is irreflexive (we interpet the edge label $0$ as no edge) and symmetric, and each of its vertices is labelled with an element of $\Atset$, together with an integer indicating the charge. Given a chemically labelled graph $A$, we write $(V_A,\tau_A,m_A)$ for its vertex set and the labelling functions. We abbreviate the vertex labelling function followed by the first projection as $\tau_A^{\At}$, and similarly we write $\tau_A^{\crg}$ for composition with the second projection.

Given a chemically labelled graph $A$ and vertex names $u,v\in\VS$ such that $u\in V_A$ but $v\notin V_A\setminus\{u\}$, we denote by $A(u\mapsto v)$ the chemically labelled graph whose vertex set is $(V_A\setminus\{u\})\cup\{v\}$, and whose vertex and edge labelling functions agree with those of $A$, treating $v$ as if it were $u$. Further, we define the following special subsets of vertices:
\begin{itemize}
\item {\em $\alpha$-vertices}, whose label is the special symbol: $\alpha(A)\coloneqq\tau_A^{-1}(\alpha,\Z)$,
\item {\em chemical vertices}, whose label is not $\alpha$: $\Chem A\coloneqq V_A\setminus\alpha(A)$,
\item {\em neutral vertices}, whose charge is zero: $\Neu A\coloneqq\tau_A^{-1}(\Atset,0)$,
\item {\em charged vertices}, which have a non-zero charge: $\Crg A\coloneqq V_A\setminus\Neu A$,
\item {\em negative vertices}, which have a negative charge:
$$\Crgn A\coloneqq\{v\in V_A : \tau_A^{\crg}(v)<0\},$$
\item {\em positive vertices}, which have a positive charge:
$$\Crgp A\coloneqq\{v\in V_A : \tau_A^{\crg}(v)>0\}.$$
\end{itemize}
The {\em net charge} of a subset $U\sse V_A$ is the integer $\Net U\coloneqq\sum_{v\in U}\tau_A^{\crg}(v)$.

\begin{example}\label{ex:labelled-graph}
We give three examples of chemically labelled graphs: \textbf{A}, \textbf{B} (carbonate anion) and \textbf{C} (sodium cloride). We adopt the following conventions: (1) the vertex label from $\Atset$ is drawn at the centre of a vertex, (2) the vertex name is drawn as a superscript on the left (so within a single graph, no left superscript appears twice), (3) a non-zero charge is drawn as a superscript on the right, (4) $n$-ary covalent bonds are drawn as $n$ parallel lines, and (5) ionic bonds are drawn as dashed lines.
\begin{center}
\scalebox{1}{\tikzfig{example-labelling}}.
\end{center}
Below we give a table with different kinds of vertex subsets for the graphs:
\begin{center}
\begin{tabular}{ c | c | c | c }
                  & \textbf{A} & \textbf{B} & \textbf{C} \\ \hline
$\alpha$-vertices & $\{a,b\}$ & $\eset$ & $\eset$ \\ \hline
chemical vertices & $\{r,u\}$ & $V_B$ & $V_C$ \\ \hline
neutral vertices & $V_A$ & $\{u,v\}$ & $\eset$ \\ \hline
charged vertices & $\eset$ & $\{w,z\}$ & $V_C$ \\ \hline
negative vertices & $\eset$ & $\{w,z\}$ & $\{v\}$ \\ \hline
positive vertices & $\eset$ & $\eset$ & $\{u\}$ \\ \hline
net charge & $0$ & $-2$ & $0$
\end{tabular}
\end{center}
\end{example}

\begin{definition}[Neighbours]
Given a chemically labelled graph $A$ and a vertex $u\in V_A$, we define the sets of {\em neighbours} $\Nbr_A(u)$, {\em covalent neighbours} $\CN_A(u)$ and {\em ionic neighbours} $\IN_A(u)$ of $u$ as follows:
\begin{align*}
\Nbr_A(u) &\coloneqq\{v\in V_A : m_A(u,v)\neq 0\}, \\
\CN_A(u) &\coloneqq\{v\in V_A : \cov(m_A(u,v))\neq 0\}, \\
\IN_A(u) &\coloneqq\{v\in V_A : \ion(m_A(u,v))\neq 0\}.
\end{align*}
\end{definition}

\begin{definition}[Chemical graph]\label{def:chemgraph}
A {\em chemical graph} $A=(V_A,\tau_A,m_A)$ is a chemically labelled graph satisfying the following additional conditions:
\begin{enumerate}
\item for all $v\in V_A$, we have
$$\left|\tau_A^{\crg}(v)\right|+\sum_{u\in V_A}\cov\left(m_A(u,v)\right) = \mathbf v\tau_A^{\At}(v),$$\label{cgraph:valence}
\item for all $v\in\alpha(A)$ and $w\in V_A$ we have
\begin{enumerate}
\item $\tau_A^{\crg}(v)\in\{-1,0\}$,\label{cgraph:alpha1}
\item $m_A(v,w)\in\{0,1\}$,\label{cgraph:alpha2}
\item $\Nbr_A(u)$ has at most one element, and if $w\in\Nbr_A(u)$, then $w\in\Chem{A}$,\label{cgraph:alpha3}
\end{enumerate}
\item for all $v\in\Chem A$, the set $\IN_A(v)$ has at most one element, and if $u\in\IN_A(v)$, then $\tau_A^{\crg}(v)=-\tau_A^{\crg}(u)$.\label{cgraph:ion}
\end{enumerate}
\end{definition}
Condition~\ref{cgraph:valence} states that the sum of incident covalent bonds together with the absolute value of the charge must equal the valence of the vertex. Conditions~\ref{cgraph:alpha1}-\ref{cgraph:alpha3} say that a vertex labelled by $\alpha$ is either neutral or has charge $-1$, has at most one neighbour, which is chemical and to which it is connected via a single covalent bond. Condition~\ref{cgraph:ion} says that an edge with label $\ib$ only connects charged chemical vertices with equal net charges of opposite signs.

A {\em synthon} is a chemical graph whose set of $\alpha$-vertices is nonempty.
\begin{example}\label{ex:synthon-molecular}
The chemically labelled graphs in Example~\ref{ex:labelled-graph} are, in fact, chemical graphs with the standard valences of the atoms (i.e.~$\mathbf v(\text{C})=4$, $\mathbf v(\text{O})=2$ and $\mathbf v(\text{H})=\mathbf v(\text{Cl})=\mathbf v(\text{Na})=1$). Moreover, \textbf{A} is a synthon.
\end{example}

\subsection{Category of Reactions}\label{subsec:reaction-category}

We define reactions between chemically labelled graphs as partial bijections preserving the atom labels of chemical vertices whose domain and image have the same net charge, with the additional condition that the complements of the domain and image are isomorphic. The intuition is that electrons and atoms' charge may appear and disappear in the course of a reaction in such a way that the overall charge is preserved. We emphasise that these are indeed {\em formal} reactions, in the sense that the only constraints we impose are preservation of matter and charge: in order to capture the chemically feasible reactions, further constraints are needed either in the form of empirical data, or introduction of kinematics or dynamics into the model.

This way of representing reactions is motivated by and formally connected to {\em double pushout graph rewriting}~\cite{inferring-rule-composition,intermediate-level,rewriting-life21}: in fact, every reaction can be represented as a double pushout diagram in the category of chemical graphs.

\begin{definition}[Category of reactions]\label{def:category-reactions}
We denote by $\React$ the {\em category of reactions}, whose
\begin{itemize}
\item objects are chemical graphs,
\item morphisms $A\rightarrow B$ are tuples $(U_A,U_B,b,i)$, where
\begin{itemize}
\item $U_A\sse V_A$ and $U_B\sse V_B$ are subsets with $\Net{U_A}=\Net{U_B}$,
\item $b:\Chem{U_A}\rightarrow\Chem{U_B}$ is a bijection preserving the atom labels,
\item $i:V_A\setminus U_A\rightarrow V_B\setminus U_B$ is an isomorphism of labelled graphs,
\end{itemize}
such that for all $u\in\Chem{U_A}$ and $a\in V_A\setminus U_A$ we have
$$m_A(u,a)=m_B(bu,ia),$$
\item the composition of $(U_A,U_B,b,i):A\rightarrow B$ and $(W_B,W_C,c,j):B\rightarrow C$ is
$$(Z_A,Z_C,(c+j)(b+i),ji):A\rightarrow C,$$
where $Z_A\coloneqq U_A\cup i^{-1}(W_B\cap (V_B\setminus U_B))$ and $Z_C\coloneqq W_C\cup j(U_B\cap (V_B\setminus W_B))$,
\item for a chemical graph $A$, the identity is given by $(\eset,\eset,!,\id_A)$, where $!$ is the unique endomorphism on the empty set.
\end{itemize}
\end{definition}
Note that the composition in $\React$ is {\em not} the composition in the usual category of partial bijections: instead, it crucially relies on the fact that there is an isomorphism between the unchanged parts of the graph. The category $\React$ has a dagger structure~\cite{selinger-dagger,heunen-vicary-book}: the dagger of $(U_A,U_B,b,i):A\rightarrow B$ is given by $(U_B,U_A,b^{-1},i^{-1}):B\rightarrow A$. The dagger of $r\in\React$ is denoted by $\overline r$.

\begin{example}\label{ex:reaction}
We give an example of a reaction (formation of benzyl benzoate from benzoyl chloride and benzyl alcohol) below, where both $b$ and $i$ are identity maps. Here {Ph} stands for the phenyl group, and we use the convention from chemistry where an unlabelled vertex is a carbon atom with an appropriate number of hydrogen atoms attached.
\begin{center}
\scalebox{1}{\tikzfig{example-reaction}}.
\end{center}
\end{example}

\section{Disconnection Rules}\label{sec:disc-rules}

A {\em disconnection rule} is a partial endofunction on the set of chemical graphs. We define four classes of disconnection rules, all of which have a clear chemical significance: two versions of {\em electron detachment}, {\em ionic bond breaking} and {\em covalent bond breaking}. The reader may want to return to Figure~\ref{fig:disc-rules} in Section~\ref{sec:intro} before Definition~\ref{def:disc-rules} below, as it gives an intuitive explanation of our approach.

For the purposes of mathematical precision, our set of four disconnection rules is more fine-grained than what one would see in a typical textbook on retrosynthesis, where movement of electrons is usually implicitly modelled in the same step as disconnecting a bond, rather than including electron detachment as a separate step (see, for instance, the discussion on the choice of polarity in~\cite[p.~9]{organic-synthesis}).

We treat the disconnection rules as syntax, which generate the {\em terms} (Definition~\ref{def:terms}), whose equivalence classes under the equations of Figure~\ref{fig:disc-axioms} form the morphisms in the {\em disconnection category} (Definition~\ref{def:disc-cat}). The payoff of such a syntactic presentation is an axiomatic view of chemical reactions: in Section~\ref{sec:disc-to-react}, we construct a functor from the disconnection category to the category of reactions, and show that every reaction can be represented as a sequence of disconnection rules in an essentially unique way.

\begin{definition}[Disconnection rules]\label{def:disc-rules}
Let $u,v,a,b\in\VS$ be pairwise distinct vertex names. Let $U\in\{u,uv\}$ and $D\in\{\eset,ab\}$ range over the specified lists of vertex names. The four {\em disconnection rules} are partial functions on the set chemical graphs, defined by the table in Figure~\ref{fig:disc-rules-partial-fns} as follows: a chemical graph $A$ is in the domain of $d^U_D$ if $U\sse V_A$ but $D\cap V_A=\eset$, and the additional conditions of the first column hold; the output chemical graph $d(A)$ has the vertex set $V_A\cup D$, and the labelling functions on $U\cup D$ are defined by the remaining columns, while the labelling functions agree with those of $A$ on $V_A\setminus U$.
\end{definition}
\begin{figure}[h]
\centering
\makebox[\textwidth][c]{%
\begin{tabular}{| c || c | c | c | c | c | c | c | c |}
\hline
$d^U_D$ & $A\in\dom(d)$ & $\tau^{\crg}_{d(A)}(u)$ & $\tau^{\crg}_{d(A)}(v)$ & $\tau_{d(A)}(a)$ & $\tau_{d(A)}(b)$ & $m_{d(A)}(u,v)$ & $m_{d(A)}(u,a)$ & $m_{d(A)}(v,b)$ \rule{0pt}{10pt}\rule[-5pt]{0pt}{0pt} \\ \hhline{|=#=|=|=|=|=|=|=|=|}
$E^u_{ab}$ & \makecell{$u\in\Chem A$ \\ $u\in\Crgn A$} & $\tau_A^{\crg}(u)+1$ & N/A & $(\alpha,0)$ & $(\alpha,-1)$ & N/A & $1$ & N/A \\ \hline
$E^{uv}$ & \makecell{$u\in\Chem A$ \\ $u\notin\Crgn A$ \\ $v\in\alpha(A)$ \\ $m_A(u,v)=1$} & $\tau_A^{\crg}(u)+1$ & $-1$ & N/A & N/A & $0$ & N/A & N/A \\ \hline
$I^{uv}$ & \makecell{$m_A(u,v)=\ib$ \\ $u\in\Crgp A$ \\ $v\in\Crgn A$} & $\tau_A^{\crg}(u)$ & $\tau_A^{\crg}(v)$ & N/A & N/A & $0$ & N/A & N/A \\ \hline
$C^{uv}_{ab}$ & \makecell{$u,v\in\Chem A$ \\ $m_A(u,v)\notin\{0,\ib\}$} & $\tau_A^{\crg}(u)$ & $\tau_A^{\crg}(v)$ & $(\alpha,0)$ & $(\alpha,0)$ & $m_A(u,v)-1$ & $1$ & $1$ \\ \hline
\end{tabular}
}%
\caption{The disconnection rules defined as partial functions\label{fig:disc-rules-partial-fns}}
\end{figure}
Note that the disconnection rules look a lot like (a subset of) morphisms in $\React$, except that we keep track of the precise vertex names, and a rule applies to a whole set of chemical graphs. We make this connection precise in Section~\ref{sec:disc-to-react}. We further observe that each disconnection rule is injective (as a partial function), and hence has an inverse partial function.

We use the disconnection rules to define the {\em terms}, which will be used to define the morphisms in the disconnection category.
\begin{definition}[Terms]\label{def:terms}
The set of {\em terms} with types is generated by the following recursive procedure:
\begin{itemize}
\item for every chemical graph $A$, let $\id:A\rightarrow A$ be a term,
\item for every chemical graph $A$ and every $u\in V_A$, let $S^u:A\rightarrow A$ be a term,
\item for every chemical graph $A$, every $u\in\alpha(A)$ and every $v\in\VS$ such that $v\notin V_A\setminus\{u\}$, let $R^{u\mapsto v}:A\rightarrow A(u\mapsto v)$ be a term,
\item for every disconnection rule $d$ and every chemical graph $A$ in the domain of $d$, both $d:A\rightarrow d(A)$ and $\bar d:d(A)\rightarrow A$ are terms,
\item if $\mathtt t:A\rightarrow B$ and $\mathtt s:B\rightarrow C$ are terms, then $\mathtt t;\mathtt s:A\rightarrow C$ is a term.
\end{itemize}
\end{definition}
The first and the fifth items take care of the usual categorical structure, while the terms $S^u$ generated by the second item correspond to ``touching'' the vertex $u$ without changing the structure of the graph, and the terms $R^{u\mapsto v}$ rename an existing vertex $u$ into a fresh vertex $v$.

We refer to the terms of the form $d:A\rightarrow B$ and $\bar d:B\rightarrow A$ generated by the fourth item as {\em disconnections} and {\em connections}, respectively. More specifically, we use the symbols $E^{<0}$, $E^{\geq 0}$, $I$ and $C$ to denote the disconnections corresponding to the specific disconnection rules, and similarly the symbols $\bar{E}^{<0}$, $\bar{E}^{\geq 0}$, $\bar{I}$ and $\bar{C}$ refer to the corresponding connections. Similarly, $S$ and $R$ refer to the terms generated by the second and third items. The same letters in the typewriter type font ($\mathtt E^{<0}$, $\mathtt E^{\geq 0}$, $\mathtt I$, $\mathtt C$, $\bar{\mathtt E}^{<0}$, $\bar{\mathtt E}^{\geq 0}$, $\bar{\mathtt I}$, $\mathtt S$ and $\mathtt R$) are used to denote a sequence of terms of the corresponding kind.

Let us define the endofunction $\overline{()}$ on terms by the following recursion:
\begin{itemize}
\item $\left(\id:A\rightarrow A\right) \mapsto \left(\id:A\rightarrow A\right)$,
\item $\left(S^u:A\rightarrow A\right) \mapsto \left(S^u:A\rightarrow A\right)$,
\item $\left(R^{u\mapsto v}:A\rightarrow A(u\mapsto v)\right) \mapsto \left(R^{v\mapsto u}:A(u\mapsto v)\rightarrow A\right)$,
\item $\left(d:A\rightarrow B\right) \mapsto \left(\bar d:B\rightarrow A\right)$,
\item $\left(\bar d:A\rightarrow B\right) \mapsto \left(d:B\rightarrow A\right)$,
\item $\overline{\mathtt t;\mathtt s}\coloneqq\overline{\mathtt s};\overline{\mathtt t}$.
\end{itemize}

For defining equations, it will be useful to allow untyped terms: the equations (Figure~\ref{fig:disc-axioms}) capture interactions between local graph transformations (i.e.~the disconnection rules), so that the same equation should hold for a whole class of chemical graphs.
\begin{definition}[Untyped terms, well-typedness]\label{def:untyped-terms}
An {\em untyped term} is an element of the free monoid on the set
$$\{\id,S^u,R^{a\mapsto b},E^{ua},E^u_{ab},C^{uv}_{ab},I^{uv},\bar E^{ua},\bar E^u_{ab},\bar C^{uv}_{ab},\bar I^{uv} : u,v,a,b\in\VS\},$$
where we use the symbol $;$ to indicate the multiplication of the monoid.

Given an untyped term $\mathtt t$ and chemical graphs $A$ and $B$, we say that the expression $\mathtt t : A\rightarrow B$
is {\em well-typed} if it is in fact a term, that is, if it can be constructed using the recursive procedure of Definition~\ref{def:terms}.
\end{definition}
We define the binary relation $\leq$ on the set of untyped terms by letting $\mathtt t\leq\mathtt s$ if whenever $\mathtt t : A\rightarrow B$ is well-typed, then so is $\mathtt s : A\rightarrow B$.

The endofunction $\overline{()}$ on the untyped terms is defined in exactly the same way as for the terms with types, simply ignoring the types. Note that $\mathtt t : A\rightarrow B$ is well-typed if and only if $\overline{\mathtt t} : B\rightarrow A$ is. Moreover, observe that $\leq$ defines a preorder on the untyped terms. Consequently, we have $\mathtt t\leq\mathtt s$ if and only if $\overline{\mathtt t}\leq\overline{\mathtt s}$.

Given an untyped term $\mathtt t$, there are either no chemical graphs such that $\mathtt t : A\rightarrow B$ is well-typed, or there are infinitely many such graphs. The latter case is the reason for introducing the untyped terms: we want certain equalities to hold {\em whenever} both sides are well-typed.
\begin{definition}[Term equality]\label{def:term-id}
Let $\approx$ be an equivalence relation on the set of untyped terms. This induces the equivalence relation $\equiv$ on the set of terms as follows: for two terms $\mathtt t,\mathtt s:A\rightarrow B$ with the same type, we let $\mathtt t\equiv\mathtt s$ if either $\mathtt t\approx\mathtt s$ or $\overline{\mathtt t}\approx\overline{\mathtt s}$ as untyped terms.
\end{definition}
Given an equivalence relation $\approx$ on the untyped terms, let us introduce the following shorthand binary relations on the untyped terms:
\begin{itemize}
\item $\mathtt t\lesssim\mathtt s$ if $\mathtt t\approx\mathtt s$ and $\mathtt t\leq\mathtt s$,
\item $\mathtt t\simeq\mathtt s$ if $\mathtt t\lesssim\mathtt s$ and $\mathtt s\lesssim\mathtt t$.
\end{itemize}

\begin{definition}[Disconnection category]\label{def:disc-cat}
The {\em disconnection category} $\Disc$ has as objects the chemical graphs. The set of morphisms $\Disc(A,B)$ is given by the terms of type $A\rightarrow B$, subject to the usual associativity and unitality equations of a category, together with the identities $\equiv$ induced (in the sense of Definition~\ref{def:term-id}) by the equivalence relation defined in Figure~\ref{fig:disc-axioms}.
\end{definition}
\begin{figure}[p]
\centering
\begin{minipage}{0.5\textwidth}
\fbox{
\begin{minipage}{\textwidth}
\begin{align}
R^{u\mapsto z};R^{z\mapsto w} &\lesssim R^{u\mapsto w}\label{disc-eq:trans} \\
R^{u\mapsto z};R^{v\mapsto w} &\approx R^{v\mapsto w};R^{u\mapsto z}\label{disc-eq:rcomm} \\
R^{u\mapsto u} &\lesssim S^u\label{disc-eq:refl} \\
R^{b\mapsto z};R^{a\mapsto b} &\approx S^b;R^{a\mapsto z} \label{disc-eq:rsymm} \\
R^{u\mapsto v};S^w &\approx S^w;R^{u\mapsto v}\label{disc-eq:sr1} \\
R^{u\mapsto v};S^v &\simeq S^u;R^{u\mapsto v} \simeq R^{u\mapsto v}\label{disc-eq:sr2}
\end{align}
\end{minipage}}\vspace{0.2cm}
\fbox{
\begin{minipage}{\textwidth}
\begin{align}
R^{u\mapsto v};d^U_D &\approx d^U_D;R^{u\mapsto v}\label{disc-eq:rd1} \\
R^{u\mapsto v};E^{wv} &\simeq E^{wu};R^{u\mapsto v}\label{disc-eq:rd2} \\
d^U_{D[u]};R^{u\mapsto v} &\simeq d^U_{D[v/u]}\label{disc-eq:rd3} \\
d^{U'}_{ij};\bar h^U_{ab};R^{i\mapsto c};R^{j\mapsto d} &\lesssim \bar h^U_{ab};d^{U'}_{cd}\label{disc-eq:rd4}
\end{align}
\end{minipage}}\vspace{0.2cm}
\fbox{
\begin{minipage}{\textwidth}
\begin{align}
d^U_{ab};\bar d^U_{cd} &\approx S^U;R^{c\mapsto a};R^{d\mapsto b}\label{disc-eq:ddbar1} \\
d^U_{ab};\bar d^U_{cb} &\approx S^U;R^{c\mapsto a}\label{disc-eq:ddbar2} \\
d^U_{ab};\bar d^U_{ad} &\approx S^U;R^{d\mapsto b}\label{disc-eq:ddbar3} \\
d^U_D;\bar d^U_D &\lesssim S^U\label{disc-eq:ddbar4} \\
\bar d^U_D;d^U_D &\lesssim S^U;S^D\label{disc-eq:ddbar4-2} \\
E^{ua};\bar E^{ub} &\approx S^u;R^{a\mapsto z};R^{b\mapsto a};R^{z\mapsto b}\label{disc-eq:eebar} \\
\bar d^{uv};d^{wz} &\approx d^{wz};\bar d^{uv}\label{disc-eq:comm2}
\end{align}
\end{minipage}}%
\end{minipage}%
\begin{minipage}{0.07\textwidth}
\hspace{0.1cm}
\end{minipage}%
\begin{minipage}{0.4\textwidth}
\fbox{
\begin{minipage}{\textwidth}
\begin{align}
S^u;S^v &\simeq S^v;S^u\label{disc-eq:scomm} \\
S^u;S^u &\simeq S^u\label{disc-eq:sidem} \\
S^u;d^U_D &\lesssim d^U_D;S^u\label{disc-eq:sd1} \\
d^{U[v]}_D;S^v &\simeq d^{U[v]}_D\label{disc-eq:sd2} \\
C^{uv}_{ab} &\simeq C^{vu}_{ba}\label{disc-eq:cs}
\end{align}
\end{minipage}}\vspace{0.2cm}
\fbox{
\begin{minipage}{\textwidth}
\begin{align}
d^U_D;d^{U'}_{D'} &\simeq d^{U'}_{D'};d^U_D\label{disc-eq:comm1} \\
C^{uv}_{ab};I^{wz} &\simeq I^{wz};C^{uv}_{ab}\label{disc-eq:comm3} \\
E^u_{ab};I^{wz} &\lesssim I^{wz};E^u_{ab}\label{disc-eq:comm4} \\
E^{uv};I^{wz} &\lesssim I^{wz};E^{uv}\label{disc-eq:comm5} \\
\bar E^{uv};I^{wz} &\lesssim I^{wz};\bar E^{uv}\label{disc-eq:comm6} \\
\bar E^u_{ab};I^{wz} &\lesssim I^{wz};\bar E^u_{ab}\label{disc-eq:comm7} \\
\bar C^{uv}_{ab};I^{wz} &\lesssim I^{wz};\bar C^{uv}_{ab}\label{disc-eq:comm8} \\
E^u_{ab};C^{wz}_{cd} &\simeq C^{wz}_{cd};E^u_{ab}\label{disc-eq:comm9} \\
E^{uv};C^{wz}_{cd} &\lesssim C^{wz}_{cd};E^{uv}\label{disc-eq:comm10} \\
\bar E^{uv};C^{wz}_{cd} &\simeq C^{wz}_{cd};\bar E^{uv}\label{disc-eq:comm11} \\
E^{uv};E^w_{cd} &\lesssim E^w_{cd};E^{uv}\label{disc-eq:comm12} \\
\bar E^{uv};E^w_{cd} &\simeq E^w_{cd};\bar E^{uv}\label{disc-eq:comm13}
\end{align}
\end{minipage}}
\end{minipage}
\caption{The equivalence relation $\approx$ inducing the identities in the disconnection category. Here $d$ and $h$ range over $\{E,C,I\}$, while $S^U$ stands for the sequence $S^u;S^w$ if $U=uw$. Given vertex names $a,b\in\VS$, the notation $D[a]$ means $a$ occurs in $D$, and $D[b/a]$ means the occurrence of $a$ in $D$ is replaced with $b$. Note that we use the shorthand relations $\lesssim$ and $\simeq$: these are strictly speaking not part of the definition, but we use them to provide the extra information of when well-typedness of one side of an identity implies well-typedness of the other.\label{fig:disc-axioms}}
\end{figure}

Note that the assignment $\overline{()}:\Disc\rightarrow\Disc$ is functorial, thus making $\Disc$ a dagger category~\cite{selinger-dagger,heunen-vicary-book}.

\begin{proposition}\label{prop:disc-ids}
The following identities are derivable in $\Disc$:
\begin{align}
d^U_{D[a]};S^a &\simeq d^U_{D[a]},\label{prop:disc-sabsorb} \\
\bar d^U_{ab};d^U_{cd} &\approx S^U;R^{a\mapsto j};R^{b\mapsto d};R^{j\mapsto c},\label{prop:disc-ids0} \\
R^{z\mapsto c};R^{w\mapsto d};d^U_{ab} &\approx R^{z\mapsto a};R^{w\mapsto b};d^U_{cd},\label{prop:disc-ids1} \\
R^{z\mapsto c};d^U_{ab} &\approx R^{z\mapsto a};d^U_{cb},\label{prop:disc-ids2} \\
R^{w\mapsto d};d^U_{ab} &\approx R^{w\mapsto b};d^U_{ad},\label{prop:disc-ids3} \\
d^U_{ab};d^U_{cd} &\simeq d^U_{ad};d^U_{cb}.\label{disc-eq:ddindex}
\end{align}
\end{proposition}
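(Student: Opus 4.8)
Each of the six clauses is an equation between \emph{untyped} terms (decorated with well-typedness information), and by Definition~\ref{def:term-id} being derivable in $\Disc$ amounts to the underlying $\approx$-equation holding. So the plan is to derive all of them purely equationally from the relation $\approx$ generated by the boxed axioms of Figure~\ref{fig:disc-axioms}, treating every axiom line as an $\approx$-equation (in particular $\mathtt t\lesssim\mathtt s$ or $\mathtt t\simeq\mathtt s$ yields $\mathtt t\approx\mathtt s$) and recovering the $\lesssim$/$\simeq$ annotations afterwards by the well-typedness bookkeeping described in the caption of that figure. I use throughout that $\approx$ is a congruence for $;$, so equals may be substituted inside a term. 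A convenient order is \eqref{prop:disc-sabsorb}, then \eqref{prop:disc-ids2} and \eqref{prop:disc-ids3}, then \eqref{prop:disc-ids1}, then \eqref{prop:disc-ids0}, and finally \eqref{disc-eq:ddindex}; the later clauses use the earlier ones.

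For \eqref{prop:disc-sabsorb}, rewrite $S^a$ as $R^{a\mapsto a}$ by \eqref{disc-eq:refl} and absorb it into the index of the disconnection by \eqref{disc-eq:rd3} with $v=u=a$, giving $d^U_{D[a]};S^a\simeq d^U_{D[a]};R^{a\mapsto a}\simeq d^U_{D[a]}$; the two well-typedness implications hidden in $\simeq$ are clear, since $a\in D$ means $a$ lies in the codomain graph of $d^U_{D[a]}$. For \eqref{prop:disc-ids3}, slide $R^{w\mapsto d}$ through the disconnection by \eqref{disc-eq:rd1} and, on the other side, write $d^U_{ad}$ as $d^U_{ab};R^{b\mapsto d}$ by \eqref{disc-eq:rd3}, so the claim reduces to $d^U_{ab};R^{w\mapsto d}\approx d^U_{ab};R^{b\mapsto d};R^{w\mapsto b}$. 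Now \eqref{disc-eq:rsymm} rewrites $R^{b\mapsto d};R^{w\mapsto b}$ as $S^b;R^{w\mapsto d}$, and \eqref{prop:disc-sabsorb} swallows the freshly exposed $S^b$ because $b$ occurs in the index $ab$; this closes the case. Identity \eqref{prop:disc-ids2} is derived identically, now absorbing $S^a$. Identity \eqref{prop:disc-ids1} then follows by applying \eqref{prop:disc-ids3} to the suffix $R^{w\mapsto d};d^U_{ab}$, commuting the two renamings with \eqref{disc-eq:rcomm}, applying \eqref{prop:disc-ids2} (with $d$ in the role of $b$) to the resulting suffix $R^{z\mapsto c};d^U_{ad}$, and commuting back.

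For \eqref{prop:disc-ids0}, use \eqref{disc-eq:rd3} twice to rewrite $d^U_{cd}$ as $d^U_{ab};R^{a\mapsto c};R^{b\mapsto d}$, so that $\bar d^U_{ab};d^U_{cd}\approx(\bar d^U_{ab};d^U_{ab});R^{a\mapsto c};R^{b\mapsto d}\approx S^U;S^a;S^b;R^{a\mapsto c};R^{b\mapsto d}$ by \eqref{disc-eq:ddbar4-2}. Sliding $S^b$ past $R^{a\mapsto c}$ with \eqref{disc-eq:sr1} and cancelling each $S;R$ pair with \eqref{disc-eq:sr2} gives $S^U;R^{a\mapsto c};R^{b\mapsto d}$, and splitting $R^{a\mapsto c}$ as $R^{a\mapsto j};R^{j\mapsto c}$ via \eqref{disc-eq:trans} (with $j$ fresh) and reordering with \eqref{disc-eq:rcomm} yields exactly the stated right-hand side. (Alternatively, \eqref{prop:disc-ids0} can be read off from \eqref{disc-eq:rd4} instantiated at $U'=U$, $h=d$, combined with \eqref{disc-eq:ddbar1} and \eqref{disc-eq:trans}.)

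The step I expect to be hardest is \eqref{disc-eq:ddindex}, which transposes the two freshly created vertices of a pair of disconnections at the same site; I would prove it by an ``insert and cancel'' argument. First, $d^U_{ab};d^U_{cd};\bar d^U_{cb}\simeq d^U_{ad}$: collapse $d^U_{cd};\bar d^U_{cb}$ to $S^U;R^{b\mapsto d}$ by \eqref{disc-eq:ddbar3}, absorb $S^U$ into the preceding $d^U_{ab}$ by \eqref{disc-eq:sd2}, and absorb $R^{b\mapsto d}$ into its index by \eqref{disc-eq:rd3}. Post-composing with $d^U_{cb}$ gives $d^U_{ab};d^U_{cd};\bar d^U_{cb};d^U_{cb}\simeq d^U_{ad};d^U_{cb}$. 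Second, simplify the same left-hand side differently: $\bar d^U_{cb};d^U_{cb}\approx S^U;S^c;S^b$ by \eqref{disc-eq:ddbar4-2}, after which \eqref{disc-eq:sd2} and \eqref{prop:disc-sabsorb} swallow $S^U$ and $S^c$ into $d^U_{cd}$, \eqref{disc-eq:sd1} slides the remaining $S^b$ leftwards past $d^U_{cd}$, and \eqref{prop:disc-sabsorb} swallows it into $d^U_{ab}$, leaving $d^U_{ab};d^U_{cd}$. Chaining the two computations gives \eqref{disc-eq:ddindex}. The delicate points are (i) checking that $d^U_{ab};d^U_{cd};\bar d^U_{cb}$ is well-typed, which holds precisely because the vertex $b$ created by the first disconnection and the vertex $c$ created by the second already sit in the configuration that $\bar d^U_{cb}$ expects, and (ii) getting the index conventions of \eqref{disc-eq:ddbar3} right; after that, the $\simeq$/$\lesssim$ decorations on all six clauses follow from the same well-typedness tracking used for \eqref{prop:disc-sabsorb}.
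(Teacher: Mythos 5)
Your overall strategy --- deriving each clause purely equationally from the $\approx$-axioms and recovering the $\lesssim$/$\simeq$ annotations afterwards --- matches the paper's, but your route through the individual clauses is genuinely different. The paper first proves \eqref{prop:disc-ids1} by evaluating the sandwich $d^{uv}_{cd};\bar d^{uv}_{zw};d^{uv}_{ab}$ in two ways (via \eqref{disc-eq:ddbar1} on the left pair and via \eqref{prop:disc-ids0} on the right pair) and then obtains \eqref{prop:disc-ids2} and \eqref{prop:disc-ids3} as specialisations, whereas you build \eqref{prop:disc-ids1} out of \eqref{prop:disc-ids2} and \eqref{prop:disc-ids3}; likewise the paper gets \eqref{disc-eq:ddindex} in three lines from \eqref{disc-eq:rd3} and \eqref{prop:disc-ids3}, while your insert-and-cancel argument via $\bar d^U_{cb};d^U_{cb}$ is longer but checks out. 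Your derivations of \eqref{prop:disc-sabsorb}, \eqref{prop:disc-ids0} and \eqref{disc-eq:ddindex} are correct.

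There is, however, a genuine flaw in your derivation of \eqref{prop:disc-ids3} (hence also \eqref{prop:disc-ids2} and, in your dependency order, \eqref{prop:disc-ids1}). After rewriting the right-hand side to $R^{w\mapsto b};d^U_{ab};R^{b\mapsto d}$ you move $R^{w\mapsto b}$ to the far right to obtain $d^U_{ab};R^{b\mapsto d};R^{w\mapsto b}$, and neither commutation this requires is an instance of the axioms: \eqref{disc-eq:rd1} cannot push $R^{w\mapsto b}$ past $d^U_{ab}$ when the target name $b$ lies in the subscript $ab$ (note that $R^{w\mapsto b};d^U_{ab}$ is never well-typed, and compare the explicit case analysis on name clashes in Lemma~\ref{lma:C-commutes}, which shows \eqref{disc-eq:rd1} is not meant to be unconditional), and \eqref{disc-eq:rcomm} cannot swap $R^{w\mapsto b}$ with $R^{b\mapsto d}$ since with a shared name the two orders denote different partial functions, so the schema must be read with distinct instantiations. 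The repair is cheap and uses exactly your ingredients, just applied to the other side: leave $R^{w\mapsto b}$ where it is, and compute
$$R^{w\mapsto d};d^U_{ab}\;\approx\;d^U_{ab};R^{w\mapsto d}\;\approx\;d^U_{ad};R^{d\mapsto b};R^{w\mapsto d}\;\approx\;d^U_{ad};S^d;R^{w\mapsto b}\;\approx\;d^U_{ad};R^{w\mapsto b}\;\approx\;R^{w\mapsto b};d^U_{ad}$$
by \eqref{disc-eq:rd1}, \eqref{disc-eq:rd3}, \eqref{disc-eq:rsymm} and \eqref{prop:disc-sabsorb}, where every commutation now involves only names outside the relevant subscripts. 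With that substitution (and its mirror image for \eqref{prop:disc-ids2}) the rest of your argument goes through.
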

\begin{proof}
We compute~\eqref{prop:disc-sabsorb} by applying equations~\eqref{disc-eq:refl} and~\eqref{disc-eq:rd3}:
$$d^U_{D[a]};S^a \simeq d^U_{D[a]};R^{a\mapsto a} \simeq d^U_{D[a]}.$$
Equality~\eqref{prop:disc-ids0} is derived as follows:
\begin{align*}
\bar d^{uv}_{ab};d^{uv}_{cd} &\approx d^{uv}_{ij};\bar d^{uv}_{ab};R^{i\mapsto c};R^{j\mapsto d}\tag{by~\eqref{disc-eq:rd4}} \\
&\approx S^u;S^v;R^{a\mapsto i};R^{b\mapsto j};R^{i\mapsto c};R^{j\mapsto d}\tag{by~\eqref{disc-eq:ddbar1}} \\
&\approx S^u;S^v;R^{a\mapsto j};R^{b\mapsto d};R^{j\mapsto c}.\tag{by~\eqref{disc-eq:rcomm} and~\eqref{disc-eq:trans}}
\end{align*}
For~\eqref{prop:disc-ids1}, we first use~\eqref{prop:disc-ids0} to get
$$d^{uv}_{cd};\bar d^{uv}_{zw};d^{uv}_{ab}\simeq d^{uv}_{cd};S^u;S^v;R^{z\mapsto a};R^{w\mapsto b},$$
where on the right-hand side we used~\eqref{disc-eq:rcomm} and~\eqref{disc-eq:trans}. Observing that~\eqref{disc-eq:ddbar1} applies on the left-hand side, and simplifying using~\eqref{disc-eq:sr1},~\eqref{disc-eq:sd2} and~\eqref{disc-eq:rd1}, we obtain precisely~\eqref{prop:disc-ids1}. Identities~\eqref{prop:disc-ids2} and~\eqref{prop:disc-ids3} are derived similarly, by letting $w=d$ and $z=c$, respectively.

Identity~\eqref{disc-eq:ddindex} is derived as follows:
\begin{align*}
d^U_{ab};d^U_{cd} &\simeq d^U_{ad};R^{d\mapsto b};d^U_{cd}\tag{by~\eqref{disc-eq:rd3}} \\
&\simeq d^U_{ad};R^{d\mapsto d};d^U_{cb}\tag{by~\eqref{prop:disc-ids3}} \\
&\simeq d^U_{ad};d^U_{cb}.\tag{by~\eqref{disc-eq:rd3}}
\end{align*}
\end{proof}

\subsection{Normal Form}\label{subsec:normal-form}

In this subsection, we define a normal form (Definition~\ref{def:normal-form}), and show that every term is equal to a term in a normal form under the equalities of $\Disc$ (Proposition~\ref{prop:normal-form-existence}). We also identify a class of syntactic manipulations of terms in a normal form (Definition~\ref{def:nf-equivalence}) that both keep the normal form and preserve equality (Lemma~\ref{lma:nf-equivalence}). These results are used in the next section to prove completeness.

\begin{definition}[$ICE$-form]\label{def:ICE-form}
We say that a term is in an {\em $ICE$-form} if it is either an identity term, or if it has the following structure:
$$\mathtt I;\mathtt C;\mathtt E^{<0};\mathtt E^{\geq 0};\bar{\mathtt E}^{\geq 0};\bar{\mathtt E}^{<0};\bar{\mathtt C};\bar{\mathtt I};\mathtt R;\mathtt S,$$
where every letter is a sequence of generating terms of the corresponding kind.
\end{definition}

\begin{proposition}\label{prop:ICE-form}
Any term is equal to a term in an $ICE$-form.
\end{proposition}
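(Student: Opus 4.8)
The proof is by a term-rewriting (normalisation) argument. By Definition~\ref{def:terms} an arbitrary term is a composite of generators $\id$, $S^u$, $R^{u\mapsto v}$, the four disconnections and the four connections; after deleting the $\id$'s one views it as a word in the remaining generators and rewrites it, using the identities of Figure~\ref{fig:disc-axioms} and Proposition~\ref{prop:disc-ids}, until it has the shape of an $ICE$-form. Two preliminary remarks. First, since $\Disc$ is a dagger category (the assignment $\overline{()}$ is functorial), every derivable identity has a derivable dagger; we may therefore use every axiom of Figure~\ref{fig:disc-axioms} and every identity of Proposition~\ref{prop:disc-ids} in daggered form too, and it is this that turns the axioms governing disconnections into the axioms governing connections and so accounts for the ``mirrored'' order $\bar{\mathtt E}^{\geq 0};\bar{\mathtt E}^{<0};\bar{\mathtt C};\bar{\mathtt I}$ of the connection block. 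Second, the $\lesssim$ and $\simeq$ annotations in Figure~\ref{fig:disc-axioms} and Proposition~\ref{prop:disc-ids} tell us in which direction an identity preserves well-typedness; one uses each rewrite below in that direction (for the few rewrites that come only with $\approx$, such as a collapse via \eqref{prop:disc-ids0}, one checks directly that the replacement is well-typed of the same type in the configuration at hand), so the rewriting never leaves $\Disc(A,B)$.

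The rewriting proceeds in three phases. \emph{Phase~1: push every $S$-generator to the far right.} An $S^u$ immediately to the left of another generator is commuted past it by \eqref{disc-eq:sr1} (past $R$), \eqref{disc-eq:sd1} (past a disconnection), \eqref{disc-eq:scomm} (past an $S$) or the dagger of \eqref{disc-eq:sd1} (past a connection); in the degenerate configurations (e.g.\ $u$ a vertex acted on or removed by the neighbouring rule, or the source of the neighbouring rename) the $S^u$ is deleted outright via \eqref{disc-eq:sr2}, \eqref{disc-eq:sd2}, \eqref{disc-eq:sidem}, \eqref{prop:disc-sabsorb} or the dagger of \eqref{prop:disc-sabsorb}. \emph{Phase~2: push every $R$-generator to sit immediately before the $\mathtt S$-block.} A renaming is merged with an adjacent renaming by \eqref{disc-eq:trans}, \eqref{disc-eq:rcomm}, \eqref{disc-eq:rsymm}, \eqref{disc-eq:refl}; it is commuted past a disconnection or connection on disjoint names by \eqref{disc-eq:rd1} and its dagger; and when its target already occurs among the vertex lists of the neighbouring rule it is re-absorbed as a change of that rule's super/subscripts by \eqref{disc-eq:rd2}, \eqref{disc-eq:rd3}, \eqref{prop:disc-ids1}--\eqref{prop:disc-ids3} and daggers. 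After Phases~1--2 the term reads $w;\mathtt R;\mathtt S$ with $w$ a word in disconnections and connections.

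\emph{Phase~3: sort $w$.} First move every disconnection left of every connection. For an adjacent pair ``connection $\bar h$, then disconnection $d$'': unless $\bar h$ is exactly the connection that re-enables $d$ — which forces $\bar h$ and $d$ to be of the same kind acting on the same vertex set, once a $C$-superscript has been reordered via \eqref{disc-eq:cs} — the pair can be interchanged into ``$d$, then $\bar h$'' (with two fresh renamings in the general case), using \eqref{disc-eq:comm2}, \eqref{disc-eq:comm6}, \eqref{disc-eq:comm7}, \eqref{disc-eq:comm8}, \eqref{disc-eq:comm11} (and daggers) when the two rules act on disjoint data and \eqref{disc-eq:rd4} (and its dagger) otherwise; in the exceptional case the pair is \emph{collapsed} by \eqref{disc-eq:ddbar4-2} or \eqref{prop:disc-ids0}, which replaces it by a product of $S$- and $R$-generators and strictly reduces the number of disconnections and connections (one then returns to Phases~1--2 to absorb the new $S$'s and $R$'s). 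Once $w$ has been brought to the form (word of disconnections);(word of connections), its disconnection prefix is sorted into $\mathtt I;\mathtt C;\mathtt E^{<0};\mathtt E^{\geq 0}$ by \eqref{disc-eq:comm1}, \eqref{disc-eq:comm3}, \eqref{disc-eq:comm4}, \eqref{disc-eq:comm5}, \eqref{disc-eq:comm9}, \eqref{disc-eq:comm10}, \eqref{disc-eq:comm12}, and dually its connection suffix into $\bar{\mathtt E}^{\geq 0};\bar{\mathtt E}^{<0};\bar{\mathtt C};\bar{\mathtt I}$ by the daggers of the same equations; a last pass of Phases~1--2 absorbs the renamings and touches produced during the sort, leaving an $ICE$-form.

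Termination is the substantive point. Phases~1 and~2 in isolation terminate under the lexicographic measure $(\text{number of generators},\ \text{number of adjacent out-of-order pairs})$: every commutation lowers the inversion count and every absorption lowers the generator count. For the procedure as a whole I would use a larger lexicographic measure whose outermost component is the number of disconnection- and connection-generators: this is non-increasing under Phases~1--2 and under every commutation or interchange of Phase~3, and it strictly decreases at each collapse, while between two successive collapses only finitely many other steps occur (controlled by the number of block-crossing inversions, then within-block type inversions, then $R$- and $S$-inversions). The main obstacle I anticipate is not any individual rewrite — each is a finite case analysis on overlaps of vertex names — but verifying the \emph{exhaustiveness} of the Phase~3 dichotomy (that ``$\bar h$ re-enables $d$'' is the only obstruction to interchanging a connection past a disconnection, and is always resolved by \eqref{disc-eq:ddbar4-2} or \eqref{prop:disc-ids0}), together with assembling all of this into a provably terminating strategy.
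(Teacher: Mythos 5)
Your strategy is in substance the same as the paper's: normalise by commuting generators into the prescribed order using the equations of Figure~\ref{fig:disc-axioms}, collapsing a connection--disconnection pair via~\eqref{disc-eq:ddbar4-2} or~\eqref{prop:disc-ids0} exactly when they share a superscript vertex list (up to~\eqref{disc-eq:cs}), and interchanging them via~\eqref{disc-eq:rd4}, the commutation axioms, or their daggers otherwise. Your Phase~3 dichotomy and the equations you cite for each case match the base-case tables of Lemmas~\ref{lma:I-commutes}--\ref{lma:S-commutes} in Appendix~\ref{sec:appendix} almost line for line. The difference is organisational: the paper runs a chain of structural inductions, each of which commutes one kind of generator ($I$, then $C$, then $E^{<0}$, and so on) leftwards past everything else while maintaining the invariant that \emph{the number of generators of that kind is unchanged}, so that termination of the whole procedure is just the finite composition of Corollaries~\ref{cor:I-form}--\ref{cor:ICE-form}; you instead propose one global multi-phase rewriting procedure governed by a single lexicographic termination measure.

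That measure is where the genuine gap lies, and it is more than bookkeeping. Several rewrites you need are neither ``commutations'' nor ``absorptions'': pushing $R^{w\mapsto z}$ rightwards past a disconnection whose fresh subscript collides with $w$ gives, for instance, $R^{a\mapsto z};C^{uv}_{ab}\equiv C^{uv}_{kb};R^{a\mapsto z};R^{k\mapsto a}$, turning one renaming into two and strictly increasing the generator count, so your Phase~1--2 measure $(\text{generators},\text{inversions})$ is not monotone as stated (and the inversion count alone fails too, since each of the two new renamings may still have many generators to its right). Likewise every interchange via~\eqref{disc-eq:rd4} spawns two fresh renamings that must then be pushed through the remainder of the connection block, so the inner components of your global measure need genuine care. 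None of this is fatal --- the result is true and your strategy can be repaired, most easily by restructuring each phase as a structural induction on the term that tracks only the number of generators of the kind currently being moved, which is exactly how the paper discharges termination --- but as written the termination argument, which you yourself identify as the substantive point, does not go through.
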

\begin{proof}[Proof sketch]
The proof proceeds by repeated inductions: one first shows that all $I$-terms can always be commuted to the left, then that all $C$-terms can be commuted to the left of anything that is not an $I$-term, and so on. We give the full details of the induction in the sequence of Lemmas and Corollaries~\ref{lma:I-commutes}-\ref{cor:ICE-form} in the Appendix~\ref{sec:appendix}.
\end{proof}

\begin{definition}[Renaming form]\label{def:renaming-form}
A well-typed sequence of renaming terms $\mathtt R:H\rightarrow G$ is in a {\em renaming form} if there are sets of vertex names $A=\{a_1,\dots,a_n\}$, $B=\{b_1,\dots,b_n\}$, $C=\{c_1,\dots,c_m\}$ and $D=\{d_1,\dots,d_m\}$ such that
\begin{enumerate}[label=(\arabic*)]
\item $\mathtt R$ can be split into two sequences $\mathtt R = \mathtt A;\mathtt B$ with
$$\mathtt A = R^{a_1\mapsto b_1};\dots;R^{a_n\mapsto b_n}\quad \text{and}\quad \mathtt B = R^{c_1\mapsto d_1};\dots;R^{c_m\mapsto d_m},$$
where $\mathtt B$ can be possibly empty,\label{renaming-form-1}
\item $A\cap B=\eset$,\label{renaming-form-2}
\item $C\sse B$,\label{renaming-form-3}
\item $D\sse A$,\label{renaming-form-4}
\item if $c_i\in C$ and $b_j\in B$ is the unique element such that $b_j=c_i$, then $\Nbr_H(a_j)\neq\Nbr_H(d_i)$.\label{renaming-form-5}
\end{enumerate}
\end{definition}

\begin{lemma}\label{lma:renaming-form}
Any well-typed sequence of renaming terms is equal to a term $\mathtt R;\mathtt S$, where $\mathtt R = \mathtt A;\mathtt B$ is in a renaming form and $\mathtt S$ is a sequence of $S$-terms.
\end{lemma}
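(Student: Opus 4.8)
The plan is to induct on the length of the renaming sequence $\mathtt R:H\to G$, using the equations of Figure~\ref{fig:disc-axioms} that govern $R$- and $S$-terms (namely \eqref{disc-eq:trans}, \eqref{disc-eq:rcomm}, \eqref{disc-eq:refl}, \eqref{disc-eq:rsymm}, \eqref{disc-eq:sr1}, \eqref{disc-eq:sr2}, \eqref{disc-eq:scomm}, \eqref{disc-eq:sidem}), to push all $S$-terms to the right and then massage the remaining pure renaming block into the shape described in Definition~\ref{def:renaming-form}. The base case (empty sequence, or a single $R^{u\mapsto v}$) is immediate: a single $R^{u\mapsto v}$ with $u\neq v$ is already in renaming form with $\mathtt A=R^{u\mapsto v}$ and $\mathtt B$ empty, while $R^{u\mapsto u}\simeq S^u$ by \eqref{disc-eq:refl}, which is of the claimed shape $\mathtt R;\mathtt S$ with $\mathtt R$ the empty (identity) renaming.

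For the inductive step I would take a well-typed $\mathtt R = \mathtt R';R^{u\mapsto v}$, apply the induction hypothesis to $\mathtt R'$ to rewrite it as $\mathtt A';\mathtt B';\mathtt S'$ with $\mathtt A';\mathtt B'$ in renaming form, and then analyse how appending $R^{u\mapsto v}$ interacts with this. Using \eqref{disc-eq:sr1} and \eqref{disc-eq:sr2} I can commute $R^{u\mapsto v}$ past the trailing $\mathtt S'$ block (turning $S^v$ into $S^u$ where needed, or absorbing it), so it suffices to show that appending a single renaming to a sequence in renaming form yields, up to $\simeq$, another renaming-form sequence followed by some $S$-terms. Here there are a few cases depending on whether the source $u$ of the new renaming coincides with one of the targets $b_j$ already produced by $\mathtt A'$ or one of the targets $d_i$ produced by $\mathtt B'$, or is fresh: if it is fresh it can be commuted leftward by \eqref{disc-eq:rcomm} into the appropriate position in $\mathtt A'$ or appended to $\mathtt B'$; if it continues a chain $a_j\mapsto b_j\mapsto v$ I use transitivity \eqref{disc-eq:trans} to contract it; if it would create a renaming $c\mapsto c$ I use \eqref{disc-eq:refl} to replace it by an $S$-term and push it right; and the $\mathtt B$-block together with condition \ref{renaming-form-5} is exactly what records the renamings that "undo" part of $\mathtt A$ but must be kept because they genuinely change the neighbourhood structure (otherwise \eqref{disc-eq:rsymm} lets us cancel them against an $S$-term). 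The bookkeeping that $A\cap B=\eset$, $C\subseteq B$, $D\subseteq A$ is maintained is routine once the commutation moves are set up.

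The main obstacle I anticipate is the case analysis around conditions \ref{renaming-form-3}--\ref{renaming-form-5}: one must argue that any renaming whose source lies in the "new names" set $B$ and whose target lies in the "old names" set $A$ either can be eliminated (when $\Nbr_H(a_j)=\Nbr_H(d_i)$, via \eqref{disc-eq:rsymm} applied in the well-typed setting, producing an $S$-term) or must be retained in $\mathtt B$, and that after all eliminations the residual sequence still factors as $\mathtt A;\mathtt B$ with $\mathtt A$ consisting only of "genuinely fresh" targets. Keeping careful track of well-typedness throughout — i.e.\ that each intermediate term is still a legitimately typed sequence $H\to\,\cdot$, which is what lets us invoke the neighbourhood condition in \eqref{disc-eq:rsymm} and in \ref{renaming-form-5} — is where the argument needs the most care; the underlying rewriting moves themselves are all direct applications of the listed axioms.
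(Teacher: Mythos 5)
Your proposal is correct and follows essentially the same route as the paper's proof: induction on the length of the sequence, commuting the appended renaming past the trailing $S$-block via~\eqref{disc-eq:sr1}--\eqref{disc-eq:sr2}, then a case analysis on whether its source and target already occur among the $A$-, $B$-, $C$-, $D$-names, contracting chains with~\eqref{disc-eq:trans}, absorbing $R^{a\mapsto a}$ into $S^a$ via~\eqref{disc-eq:refl}, and using~\eqref{disc-eq:rsymm} together with the neighbourhood condition~\ref{renaming-form-5} to decide whether a renaming landing back in $A$ is cancelled or retained in the $\mathtt B$-block. This matches the paper's argument, including the role of the dummy names in $C$ and $D$.
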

\begin{proof}
The idea is that if a vertex $a$ is to be renamed to $b$, then $a\in A$, and we have two cases: (1) $b$ does not already occur in the original chemical graph, and (2) $b$ does occur in the original graph. If (1), then $R^{a\mapsto b}\in\mathtt A$ and $b\in B\setminus C$. If (2), then we first rename $a$ using some ``dummy'' name $c$, so that $R^{a\mapsto c}\in\mathtt A$, $R^{c\mapsto b}\in\mathtt B$, $c\in C$ and $b\in D$. Note that condition~\ref{renaming-form-4} of the renaming form is satisfied, as $b$ must itself be renamed in order for the vertex name become free. Any term of the form $R^{a\mapsto a}$ is replaced by $S^a$.

The formal proof proceeds by induction on the length of the original sequence.

The term $R^{a\mapsto b}$ is equal to $S^a$ if $a=b$, or is already in a renaming form by taking $A=\{a\}$, $B=\{b\}$ and $C=D=\eset$ if $a\neq b$.

Suppose that the statement of the lemma holds for all sequences of renaming terms of length at most $n$. Let $\mathtt R$ be such a sequence of length $n$ such that $\mathtt R;R^{a\mapsto b}$ is defined. By the inductive hypothesis, we may assume that $\mathtt R=\mathtt A;\mathtt B;\mathtt S$ where $\mathtt A;\mathtt B$ is a renaming form with vertex name sets $A$, $B$, $C$ and $D$ as in Definition~\ref{def:renaming-form}. Using the equations for $S$- and $R$-terms, we may commute $R^{a\mapsto b}$ past $\mathtt S$, possibly changing the vertex name $a$, so that it suffices to show that the lemma holds for $\mathtt A;\mathtt B;R^{a\mapsto b}$. If $a=b$, the sequence is equal to $\mathtt A;\mathtt B;S^a$ and we are done; hence assume that $a\neq b$. Note that it follows that $a\notin A\setminus D$ and $a\notin C$, as every vertex name in $A\setminus D$ or $C$ is removed, without being reintroduced. Similarly, we have that $b\notin D$ and $b\notin B\setminus C$. Moreover, if $b\in C$, rename the occurrence of $b$ in both $\mathtt A$ and $\mathtt B$ with a fresh vertex name, updating the sets $C$ and $B$ accordingly. Thus we may assume that $b\notin B$. The remaining cases are as follows.

\noindent\textbf{Case 1:} $a\notin A\cup B$.

\textbf{Subcase 1.1:} $b\notin A$. We rewrite the term to $\mathtt A;R^{a\mapsto b};\mathtt B$ and update the sets $A\mapsto A\cup\{a\}$ and $B\mapsto B\cup\{b\}$.

\textbf{Subcase 1.2:} $b\in A$. It follows that $b\in A\setminus D$, so that $R^{b\mapsto z}\in\mathtt A$ and $a,b$ do not appear in $\mathtt B$. If $\Nbr(a)\neq\Nbr(b)$, let $c$ be a fresh vertex name. We rewrite the term to $\mathtt A;R^{a\mapsto c};R^{c\mapsto b};\mathtt B$ and update the sets $A\mapsto A\cup\{a\}$, $B\mapsto B\cup\{c\}$, $C\mapsto C\cup\{c\}$ and $D\mapsto D\cup\{b\}$. If $\Nbr(a)=\Nbr(b)$, we use equation~\eqref{disc-eq:rsymm} to rewrite $R^{b\mapsto z};R^{a\mapsto b}$ to $S^b;R^{a\mapsto z}$, which reduces the number of $R$-terms to $n$, so that the inductive hypothesis applies.

\noindent\textbf{Case 2:} $a\in A$. It follows that $a\in D$. Now $R^{a\mapsto b}$ commutes with all other terms in $\mathtt B$ except for the unique term $R^{c_i\mapsto d_i}$ where $d_i=a$. But $R^{c_i\mapsto a};R^{a\mapsto b}\equiv R^{c_i\mapsto b}$, which reduces the length of the sequence to $n$, so it is has a renaming form by the inductive hypothesis.

\noindent\textbf{Case 3:} $a\in B$. It follows that $a\in B\setminus C$.

\textbf{Subcase 3.1:} $b\notin A$. Now $R^{a\mapsto b}$ commutes with all the terms in $\mathtt B$, and with all other terms in $\mathtt A$ except for the unique term $R^{a_i\mapsto b_i}$ where $b_i=a$. But $R^{a_i\mapsto a};R^{a\mapsto b}\equiv R^{a_i\mapsto b}$, which reduces the length of the sequence to $n$, so it is has a renaming form by the inductive hypothesis.

\textbf{Subcase 3.2:} $b\in A$. It follows that $b\in A\setminus D$. Now $R^{a\mapsto b}$ commutes with all the terms in $\mathtt B$, and with all other terms in $\mathtt A$ except for the terms $R^{a_i\mapsto a}$ and $R^{b\mapsto b_j}$. There are two options: (1) $a_i=b$ and $b_j=a$, so that these are the same term, (2) the terms are distinct, in which case they commute. In both cases, we use the substitution $R^{a_i\mapsto a};R^{a\mapsto b}\equiv R^{a_i\mapsto b}$ to reduce the length of the sequence, so that the inductive hypothesis applies.

This completes the induction.
\end{proof}

A term is said to be in an $ICER$-form if it is in an $ICE$-form whose sequence of renaming terms is in a renaming form (or is empty).

\begin{definition}[Normal form]\label{def:normal-form}
Let
$$\mathtt t = \mathtt I;\mathtt C;\mathtt E^{<0};\mathtt E^{\geq 0};\bar{\mathtt E}^{\geq 0};\bar{\mathtt E}^{<0};\bar{\mathtt C};\bar{\mathtt I};\mathtt A;\mathtt B;\mathtt S$$
be a term in an $ICER$-form. Let us denote the sets of vertex names in the renaming form by $A_{\mathtt t}$, $B_{\mathtt t}$, $C_{\mathtt t}$ and $D_{\mathtt t}$. Let us additionally define the following sets of vertex names occurring in $\mathtt t$:
\begin{itemize}
\item $D^{add}_{\mathtt t}\coloneqq\left\{a\in\VS : d^U_{D[a]}\in\mathtt t\right\}$ -- the vertex names appearing as subscripts in the disconnections,
\item $D^{remove}_{\mathtt t}\coloneqq\left\{a\in\VS : \bar d^U_{D[a]}\in\mathtt t\right\}$ -- the vertex names appearing as subscripts in the connections,
\item $U_{\mathtt t}\coloneqq\left\{v\in\VS : d^{U[v]}_D\in\mathtt t\text{ or } \bar d^{U[v]}_D\in\mathtt t\right\}$ -- the vertex names appearing as superscripts of the (dis)connections,
\item $S_{\mathtt t}\coloneqq\left\{u\in\VS : S^u\in\mathtt t\right\}$ -- the vertex names appearing in the $S$-terms.
\end{itemize}
We say that a term $\mathtt t$ is in a {\em normal form} if it is in an $ICER$-form as above, and additionally the following conditions hold:
\begin{enumerate}[label=(\arabic*)]
\item for every $u\in S_{\mathtt t}$, the term $S^u$ occurs in $\mathtt t$ exactly once,\label{nf:S0}
\item $\left(U_{\mathtt t}\cup A_{\mathtt t}\cup B_{\mathtt t}\right)\cap S_{\mathtt t}=\eset$,\label{nf:S1}
\item $D^{add}_{\mathtt t}\setminus D^{remove}_{\mathtt t}\sse A_{\mathtt t}\setminus D_{\mathtt t}$,\label{nf:discrename}
\item $D^{add}_{\mathtt t}\cap B_{\mathtt t}=\eset$,\label{nf:dbdisjoint}
\item if a connection $\bar d^U_{D[a]}:A\rightarrow B$ and a renaming term $R^{z\mapsto a}$ both occur, then $A$ is not in the domain of $\bar d^U_{D[z/a]}$,\label{nf:connrename}
\item if $d\neq I$ and a disconnection $d^U_D$ occurs in $\mathtt t$, then the connections $\bar d^U_{F}$ and $\bar d^{U^r}_{F}$ do not occur in $\mathtt t$ for any $F$ (here $U^r$ denotes the reverse string),\label{nf:disconnection}
\item if the disconnection $E^{uv}$ occurs in $\mathtt t$, then for any vertex name $w\in\VS$, the connection $\bar E^{uw}$ does not occur in $\mathtt t$,\label{nf:electron}
\item if the disconnection $I^{uv}$ and the connection $\bar I^{uv}$ both occur in $\mathtt t$, then one of the terms $E^v_D$, $\bar E^v_D$, $E^{va}$ or $\bar E^{va}$ occurs in $\mathtt t$.\label{nf:ion}
\end{enumerate}
\end{definition}

\begin{proposition}\label{prop:normal-form-existence}
In $\Disc$, any term is equal to a term in normal form.
\end{proposition}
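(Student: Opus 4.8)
The plan is to reach the normal form in three passes. \emph{Pass 1} puts $\mathtt t$ into an $ICER$-form: by Proposition~\ref{prop:ICE-form} we may assume $\mathtt t$ is in an $ICE$-form (if it is an identity term we are done), say $\mathtt I;\mathtt C;\mathtt E^{<0};\mathtt E^{\geq 0};\bar{\mathtt E}^{\geq 0};\bar{\mathtt E}^{<0};\bar{\mathtt C};\bar{\mathtt I};\mathtt R_0;\mathtt S_0$; the block $\mathtt R_0$ is a well-typed sequence of renaming terms, so Lemma~\ref{lma:renaming-form} rewrites it as $\mathtt A;\mathtt B;\mathtt S_1$ with $\mathtt A;\mathtt B$ in renaming form, and replacing $\mathtt R_0;\mathtt S_0$ by $\mathtt A;\mathtt B;\mathtt S_1;\mathtt S_0$ (whose trailing block is a single sequence of $S$-terms) gives an $ICER$-form. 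Every later rewrite is chosen to preserve the $ICER$-shape, re-invoking Proposition~\ref{prop:ICE-form} and Lemma~\ref{lma:renaming-form} whenever a step temporarily disturbs it.

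\emph{Pass 2} establishes the structural conditions~\ref{nf:discrename}--\ref{nf:ion}, starting with the ``no cancelling pair'' conditions~\ref{nf:disconnection}, \ref{nf:electron}, \ref{nf:ion}. If a disconnection $d^U_D$ with $d\neq I$ and a connection $\bar d^U_F$ (or $\bar d^{U^r}_F$, using~\eqref{disc-eq:cs} when $d=C$) both occur, one commutes the disconnection rightwards past the intervening $E$- and $C$-blocks using~\eqref{disc-eq:comm1},~\eqref{disc-eq:comm9}--\eqref{disc-eq:comm11} and their dagger and symmetry instances until it is adjacent to the connection, then cancels the pair with one of~\eqref{disc-eq:ddbar1}--\eqref{disc-eq:ddbar4}; the fresh $R$- and $S$-terms this creates are pushed right and the renamings re-folded by Lemma~\ref{lma:renaming-form}. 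Conditions~\ref{nf:electron} and~\ref{nf:ion} are treated the same way, using~\eqref{disc-eq:eebar} with~\eqref{disc-eq:comm12},~\eqref{disc-eq:comm13} for the former, and~\eqref{disc-eq:ddbar4},~\eqref{disc-eq:ddbar4-2} with the equations commuting $I$ past every other generator for the latter; each cancellation strictly decreases the number of (dis)connection generators. Once~\ref{nf:disconnection}--\ref{nf:ion} hold, conditions~\ref{nf:discrename},~\ref{nf:dbdisjoint} are enforced by using~\eqref{disc-eq:rd3} together with~\eqref{prop:disc-ids1}--\eqref{prop:disc-ids3} to re-choose the names introduced by the surviving disconnections and slide the induced renamings into $\mathtt A$, and~\ref{nf:connrename} is enforced dually, by relocating or absorbing an offending renaming $R^{z\mapsto a}$ via the dagger of~\eqref{disc-eq:rd3}. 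As these last steps do not change the generator count, Pass 2 is organized as a loop over the currently violated conditions, with a lexicographic termination measure: first the number of (dis)connections, then a count of renaming-form defects.

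\emph{Pass 3} establishes the $S$-conditions~\ref{nf:S0} and~\ref{nf:S1}: repeated $S^u$ are collapsed by~\eqref{disc-eq:scomm} and~\eqref{disc-eq:sidem}, and each remaining $S^u$ with $u\in U_{\mathtt t}\cup A_{\mathtt t}\cup B_{\mathtt t}$ is slid next to an absorbing generator (using~\eqref{disc-eq:sr1},~\eqref{disc-eq:sd1} and the commutations of $S$ past (dis)connections) and absorbed with~\eqref{disc-eq:sd2} or its dagger,~\eqref{prop:disc-sabsorb}, or~\eqref{disc-eq:sr2}; since this creates no new (dis)connections or renamings, conditions~\ref{nf:discrename}--\ref{nf:ion} survive and the term is in normal form. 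I expect the main obstacle to be the bookkeeping of Pass 2: one has to exhibit a genuinely well-founded measure and check that repairing one condition never resurrects an already-satisfied one -- in particular that re-folding the renaming form after a cancellation, and the index changes forced by~\eqref{disc-eq:rd2}--\eqref{disc-eq:rd3}, never create a new cancellable (dis)connection pair. A pervasive minor subtlety is that several commutation equations are stated only as $\lesssim$: at each use one must check that the side one moves to is still well-typed, which holds because the generator being commuted acts on vertices disjoint from the one it passes.
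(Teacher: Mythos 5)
Your three-pass plan is essentially the paper's own proof: pass to an $ICER$-form via Proposition~\ref{prop:ICE-form} and Lemma~\ref{lma:renaming-form}, cancel disconnection/connection pairs by commuting them adjacent and applying~\eqref{disc-eq:ddbar1}--\eqref{disc-eq:ddbar4-2} and~\eqref{disc-eq:eebar}, then clean up the renaming and $S$-conditions by absorption; the termination and non-interference worries you raise are real but resolvable exactly as you suggest. There is, however, one concrete gap: your closing claim that each $\lesssim$-commutation is licensed ``because the generator being commuted acts on vertices disjoint from the one it passes'' is false at precisely the point where the paper has to do extra work. To cancel $C^{uv}_{ab}$ against $\bar C^{uv}_{cd}$ you must move $C^{uv}_{ab}$ \emph{rightwards} past the $\mathtt E^{\geq 0}$-block, but~\eqref{disc-eq:comm10} only gives $E^{wz};C^{uv}_{ab}\lesssim C^{uv}_{ab};E^{wz}$, i.e.~well-typedness in the wrong direction, and an intervening term $E^{ui}$ genuinely shares the vertex $u$ with $C^{uv}_{ab}$; indeed $i$ may be the very $\alpha$-vertex $a$ that $C^{uv}_{ab}$ introduces (see Example~\ref{ex:disconnection-sequence}), in which case $E^{ua};C^{uv}_{ab}$ is not well-typed at all. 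The paper's proof circumvents this by first commuting all terms of the form $E^{ui}$, $E^{vj}$ out of the way to the right, past the $\bar{\mathtt E}^{\geq 0}$-block and past $\bar C^{uv}_{cd}$ itself, using~\eqref{disc-eq:comm2} and~\eqref{disc-eq:comm11} together with the fact that the already-established condition~\ref{nf:electron} guarantees no $\bar E^{ui}$ or $\bar E^{vj}$ occurs; only then is $C^{uv}_{ab}$ brought adjacent to $\bar C^{uv}_{cd}$ and cancelled, after which the displaced $E^{\geq 0}$-terms are returned. Your argument needs this detour (or an equivalent one) made explicit; as written, the $C$--$\bar C$ case of condition~\ref{nf:disconnection} does not go through.
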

\begin{proof}
By Propositions~\ref{prop:ICE-form} and~\ref{lma:renaming-form}, every term is equal to a term in an $ICER$-form: let us fix such a term
$$\mathtt t = \mathtt I;\mathtt C;\mathtt E^{<0};\mathtt E^{\geq 0};\bar{\mathtt E}^{\geq 0};\bar{\mathtt E}^{<0};\bar{\mathtt C};\bar{\mathtt I};\mathtt A;\mathtt B;\mathtt S.$$

Conditions~\ref{nf:S0} and~\ref{nf:S1} are obtained by absorbing the ``excess'' $S$-terms into other terms using equations~\eqref{disc-eq:sr2},~\eqref{disc-eq:sidem},~\eqref{disc-eq:sd2} and~\eqref{prop:disc-sabsorb}. Conditions~\ref{nf:discrename} and~\ref{nf:dbdisjoint} are obtained by treating all the vertex names in $D^{add}_{\mathtt t}$ as ``dummy'' names, which are removed either by a connection or a renaming term.

For~\ref{nf:connrename}, suppose that $\bar d^U_{D[a]}:A\rightarrow B$ and $R^{z\mapsto a}$ both occur, and moreover $A$ is in the domain of $\bar d^U_{D[z/a]}$. We commute the renaming term to the left to obtain $\bar d^U_{D[a]};R^{z\mapsto a}$. But this is equal to $\bar d^U_{D[z/a]};R^{a\mapsto a}$ by equations~\eqref{prop:disc-ids2} and~\eqref{prop:disc-ids3}, which gets rid of the renaming term by $R^{a\mapsto a} \equiv S^a$.

For~\ref{nf:disconnection} and~\ref{nf:electron}, we consider the three cases separately.

\noindent\textbf{Case 1:} $E^{uv}$ and $\bar E^{uw}$ occur in $\mathtt t$. We commute the terms so that they occur one after the other $E^{uv};\bar E^{uw}$, which by equations~\eqref{disc-eq:ddbar4} and~\eqref{disc-eq:eebar} is equal to some combination of $S$- and $R$-terms. Noticing that the rewriting procedure to obtain an ICE-form either commutes or absorbs $S$- and $R$-terms (specifically, the results from Lemma~\ref{lma:Enonnegbar-commutes} onwards apply), we conclude that $\mathtt t$ has an ICE-form without $E^{uv}$ and $\bar E^{uw}$.

\noindent\textbf{Case 2:} $E^u_{ab}$ and $\bar E^u_{cd}$ occur in $\mathtt t$. In combination with Case~1, it follows that the terms $E^{uv}$ and $\bar E^{uw}$ do not occur, so that there are no obstruction for commuting $E^u_{ab}$ next to $\bar E^u_{cd}$, obtaining $E^u_{ab};\bar E^u_{cd}$. By equations~\eqref{disc-eq:ddbar1},~\eqref{disc-eq:ddbar2} and~\eqref{disc-eq:ddbar3}, this is equal to some combination of $S$- and $R$-terms, which we commute to the right as in Case~1.

\noindent\textbf{Case 3:} $C^{uv}_{ab}$ and either $\bar C^{uv}_{cd}$ or $\bar C^{vu}_{cd}$ occur in $\mathtt t$. The latter case simply reduces to the former by equation~\eqref{disc-eq:cs}. Thus suppose that $\bar C^{uv}_{cd}$ occurs. First, we use equation~\eqref{disc-eq:comm9} to commute $C^{uv}_{ab}$ to the right past all the $E^{<0}$-terms, and $\bar C^{uv}_{cd}$ to the left past all the $\bar E^{<0}$-terms. Next, we commute any terms of the form $E^{ui}$ and $E^{vj}$ to the right past the $\bar E^{\geq 0}$-terms and $\bar C^{uv}_{cd}$ using the fact that by Case~1 the terms $\bar E^{ui}$ and $\bar E^{vj}$ do not occur, together with the equations~\eqref{disc-eq:comm2} and~\eqref{disc-eq:comm11}. Now there are no obstructions for commuting $C^{uv}_{ab}$ to the right past all the $E^{\geq 0}$- and $\bar E^{\geq 0}$-terms, obtaining $C^{uv}_{ab};\bar C^{uv}_{cd}$. As in Case~2, equations~\eqref{disc-eq:ddbar1},~\eqref{disc-eq:ddbar2} and~\eqref{disc-eq:ddbar3} yield that this is equal to some combination of $S$- and $R$-terms, which we commute to the right as in Case~1. Finally, we return the terms of the form $E^{ui}$ and $E^{vj}$ back to the left past all the $\bar E^{\geq 0}$-terms.

For~\ref{nf:ion}, suppose that both $I^{uv}$ and $\bar I^{uv}$ occur in $\mathtt t$ such that no $E$- or $\bar E$ term containing $u$ occurs. It follows that no $E$- or $\bar E$ term containing $v$ occurs either: application of $\bar I^{uv}$ requires for $u$ and $v$ to have equal and opposite charge, so if the charge of $u$ is unchanged, so is the charge of $v$; moreover, by~\ref{nf:disconnection} and~\ref{nf:electron}, we may assume that no change can be reversed, so that we indeed cannot have any $E$- or $\bar E$ term containing $v$. But now there are no obstructions for commuting $I^{uv}$ all the way to the right until we obtain $I^{uv};\bar I^{uv}$, which is equal to $S^u;S^v$ by~\eqref{disc-eq:ddbar4}.
\end{proof}

\begin{definition}[Normal form equivalence]\label{def:nf-equivalence}
Let
$$\mathtt t = \mathtt I;\mathtt C;\mathtt E^{<0};\mathtt E^{\geq 0};\bar{\mathtt E}^{\geq 0};\bar{\mathtt E}^{<0};\bar{\mathtt C};\bar{\mathtt I};\mathtt A;\mathtt B;\mathtt S$$
be a term in a normal form. Define the following syntactic manipulations of $\mathtt t$:
\begin{enumerate}
\item commuting the terms inside each of the named sequences in the normal form,\label{nf-eq:comm}
\item permuting vertex names in $C$-terms: if the term $C^{uv}_{ab}$ occurs, we may substitute it with $C^{vu}_{ba}$,\label{nf-eq:permute}
\item if $d\in\{C,E,\bar C,\bar E\}$ such that $d^U_{ab};d^U_{cd}$ occurs, we may substitute $d^U_{ab};d^U_{cd}\mapsto d^U_{ad};d^U_{cb}$,\label{nf-eq:same-swap}
\item renaming of vertices that are introduced and removed: if $a\in D^{add}_{\mathtt t}\cup C_{\mathtt t}$ and $z\in\VS$ does not occur in $\mathtt t$ or its domain, then we may substitute $\mathtt t\mapsto\mathtt t[z/a]$,\label{nf-eq:rename}
\item exchanging vertex names between renaming terms: if both $R^{a\mapsto b}$ and $R^{c\mapsto d}$ occur in $\mathtt A$ such that $\Nbr(a)=\Nbr(c)$, we may swap $a$ and $c$,\label{nf-eq:r-swap}
\item exchanging vertex names between connections and renaming terms: if $d\in\{E,C\}$, and $\bar d^U_{D[a]}:A\rightarrow B$ and $R^{z\mapsto b}$ both occur such that $A$ is in the domain of $\bar d^U_{D[z/a]}$, then we may substitute $\bar d^U_{D[a]}\mapsto \bar d^U_{D[z/a]}$ and $R^{z\mapsto b}\mapsto R^{a\mapsto b}$.\label{nf-eq:r-exchange}
\end{enumerate}
We say that two terms $\mathtt t$ and $\mathtt s$ in a normal form are {\em equivalent}, written $\mathtt t\sim\mathtt s$, if one can be obtained from the other by a sequence of the syntactic manipulations defined above.
\end{definition}
Observing that each syntactic manipulation in Definition~\ref{def:nf-equivalence} is reversible, we see that $\sim$ is an equivalence relation on the set of terms in normal form.
\begin{lemma}\label{lma:nf-equivalence}
Let $\mathtt t$ and $\mathtt s$ be terms in normal forms such that $\mathtt t\sim\mathtt s$. Then $\mathtt t \equiv \mathtt s$.
\end{lemma}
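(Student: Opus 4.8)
The plan is to use the fact that, by construction, $\sim$ is the equivalence relation generated by the six syntactic manipulations \eqref{nf-eq:comm}--\eqref{nf-eq:r-exchange} of Definition~\ref{def:nf-equivalence}, whereas $\equiv$ is already an equivalence relation (it is equality of morphisms in $\Disc$). Hence it suffices to show that applying a \emph{single} one of the manipulations to a normal-form term produces an $\equiv$-equal term; the lemma then follows by induction on the length of the sequence of manipulations witnessing $\mathtt t\sim\mathtt s$. Throughout I will freely use that $\Disc$ is a dagger category, so that the dagger of any derivable identity is again derivable; this lets me read off the ``connection'' versions of equations from the ``disconnection'' ones (e.g.\ applying $\overline{()}$ to \eqref{disc-eq:comm1}, \eqref{disc-eq:cs}, \eqref{disc-eq:ddindex}).

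The first three manipulations are essentially immediate. For \eqref{nf-eq:comm}, within each named block of the normal form all generators are of a single kind, and two generators of the same kind always commute: \eqref{disc-eq:comm1} (and its dagger) for two disconnections or two connections, \eqref{disc-eq:scomm} for two $S$-terms, and \eqref{disc-eq:rcomm} for two $R$-terms inside $\mathtt A$ or inside $\mathtt B$ -- where the disjointness of their source and target names is guaranteed by the renaming-form conditions. Manipulation \eqref{nf-eq:permute} is literally equation \eqref{disc-eq:cs} (together with its dagger for $\bar C$), and manipulation \eqref{nf-eq:same-swap} is literally the derived identity \eqref{disc-eq:ddindex} of Proposition~\ref{prop:disc-ids} (again with its dagger to cover $\bar C$ and $\bar E$).

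The remaining manipulations \eqref{nf-eq:rename}, \eqref{nf-eq:r-swap} and \eqref{nf-eq:r-exchange} all concern renaming, and are handled by a common technique: a vertex name $a$ that is bound in $\mathtt t$ (introduced by a disconnection or as a dummy target in $\mathtt A;\mathtt B$, and later eliminated by a connection or a renaming) can be renamed to a fresh $z$ by inserting the round-trip $R^{a\mapsto z};R^{z\mapsto a}$ -- which is $\equiv$ to $S^a$ by \eqref{disc-eq:trans} and \eqref{disc-eq:refl}, and hence absorbable by \eqref{prop:disc-sabsorb} -- at the point where $a$ first appears, absorbing $R^{a\mapsto z}$ into the introducing term via \eqref{disc-eq:rd3} (or via \eqref{disc-eq:rsymm}/\eqref{disc-eq:trans} when $a$ is a dummy target of a renaming), and then commuting $R^{z\mapsto a}$ rightward towards the eliminating term using \eqref{disc-eq:rd1}, \eqref{disc-eq:rd2}, \eqref{disc-eq:rcomm}, \eqref{disc-eq:sr1}, the identities \eqref{prop:disc-ids1}--\eqref{prop:disc-ids3} and their daggers; each commutation step rewrites the downstream occurrence of $a$ into $z$, and at the eliminating term $R^{z\mapsto a}$ is absorbed (by a renaming composition \eqref{disc-eq:trans}, or via \eqref{disc-eq:rd4}/\eqref{disc-eq:eebar} and its dagger). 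Manipulation \eqref{nf-eq:rename} is exactly this statement. For manipulations \eqref{nf-eq:r-swap} and \eqref{nf-eq:r-exchange}, the side conditions ($\Nbr(a)=\Nbr(c)$ in the domain graph, respectively ``$A$ is in the domain of $\bar d^U_{D[z/a]}$'') are precisely what makes the two generators being adjusted play interchangeable roles in the relevant chemical graph, so that the rewritten term is again well-typed; the equality is then obtained by the same push-a-renaming argument, combined with \eqref{disc-eq:rsymm} for \eqref{nf-eq:r-swap} and with \eqref{disc-eq:rd4} and the daggers of \eqref{prop:disc-ids2}--\eqref{prop:disc-ids3} for \eqref{nf-eq:r-exchange}.

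I expect the main obstacle to be manipulation \eqref{nf-eq:rename}: it is a renaming of a bound vertex name, and making the ``push the renaming from its introduction to its elimination'' argument precise requires an induction over the internal structure of the normal-form term, keeping careful track of exactly which downstream generators mention $a$ (a created $\alpha$-vertex may reappear as the second superscript of an $E^{\geq 0}$-term, as a subscript of a later connection, or as the source of a renaming) and checking at each step -- using the $\leq$-information recorded by the relations $\lesssim$ and $\simeq$ in Figure~\ref{fig:disc-axioms} -- that well-typedness is preserved. This case analysis is bookkeeping-heavy but conceptually routine, and once \eqref{nf-eq:rename} is available the remaining renaming manipulations present no further difficulty.
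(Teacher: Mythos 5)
Your proposal is correct and takes essentially the same approach as the paper: reduce to a single manipulation, then match each of the six manipulations to the corresponding equations (\eqref{disc-eq:comm1}, \eqref{disc-eq:rcomm}, \eqref{disc-eq:scomm} for commuting; \eqref{disc-eq:cs}; \eqref{disc-eq:ddindex}; \eqref{disc-eq:rd3} plus absorption for renaming bound names; \eqref{disc-eq:rsymm}; and \eqref{prop:disc-ids2}--\eqref{prop:disc-ids3}). The paper's own proof is exactly this case-by-case verification, stated more tersely than your discussion of the bookkeeping in manipulation~\eqref{nf-eq:rename}.
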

\begin{proof}
This follows by noticing that every syntactic manipulation of Definition~\ref{def:nf-equivalence} keeps the term in a normal form, and moreover preserves the equality $\equiv$:
\begin{enumerate}
\item the terms may be commuted by equations~\eqref{disc-eq:comm1},~\eqref{disc-eq:rcomm} and~\eqref{disc-eq:scomm},
\item vertex names in $C$-terms may be permuted by~\eqref{disc-eq:cs},
\item the indices in repeated (dis)connections may be exchanged by~\eqref{disc-eq:ddindex},
\item if $a\in D^{add}_{\mathtt t}$, so that there is a disconnection $d^U_{D[a]}$, we use equation~\eqref{disc-eq:rd3} to obtain $d^U_{D[a]} \equiv d^U_{D[z/a]};R^{z\mapsto a}$ to introduce the desired fresh variable $z$; the renaming term can then be absorbed into the second occurrence of $a$, hence replacing $a$ with $z$ (the case when $a\in C_{\mathtt t}$ is similar),
\item the renaming of $\alpha$-vertices with the same neighbour can be exchanged by equation~\eqref{disc-eq:rsymm}:
$$R^{a\mapsto b};R^{c\mapsto d} \equiv S^c;R^{a\mapsto b};R^{c\mapsto d} \equiv R^{c\mapsto b};R^{a\mapsto c};R^{c\mapsto d} \equiv R^{c\mapsto b};R^{a\mapsto d},$$
\item the last syntactic manipulation is obtained by equations~\eqref{prop:disc-ids2} and~\eqref{prop:disc-ids3}.
\end{enumerate}
\end{proof}

\section{From Disconnections to Reactions, Functorially}\label{sec:disc-to-react}

We are finally ready to tie together the constructions in the previous sections: here we construct a functor $R:\Disc\rightarrow\React$ and prove that it is faithful and full up to an isomorphism, hence establishing the claimed soundness, completeness and universality results. Proving faithfulness (completeness) relies crucially on the normal form results (Subsection~\ref{subsec:normal-form}). In combination, the results of this section allow for algebraic reasoning about the reactions using the equations for the disconnection rules (Figure~\ref{fig:disc-axioms}).

We define a function $R$ from terms to morphisms in $\React$ as follows. Given a term $\mathtt t:A\rightarrow B$, the morphism $R(\mathtt t):A\rightarrow B$ has the form $R(R_1(\mathtt t),R_2(\mathtt t),\id,\id)$, where $\id:\Chem{R_1(\mathtt t)}\rightarrow\Chem{R_2(\mathtt t)}$ and $\id:V_A\setminus R_1(\mathtt t)\rightarrow V_B\setminus R_2(\mathtt t)$ are both identity maps. Since all the terms are mapped to morphisms whose bijection and isomorphism parts are the identities, we omit these, and write $R(\mathtt t)=(R_1(\mathtt t),R_2(\mathtt t))$. The recursive definition of this mapping is given below:
\begin{align*}
R(\id_A) &\coloneqq (\eset,\eset) & R(E^{uv}) &\coloneqq (\{u,v\},\{u,v\}) \\
R\left(S^u\right) &\coloneqq (\{u\},\{u\}) & R(I^{uv}) &\coloneqq (\{u,v\},\{u,v\}) \\
R\left(R^{u\mapsto v}\right) &\coloneqq (\{u\},\{v\}) & R(C^{uv}_{ab}) &\coloneqq (\{u,v\},\{u,v,a,b\}) \\
R(E^{u}_{ab}) &\coloneqq (\{u\},\{u,a,b\}) & R\left(\bar d^{uv}_{ab}\right) &\coloneqq\overline{R\left(d^{uv}_{ab}\right)} \\
 &\phantom{\coloneqq} & R(\mathtt t;\mathtt s) &\coloneqq R(\mathtt t);R(\mathtt s).
\end{align*}
Observe that for all the disconnections we have $R(d^U_D)=(U,U\cup D)$.

Soundness of disconnection rules with respect to reactions is expressed as functoriality:
\begin{proposition}\label{prop:r-dagger-functor}
The assignment $R:\Disc\rightarrow\React$ is a dagger functor.
\end{proposition}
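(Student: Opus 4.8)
The plan is to verify the three defining properties of a dagger functor by reducing everything to the generating terms, which is legitimate because $R$ is defined by structural recursion with $R(\mathtt t;\mathtt s)=R(\mathtt t);R(\mathtt s)$: namely, (i) $R$ is identity-on-objects and sends every term $\mathtt t\colon A\to B$ to a genuine morphism of $\React$ of type $A\to B$; (ii) $R$ preserves identities and composition and hence descends through the associativity/unitality and $\equiv$ quotients defining $\Disc(A,B)$; (iii) $R(\overline{\mathtt t})=\overline{R(\mathtt t)}$.

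\textbf{$R$ takes values in $\React$.} It suffices to check each generating term $\mathtt g\colon A\to B$ is sent to a well-formed morphism $R(\mathtt g)\colon A\to B$; an arbitrary term then follows by induction, composing the images in $\React$ (which also witnesses that $R(\mathtt t;\mathtt s)$ is defined). For $\id_A$, $S^u$ and $R^{u\mapsto v}$ this is immediate. For a disconnection $d^U_D\colon A\to d(A)$, where $R(d^U_D)=(U,U\cup D)$, the conditions of Definition~\ref{def:category-reactions} are exactly the bookkeeping recorded in Figure~\ref{fig:disc-rules-partial-fns}: the chemical vertices of $U$ and of $U\cup D$ coincide (the vertices of $D$ carry the label $\alpha$), so the bijection part is the identity and preserves atom labels; the rule alters the labellings only on $U\cup D$, so the complements $V_A\setminus U=V_{d(A)}\setminus(U\cup D)$ are equal as labelled graphs and the boundary edges $m_A(x,a)$ with $x\in\Chem U$, $a\notin U$ are inherited from $A$; and $\Net U=\Net(U\cup D)$ is precisely the content of the charge columns (for $E^u_{ab}$ the new charges $0$ and $-1$ offset the $+1$ added to $u$, and for $C^{uv}_{ab}$, $I^{uv}$, $E^{uv}$ the total charge on $U$ is unchanged). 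Connections are covered by the clause $R(\bar d^{uv}_{ab})=\overline{R(d^{uv}_{ab})}$ and the dagger of $\React$. Finally $R(\id_A)=(\eset,\eset,!,\id_A)$ is the identity of $A$ in $\React$, and $R(\mathtt t;\mathtt s)=R(\mathtt t);R(\mathtt s)$ by definition, so the associativity and unitality equations of $\Disc$ are respected because they already hold in $\React$.

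\textbf{$R$ descends to the quotient by $\equiv$.} This is the bulk of the argument and the step I expect to be the main obstacle. Since $R$ commutes with composition, it is enough to check that for every equation of Figure~\ref{fig:disc-axioms}, read as $\mathtt t\approx\mathtt s$, one has $R(\mathtt t)=R(\mathtt s)$ in $\React$ whenever $\mathtt t\colon A\to B$ and $\mathtt s\colon A\to B$ are both well-typed with the same type; the clause of Definition~\ref{def:term-id} about $\overline{\mathtt t}\approx\overline{\mathtt s}$ then follows from dagger preservation below together with the injectivity of $\overline{(-)}$ on $\React$. The difficulty is purely the volume of cases together with the need to apply the composition formula of Definition~\ref{def:category-reactions}, $Z_A=U_A\cup i^{-1}(W_B\cap(V_B\setminus U_B))$ and $Z_C=W_C\cup j(U_B\cap(V_B\setminus W_B))$, correctly each time; conceptually each check is routine, because once $R$ is applied, each side of an equation is merely a pair of finite vertex-name sets, and $R$ records only which names a step touches, neither which of the four rules is used nor any graph structure. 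Thus the commutation axioms \eqref{disc-eq:rcomm}, \eqref{disc-eq:scomm}, \eqref{disc-eq:comm2} and \eqref{disc-eq:comm1}--\eqref{disc-eq:comm13} become the evident symmetric identity of such pairs; the $S$- and $R$-axioms \eqref{disc-eq:trans}--\eqref{disc-eq:sr2}, \eqref{disc-eq:rd1}--\eqref{disc-eq:rd4} and \eqref{disc-eq:sd1}--\eqref{disc-eq:cs} reduce to elementary unions and substitutions of singletons (e.g.\ for \eqref{disc-eq:rsymm} both sides give the pair $(\{a,b\},\{b,z\})$); and the cancellation axioms \eqref{disc-eq:ddbar1}--\eqref{disc-eq:ddbar4-2} and \eqref{disc-eq:eebar} reduce to the fact that the $R$-image of $d^U_D;\bar d^U_{D'}$ has source-set $U\cup D'$ and target-set $U\cup D$, which is exactly the $R$-image of the corresponding sequence of $S$- and $R$-terms on the right, and dually for $\bar d;d$.

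\textbf{$R$ preserves the dagger.} We show $R(\overline{\mathtt t})=\overline{R(\mathtt t)}$ by induction on $\mathtt t$. On $\id$ and $S^u$ both sides are the self-dual pairs $(\eset,\eset)$ and $(\{u\},\{u\})$; for $R^{u\mapsto v}$ we have $R(\overline{R^{u\mapsto v}})=R(R^{v\mapsto u})=(\{v\},\{u\})=\overline{(\{u\},\{v\})}=\overline{R(R^{u\mapsto v})}$; the identity $R(\overline{d^{uv}_{ab}})=R(\bar d^{uv}_{ab})=\overline{R(d^{uv}_{ab})}$ holds by definition, and the connection case follows because $\overline{(-)}$ is involutive on $\React$. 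For a composite, $R(\overline{\mathtt t;\mathtt s})=R(\overline{\mathtt s};\overline{\mathtt t})=R(\overline{\mathtt s});R(\overline{\mathtt t})=\overline{R(\mathtt s)};\overline{R(\mathtt t)}=\overline{R(\mathtt t);R(\mathtt s)}=\overline{R(\mathtt t;\mathtt s)}$, using that $\overline{(-)}$ on $\React$ is a contravariant involution. Combined with the previous steps, $R$ is a dagger functor.
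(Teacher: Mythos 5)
Your proof is correct and takes essentially the same route as the paper's: functoriality and dagger preservation are read off from the recursive definition, and the substance is checking that each equation of Figure~\ref{fig:disc-axioms} is preserved, which reduces to bookkeeping with pairs of vertex-name sets under the composition formula of Definition~\ref{def:category-reactions} (your sample computations, e.g.\ both sides of \eqref{disc-eq:rsymm} giving $(\{a,b\},\{b,z\})$ and the source/target sets $U\cup D'$ and $U\cup D$ for $d^U_D;\bar d^U_{D'}$, match the paper's worked cases). You additionally spell out that $R$ lands in well-formed morphisms of $\React$ (net charge, boundary edges, complement isomorphism), a point the paper leaves implicit.
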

\begin{proof}
Functoriality and preservation of dagger structure follow immediately from the definition. We have to show that $R$ preserves the equalities in $\Disc$, generated by the identities in Figure~\ref{fig:disc-axioms}. Most of these follow immediately by assuming that the expressions on both sides of the equality have the same type and are identical once $R$ is applied, hence we only give the cases that are less obvious or require more computation. To further simplify the notation, we omit the curly brackets of set-builder notation as well as the commas separating vertex names from each other: so e.g.~$(uv,uvab)$ stands for $(\{u,v\},\{u,v,a,b\})$.

For~\eqref{disc-eq:rd1}, suppose that $R^{u\mapsto v};d^U_D\equiv d^U_D;R^{u\mapsto v}$. In particular, it follows that $u,v\notin U\cup D$. Let us write $R(d^U_D)=(R_1,R_2)$, so that $u,v\notin R_1,R_2$. We use this to show that both sides of the equality evaluate to the same map:
$$(u,v);(R_1,R_2) = (\{u\}\cup R_1, R_2\cup\{v\}) = (R_1,R_2);(u,v).$$

For~\eqref{disc-eq:rd4}, suppose that $d^{U'}_{ij};\bar h^U_{ab};R^{i\mapsto c};R^{j\mapsto d} \equiv \bar h^U_{ab};d^{U'}_{cd}$. Denote $R(d^{U'}_{ij})=(u'v',u'v'ij)$, $R(d^{U'}_{cd})=(u'v',u'v'cd)$ and $R(\bar h^{U}_{ab})=(uvab,uv)$. Note that $d$ and $h$ are not $E^{\geq 0}$-terms, whence it follows that $i,j\notin U$ and $a,b\notin U'$. From the fact that the left-hand side is defined, we obtain that $\{i,j\}$ and $\{a,b\}$ are disjoint. The left-hand side is thus translated to
\begin{flalign*}
(u'v',u'v'ij);(uvab,uv);(i,c);(j,d) &= (u'v'uvab,uvu'v'ij);(i,c);(j,d) \\
                              &= (u'v'uvab,cuvu'v'j);(j,d) \\
                              &= (u'v'uvab,dcuvu'v') \\
                              &= (uvu'v'ab,uvu'v'cd) \\
                              &= (uvab,uv);(u'v',u'v'cd),
\end{flalign*}
which we recognise as the translation of the right-hand side.

For~\eqref{disc-eq:eebar}, suppose $E^{ua};\bar E^{ub} \equiv S^u;R^{a\mapsto z};R^{b\mapsto a};R^{z\mapsto b}$. We start from the translation of the right-hand side:
\begin{flalign*}
(u,u);(a,z);(b,a);(z,b) &= (ua,zu);(bz,ba) \\
                        &= (uab,bau) \\
                        &= (uab,uba) \\
                        &= (ua,ua);(ub,ub),
\end{flalign*}
which we recognise as the translation of the left-hand side.

For~\eqref{disc-eq:comm1}, write $R(d^U_D)=(U,U\cup D)$ and $R(d^{U'}_{D'})=(U',U'\cup D')$, so that we get
$$R(d^U_D;d^{U'}_{D'}) = (U\cup U', U\cup U'\cup D\cup D') = R(d^{U'}_{D'};d^U_D).$$
\end{proof}

Recall the syntactic manipulations of terms in normal form we introduced in Definition~\ref{def:nf-equivalence}. We have seen that these manipulations preserve equality (Lemma~\ref{lma:nf-equivalence}). The following lemma is the core of the completeness argument.
\begin{lemma}\label{lma:nf-equality-equivalence}
Let $\mathtt t$ and $\mathtt s$ be terms in a normal form such that $R(\mathtt t)=R(\mathtt s)$. Then $\mathtt t\sim\mathtt s$.
\end{lemma}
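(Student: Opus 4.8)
The plan is to show that a term in normal form is determined, up to the equivalence $\sim$ of Definition~\ref{def:nf-equivalence}, by the triple consisting of its source $A$, its target $B$, and its image under $R$; since $R(\mathtt t)=R(\mathtt s)$ forces $\mathtt t$ and $\mathtt s$ to be parallel ($\mathtt t,\mathtt s\colon A\rightarrow B$) with $R_1(\mathtt t)=R_1(\mathtt s)$ and $R_2(\mathtt t)=R_2(\mathtt s)$ (write $R_1$ and $R_2$ for these common subsets), such a reconstruction immediately gives $\mathtt t\sim\mathtt s$. I would reconstruct the normal form in three stages, following its block structure: first the disconnections $\mathtt I;\mathtt C;\mathtt E^{<0};\mathtt E^{\geq 0}$, then the connections $\bar{\mathtt E}^{\geq 0};\bar{\mathtt E}^{<0};\bar{\mathtt C};\bar{\mathtt I}$, then the tail $\mathtt A;\mathtt B;\mathtt S$.

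For the disconnection block, the key point is that its generators are forced by $A$, $B$ and $R_1$. Inside the reacting region, the rule $C^{uv}$ is applied to the bond between $u$ and $v$ exactly as many times as its covalent order in $A$ exceeds its order in $B$ (a bond lying entirely in the unchanged part $V_A\setminus R_1$ being left untouched), the $I$-terms break exactly the ionic bonds of $A$ inside $R_1$ that are not preserved into $B$, and the $E^{<0}$- and $E^{\geq 0}$-terms are pinned down by the required changes of charge and of incident $\alpha$-vertices at the vertices of $R_1$. This uses the normal-form conditions~\ref{nf:disconnection},~\ref{nf:electron} and~\ref{nf:ion}, which forbid performing a disconnection that is later undone: such a do-and-undo is invisible to $R$, and ruling it out is what makes the block genuinely forced, rather than forced only up to redundancy. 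The residual freedom is exactly the ordering of the pairwise-commuting generators (manipulation~\ref{nf-eq:comm}), the orientation of the $C$-terms (manipulation~\ref{nf-eq:permute}), the distribution of the fresh $\alpha$-names among several applications of $C^{uv}$ to the same bond (manipulation~\ref{nf-eq:same-swap}), and the choice of those fresh $\alpha$-names (manipulation~\ref{nf-eq:rename}). Applying these manipulations to $\mathtt s$, I may assume that the disconnection blocks of $\mathtt t$ and $\mathtt s$ literally coincide, and hence that running them on $A$ yields a common intermediate graph.

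The connection block is handled symmetrically: as $R$ is a dagger functor (Proposition~\ref{prop:r-dagger-functor}), the dagger of the connection block is itself a disconnection block, so the argument above, read backwards from $B$ and $R_2$, shows that the connection blocks of $\mathtt t$ and $\mathtt s$ are $\sim$-equivalent and may be assumed equal. Running the disconnection and connection blocks on $A$ then produces a graph $G'$ common to $\mathtt t$ and $\mathtt s$, and it remains to show that a tail $\mathtt A;\mathtt B;\mathtt S\colon G'\rightarrow B$ of the shape occurring in a normal form is determined by the pair $(G',B)$ up to $\sim$. Here the $S$-terms act on vertices disjoint from all other names by conditions~\ref{nf:S0} and~\ref{nf:S1}, so their set is read off directly; since only $\alpha$-vertices may be renamed, $\mathtt A;\mathtt B$ must realise the bijection between the $\alpha$-vertices of $G'$ and those of $B$ obtained by matching them via their unique neighbours, the only ambiguities being when several $\alpha$-vertices share a neighbour (manipulation~\ref{nf-eq:r-swap}), the choice of the intermediate ``dummy'' names of $C_{\mathtt t}$ used for the two-step renamings (manipulation~\ref{nf-eq:rename}), and the shuffling of a name between a connection and a renaming term that condition~\ref{nf:connrename} leaves open (manipulation~\ref{nf-eq:r-exchange}). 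Combining the three stages yields $\mathtt t\sim\mathtt s$.

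I expect the main obstacle to be the vertex-name bookkeeping in the last two stages. The $\alpha$-vertices that a disconnection creates and that a later connection or renaming consumes never appear in $A$ or in $B$, so $R$ is blind to their names, yet they are genuinely present in the normal form; one must check that the leftover freedom in choosing them is captured \emph{exactly} by manipulations~\ref{nf-eq:rename},~\ref{nf-eq:r-swap} and~\ref{nf-eq:r-exchange}, neither more nor less. Concretely this requires verifying that conditions~\ref{nf:connrename}--\ref{nf:ion} of the normal form are at once tight enough to pin the term down modulo $\sim$ and loose enough to be attainable (the latter being Proposition~\ref{prop:normal-form-existence}), and the verification will run through a case analysis parallel to the one in the proof of that proposition.
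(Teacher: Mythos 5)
Your proposal is correct and follows essentially the same route as the paper: use normal-form conditions~\ref{nf:disconnection},~\ref{nf:electron} and~\ref{nf:ion} to force the (dis)connection generators to agree up to the names of fresh $\alpha$-vertices, then match the renaming and $S$-blocks via conditions~\ref{nf:S0}--\ref{nf:connrename}, absorbing the residual freedom into the six syntactic manipulations. The paper phrases this as directly matching generators of $\mathtt t$ against those of $\mathtt s$ (with a four-case analysis of the renaming terms according to whether their names lie in $D^{add}$ or $C$) rather than as reconstruction from $(A,B,R_1,R_2)$, but the content is the same.
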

\begin{proof}
Let us write
\begin{align*}
\mathtt t &= \mathtt I_{\mathtt t};\mathtt C_{\mathtt t};\mathtt E^{<0}_{\mathtt t};\mathtt E^{\geq 0}_{\mathtt t};\bar{\mathtt E}^{\geq 0}_{\mathtt t};\bar{\mathtt E}^{<0}_{\mathtt t};\bar{\mathtt C}_{\mathtt t};\bar{\mathtt I}_{\mathtt t};\mathtt A_{\mathtt t};\mathtt B_{\mathtt t};\mathtt S_{\mathtt t}, \\
\mathtt s &= \mathtt I_{\mathtt s};\mathtt C_{\mathtt s};\mathtt E^{<0}_{\mathtt s};\mathtt E^{\geq 0}_{\mathtt s};\bar{\mathtt E}^{\geq 0}_{\mathtt s};\bar{\mathtt E}^{<0}_{\mathtt s};\bar{\mathtt C}_{\mathtt s};\bar{\mathtt I}_{\mathtt s};\mathtt A_{\mathtt s};\mathtt B_{\mathtt s};\mathtt S_{\mathtt s}.
\end{align*}
Similarly, let us denote the vertex name sets in the respective renaming forms by $A_{\mathtt t},B_{\mathtt t},C_{\mathtt t},D_{\mathtt t}$ and $A_{\mathtt s},B_{\mathtt s},C_{\mathtt s},D_{\mathtt s}$. Let us denote the morphism $R(\mathtt t)=R(\mathtt s)$ by $(R_1,R_2):A\rightarrow B$.

First, we observe that if $E^{ua}\in\mathtt E^{\geq 0}_{\mathtt t}$, then condition~\ref{nf:electron} of a normal form implies that the charge of $u$ cannot be increased, whence there is a vertex name $b\in\VS$ such that $E^{ub}\in\mathtt E^{\geq 0}_{\mathtt s}$. Similarly, by condition~\ref{nf:disconnection}, if $E^u_{ab}\in\mathtt E^{<0}_{\mathtt t}$, then $E^u_{cd}\in\mathtt E^{<0}_{\mathtt s}$ for some $c,d\in\VS$; and if $C^{uv}_{ab}\in\mathtt C_{\mathtt t}$, then $C^{uv}_{cd}\in\mathtt C_{\mathtt s}$ for some $c,d\in\VS$. By condition~\ref{nf:ion}, if $I^{uv}\in\mathtt I_{\mathtt t}$ then $I^{uv}\in\mathtt I_{\mathtt s}$. Since the connections cannot be undoing the disconnections, a similar inclusion up to $\alpha$-vertices holds for them. Thus we obtain that the sequences of disconnections and connections must coincide, up to renaming the $\alpha$-vertices.

Next, suppose that $R^{a\mapsto b}\in\mathtt A_{\mathtt t}$, so that $a\in A_{\mathtt t}$ and $b\in B_{\mathtt t}$. There are four cases.

\noindent\textbf{Case 1:} $a\notin D^{add}_{\mathtt t}$ and $b\notin C_{\mathtt t}$. This implies that $a\in R_1$ and $b\in R_2$. Moreover, if $b\in R_1$, then condition~\ref{nf:connrename} yields that $\Nbr(a)\neq\Nbr(b)$. It follows that either $R^{a\mapsto b}\in\mathtt A_{\mathtt s}$, or both $\bar d^U_{D[a]}$ and $R^{z\mapsto b}$ occur in $\mathtt s$ such that $\bar d^U_{D[z/a]}$ is defined. But in the latter case the vertex names $a$ and $z$ may be exchanged by syntactic manipulation~\ref{nf-eq:r-exchange}, so that we may assume $R^{a\mapsto b}\in\mathtt A_{\mathtt s}$.

\noindent\textbf{Case 2:} $a\notin D^{add}_{\mathtt t}$ and $b\in C_{\mathtt t}$. This means that $R^{b\mapsto d}\in\mathtt B_{\mathtt t}$ for some $d\in D_{\mathtt t}$, and for some $x\in\VS$, we have $R^{d\mapsto x}\in\mathtt A_{\mathtt t}$. Condition~\ref{nf:discrename} implies that $d\notin D^{add}_{\mathtt t}$, so that we have $a\in R_1$ and $d\in R_1\cap R_2$. If $x\notin C_{\mathtt t}$, then by Case~1, $R^{d\mapsto x}\in\mathtt A_{\mathtt s}$, so that also $R^{a\mapsto z}\in\mathtt A_{\mathtt s}$ and $R^{z\mapsto d}\in\mathtt B_{\mathtt s}$ for some $z\in\VS$. If $x\in C_{\mathtt t}$, then we inductively repeat Case~2. By syntactic manipulation~\ref{nf-eq:rename}, we may assume that $R^{a\mapsto b}\in\mathtt A_{\mathtt s}$ and $R^{b\mapsto d}\in\mathtt B_{\mathtt s}$.

\noindent\textbf{Case 3:} $a\in D^{add}_{\mathtt t}$ and $b\notin C_{\mathtt t}$. Thus there is a disconnection $d^U_{D[a]}\in\mathtt t$, so that $d^U_{D[x/a]}\in\mathtt s$ for some $x\in\VS$. This implies $a\notin R_1\cup R_2$ and $b\in R_2$. As in Case~1, it follows that $R^{z\mapsto b}\in\mathtt A_{\mathtt s}$ for some $z\in\VS$. Moreover, in this case we must have $\Nbr(x)=\Nbr(z)$, whence by syntactic manipulation~\ref{nf-eq:r-swap} we may assume that $R^{x\mapsto b}\in\mathtt A_{\mathtt s}$ with $x\in D^{add}_{\mathtt s}$. By syntactic manipulation~\ref{nf-eq:rename}, we may assume that $R^{a\mapsto b}\in\mathtt A_{\mathtt s}$ and $d^U_{D[a]}\in\mathtt s$.

\noindent\textbf{Case 4:} $a\in D^{add}_{\mathtt t}$ and $b\in C_{\mathtt t}$. We thus have $a,b\notin R_1\cup R_2$. This means that $R^{b\mapsto d}\in\mathtt B_{\mathtt t}$ for some $d\in D_{\mathtt t}$, so that $R^{d\mapsto x}\in\mathtt A_{\mathtt t}$ for some $x\in\VS$. Condition~\ref{nf:discrename} implies that $d\notin D^{add}_{\mathtt t}$, so that $d\in R_1\cap R_2$ and either Case~1 or Case~2 applies to $R^{d\mapsto x}$. In both cases we conclude that $R^{d\mapsto x}\in\mathtt A_{\mathtt s}$. Since $d\in R_2$, we have $R^{w\mapsto d}\in\mathtt B_{\mathtt s}$ for some $w\in\VS$. Since there is a disconnection $d^U_{D[a]}\in\mathtt t$, we have $d^U_{D[y/a]}\in\mathtt s$ for some $y\in\VS$. Note that we have $\Nbr(y)=\Nbr(w)$. Consequently, by syntactic manipulation~\ref{nf-eq:r-swap}, we may assume that $R^{y\mapsto w}\in\mathtt A_{\mathtt s}$ with $y\in D^{add}_{\mathtt s}$ and $w\in C_{\mathtt s}$. By syntactic manipulation~\ref{nf-eq:rename}, we may assume that $R^{a\mapsto b}\in\mathtt A_{\mathtt s}$, $R^{b\mapsto d}\in\mathtt B_{\mathtt s}$ and $d^U_{D[a]}\in\mathtt s$.

Thus we have shown that $\mathtt t$ and $\mathtt s$ have the same renaming sequences (up to $\sim$), and up to the syntactic manipulations, $D^{add}_{\mathtt t}=D^{add}_{\mathtt s}$ and $D^{remove}_{\mathtt t}=D^{remove}_{\mathtt s}$.

If $S^u\in\mathtt S_{\mathtt t}$, then $u\in R_1\cap R_2$ and, by conditions~\ref{nf:S0} and~\ref{nf:S1}, $u$ does not occur anywhere else in $\mathtt t$. The argument so far entails that $u\notin U_{\mathtt s}\cup A_{\mathtt s}\cup B_{\mathtt s}$, so that $S^u\in\mathtt S_{\mathtt s}$. Thus $\mathtt S_{\mathtt t} =\mathtt S_{\mathtt s}$.

Now the only difference left between $\mathtt t$ and $\mathtt s$ is in which order the vertex names are introduced and removed. This is taken care of precisely by syntactic manipulations~\ref{nf-eq:comm} and~\ref{nf-eq:same-swap}.
\end{proof}

Combining the above lemma with the results from the previous section, we conclude that the functor $R:\Disc\rightarrow\React$ is faithful. We spell this out in detail in the following:
\begin{theorem}[Completeness]\label{thm:completeness}
For all terms $\mathtt t$ and $\mathtt s$, we have $\mathtt t\equiv\mathtt s$ in $\Disc$ if and only if $R(\mathtt t)=R(\mathtt s)$ in $\React$.
\end{theorem}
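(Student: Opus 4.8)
The plan is to recognise that this theorem is the pay-off that ties together the normal form machinery of Section~\ref{subsec:normal-form} with the properties of the functor $R$, and that almost all of the genuine work has already been done. I would split the biconditional into its two implications and handle each separately.

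For the ``only if'' direction (soundness), I would appeal directly to Proposition~\ref{prop:r-dagger-functor}: since $R:\Disc\rightarrow\React$ is a functor and the morphisms of $\Disc$ are by construction the $\equiv$-equivalence classes of terms, $R$ is well-defined on those classes, so $\mathtt t\equiv\mathtt s$ immediately gives $R(\mathtt t)=R(\mathtt s)$. For the ``if'' direction (faithfulness/completeness), suppose $R(\mathtt t)=R(\mathtt s)$. First I would note this forces $\mathtt t$ and $\mathtt s$ to have a common type $A\rightarrow B$, since $R$ preserves domains and codomains; this is the consistency check that makes the statement meaningful. Then, using Proposition~\ref{prop:normal-form-existence}, I would choose terms $\mathtt t'$ and $\mathtt s'$ in normal form with $\mathtt t\equiv\mathtt t'$ and $\mathtt s\equiv\mathtt s'$, necessarily still of type $A\rightarrow B$. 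Applying the soundness direction just established yields $R(\mathtt t')=R(\mathtt t)=R(\mathtt s)=R(\mathtt s')$, so Lemma~\ref{lma:nf-equality-equivalence} applies to the two normal-form terms and gives $\mathtt t'\sim\mathtt s'$. Lemma~\ref{lma:nf-equivalence} then upgrades this to $\mathtt t'\equiv\mathtt s'$, and chaining $\mathtt t\equiv\mathtt t'\equiv\mathtt s'\equiv\mathtt s$ finishes the argument.

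I do not expect any real obstacle at this level: the hard part was discharged earlier, first in the lengthy reduction to $ICE$-form, renaming form and normal form (Propositions~\ref{prop:ICE-form}--\ref{prop:normal-form-existence}), and above all in Lemma~\ref{lma:nf-equality-equivalence}, whose case analysis on the order in which vertex names are introduced and removed is the true heart of completeness. The only subtlety to keep in mind while writing this proof is the typed-versus-untyped bookkeeping — that $R(\mathtt t)=R(\mathtt s)$ guarantees a common type, and that normal forms of $\equiv$-equal terms can be chosen with that same type — after which the theorem is a purely formal chaining of the results above with no additional computation.
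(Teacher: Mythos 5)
Your proposal is correct and follows exactly the same route as the paper's own proof: soundness via Proposition~\ref{prop:r-dagger-functor}, and faithfulness by passing to normal forms (Proposition~\ref{prop:normal-form-existence}) and chaining Lemmas~\ref{lma:nf-equality-equivalence} and~\ref{lma:nf-equivalence}. You merely spell out the intermediate equalities $R(\mathtt t')=R(\mathtt t)=R(\mathtt s)=R(\mathtt s')$ and the type bookkeeping more explicitly than the paper does, which is fine.
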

\begin{proof}
The `only if' direction is functoriality (Proposition~\ref{prop:r-dagger-functor}). The `if' direction follows from the fact that every term is equal to a term in normal form (Proposition~\ref{prop:normal-form-existence}) and from Lemmas~\ref{lma:nf-equality-equivalence} and~\ref{lma:nf-equivalence}.
\end{proof}

The argument for universality turns out to be much simpler than that for completeness. However, in combination with Theorem~\ref{thm:completeness}, it gives a rather strong representation result for reactions: not only can every reaction be decomposed into a sequence of disconnection rules, but this sequence is also unique, up to changing the vertex names and up to the equations in $\Disc$. In abstract terms, the statement of universality is that the functor $R:\Disc\rightarrow\React$ is full up to isomorphism in $\React$. As for completeness, we spell out the details:
\begin{theorem}[Universality]\label{thm:universality}
Given a reaction $r:A\rightarrow C$ in $\React$, there is a term $\mathtt t:A\rightarrow B$ in $\Disc$ and an isomorphism $\iota:B\xrightarrow{\sim} C$ in $\React$ such that $R(\mathtt t);\iota = r$.
\end{theorem}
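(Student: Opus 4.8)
The plan is to realise $r$ as \emph{fully deconstruct the changing part of $A$, then reconstruct the changing part of $C$}, pushing all the renaming of chemical vertices and the relabelling that $R$ cannot produce into the isomorphism $\iota$.

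Write $r=(U_A,U_C,b,i)$. First I would replace $C$ by a convenient isomorphic copy $B$, obtained from $C$ by renaming each $u\in\Chem{U_C}$ to $b^{-1}(u)$, each $c\in V_C\setminus U_C$ to $i^{-1}(c)$, each $\alpha$-vertex of $U_C$ whose unique neighbour lies in $V_C\setminus U_C$ to the name of an $\alpha$-vertex of $U_A$ attached to the corresponding vertex of $V_A\setminus U_A$, and every remaining $\alpha$-vertex of $U_C$ to a name not occurring in $V_A$. The third clause makes sense because condition~\ref{cgraph:valence}, together with $m_A(u,a)=m_C(bu,ia)$, forces every $a\in V_A\setminus U_A$ to have exactly as many $\alpha$-neighbours inside $U_A$ as $ia$ has inside $U_C$, and~\ref{cgraph:valence} also makes all such $\alpha$-vertices neutral. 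Let $\iota=(\eset,\eset,!,\phi):B\to C$ be the induced isomorphism, with $\phi$ restricting to $b$ on $\Chem{U_A}$, to $i$ on $V_A\setminus U_A$, and to the chosen renaming on the $\alpha$-vertices. Then $V_A\setminus U_A=V_B\setminus U_B$, $\Chem{U_A}=\Chem{U_B}$ with equal atom labels, $m_A(u,a)=m_B(u,a)$ for $u\in\Chem{U_A}$ and $a\in V_A\setminus U_A$, and $\Net{U_A}=\Net{U_B}$. A direct computation with the composition law of $\React$ then shows that $R(\mathtt t);\iota=r$ as soon as $\mathtt t:A\to B$ is a term with $R(\mathtt t)=(U_A,U_B,\id,\id)$; so it remains only to produce such a $\mathtt t$.

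For this I would fix one ``fully disconnected'' chemical graph $N$ depending only on the data shared by $A$ and $B$: the atom labels on $\Chem{U_A}$, their covalent bonds to $V_A\setminus U_A$, and the net charge $\Net{U_A}$. Concretely, $N$ agrees with $A$ on $V_A\setminus U_A$; it has no covalent or ionic bond between two vertices of its changing part; it retains each $\alpha$-vertex of $U_A$ attached to $V_A\setminus U_A$ (with its original name); it caps every chemical vertex with the number of fresh neutral $\alpha$-vertices forced by~\ref{cgraph:valence}; it assigns the chemical vertices charges by a fixed rule --- all $0$ when $\Net{U_A}\le 0$, and otherwise a canonical nonnegative distribution summing to $\Net{U_A}$, which exists because $\Net{U_A}\le\sum_{v\in\Chem{U_A}}\bigl(\mathbf v\tau_A^{\At}(v)-\sum_{a\in V_A\setminus U_A}\cov(m_A(v,a))\bigr)$; and it parks the remaining (nonpositive) net charge on a fixed set of fresh free $\alpha$-vertices of charge $-1$. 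Then I would give a deconstruction algorithm producing, for $X\in\{A,B\}$, a term $\mathtt d_X:X\to N$ made only of $S$-terms, (dis)connections and $R$-terms on $\alpha$-vertices: prepend $S^v$ for every $v\in U_X$ so that the whole changing part enters the domain component; apply $C$ to reduce every intra-changing covalent bond to $0$; apply $I$ to break every intra-changing ionic bond; use the electron-detachment rules $E^{u}_{ab}$, $E^{uv}$ and their converses to bring the chemical vertices' charges and the number of free $\alpha$-vertices to the configuration prescribed by $N$; and finally rename the auxiliary $\alpha$-vertices to the names fixed in $N$. Such a $\mathtt d_X$ satisfies $R(\mathtt d_X)=(U_X,U_N,\id,\id)$, where $U_N$ is the changing part of $N$, so $\mathtt t\coloneqq\mathtt d_A;\overline{\mathtt d_B}:A\to B$ is a term with $R(\mathtt t)=R(\mathtt d_A);\overline{R(\mathtt d_B)}=(U_A,U_B,\id,\id)$, using that both $R$ and $\overline{(\cdot)}$ are functorial (Proposition~\ref{prop:r-dagger-functor}).

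The two $\React$-composition computations, and the claim that the $\alpha$-vertex counts and the displayed charge bound are genuine invariants shared by $A$ and $B$, are routine and use only Definition~\ref{def:category-reactions} and condition~\ref{cgraph:valence}. I expect the genuine work to lie entirely in the last step: pinning down $N$ precisely, and checking that the deconstruction algorithm stays inside the domains of the partial disconnection rules, terminates, and ends exactly at $N$ with the prescribed names. The charge bookkeeping --- especially the case $\Net{U_A}>0$, where the net charge cannot be carried by $\alpha$-vertices alone and must be placed canonically on chemical vertices --- is the fiddliest point, although it stays entirely elementary.
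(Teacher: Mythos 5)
Your plan is correct and follows essentially the same route as the paper: factor $r$ as a renaming-free reaction $(U_A,U_B,\id,\id)$ followed by an isomorphism absorbed into $\iota$, then realise the former by fully disconnecting the changing part and reconnecting it, which is legitimate precisely because $U_A$ and $U_B$ share their chemical vertices, atom labels, external bonds and net charge. The only difference is presentational: the paper writes the term directly as an explicit $ICE$-form product whose implicit intermediate strips every chemical vertex to charge $+\mathbf v$ (avoiding the charge-distribution case analysis you flag as fiddly), whereas you factor through an explicitly named intermediate graph $N$ and use the dagger, $\mathtt t = \mathtt d_A;\overline{\mathtt d_B}$.
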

\begin{proof}
Observe that every reaction $r:A\rightarrow C$ factorises as
$$(U_A,U_B,\id,\id);(\eset,\eset,!,\iota),$$
where $(U_A,U_B):A\rightarrow B$ is some reaction and $\iota:B\rightarrow C$ is an isomorphism of labelled graphs. Now, we may disconnect all possible bonds inside $U_A$, and then connect all possible bonds to obtain $U_B$. The fact that $U_A$ and $U_B$ have the same atom vertices and the same net charge guarantee that this can always be done. Precisely, the sought-after term $\mathtt t:A\rightarrow B$ is then given by
\begin{flalign*}
&\prod_{\substack{u\in\Crgp{U_A} \\ v\in\Crgn{U_A}}}\left(I^{uv}\right)^{\ion(m_A(u,v))}; \prod_{u,v\in\Chem{U_A}}\prod_{i=1}^{\cov(m_A(u,v))}C^{uv}_{a_ib_i}; \\
&\prod_{u\in\Crgn{U_A}}\prod_{i=1}^{-\tau^{\crg}_A(u)}E^u_{a_ib_i}; \prod_{u\in\Chem{U_A}}\prod_{i=1}^{\mathbf v\tau^{\At}_A(u) - \max\left(\tau^{\crg}_A(u),0\right)}E^{ua_i}; \\
&\prod_{u\in\Chem{U_B}}\prod_{i=1}^{\mathbf v\tau^{\At}_B(u) - \max\left(\tau^{\crg}_B(u),0\right)}\bar E^{ua_i}; \prod_{u\in\Crgn{U_B}}\prod_{i=1}^{-\tau^{\crg}_B(u)}\bar E^u_{a_ib_i}; \\
&\prod_{u,v\in\Chem{U_B}}\prod_{i=1}^{\cov(m_B(u,v))}\bar C^{uv}_{a_ib_i}; \prod_{\substack{u\in\Crgp{U_B} \\ v\in\Crgn{U_B}}}\left(\bar I^{uv}\right)^{\ion(m_B(u,v))}; \\
&\prod_{a\in\alpha(U_A)\setminus D}R^{a\mapsto b_a}; \prod_{b\in\alpha(U_B)}R^{a_b\mapsto b}; \prod_{u\in U_B}S^u,
\end{flalign*}
where the vertex names introduced by the $C$- and $E^{<0}$-terms are chosen to so that they do not appear anywhere in $A$ or $B$, and their set is denoted by $I$. The vertex names removed by the $\bar C$- and $\bar E^{<}$-terms are chosen from $U_A\cup I$ such that the connection is well-typed: their set is denoted by $D$. Similarly, the $\alpha$-vertices appearing in the $E^{\geq 0}$- and $\bar E^{\geq 0}$-terms are chosen from $U_A\cup I$ such that the terms are well-typed. The vertex names introduced by the $R^{a\mapsto b_a}$-terms, where $a\in \alpha(U_A)\setminus D$, are chosen so that they do not appear in $A$, $B$ or $I$: their set is denoted by $R$. Finally, the vertex names removed by the $R^{a_b\mapsto b}$-terms, where $b\in\alpha(U_B)$ are chosen from $I\cup R$ in such a way that the terms are well-typed.

Note that while the term we obtain is in an $ICE$-form, it will not, in general, be in normal form.
\end{proof}

\begin{example}\label{ex:disconnection-sequence}
Following the procedure of Theorem~\ref{thm:universality}, the reaction in Example~\ref{ex:reaction} decomposes into the following sequence of (dis)connection rules:
\begin{align*}
& C^{zu}_{ab};C^{vw}_{cd};C^{ru}_{ij};C^{ru}_{nm};E^{vc};E^{wd};E^{za};E^{ub};E^{ri};E^{uj};E^{rn};E^{um}; \\
& \bar E^{vc};\bar E^{wd};\bar E^{za};\bar E^{ub};\bar E^{ri};\bar E^{uj};\bar E^{rn};\bar E^{um};\bar C^{ru}_{ij};\bar C^{ru}_{nm};\bar C^{wz}_{da};\bar C^{uv}_{bc}; \\
& S^z;S^u;S^v;S^w;S^r.
\end{align*}
The normal form of the above sequence is given by:
$$C^{zu}_{ab};C^{vw}_{cd};\bar C^{wz}_{da};\bar C^{uv}_{bc};S^r.$$
\end{example}

\section{Conclusion}\label{sec:conclusion}

We have formalised and axiomatised the retrosynthetic rules as a category, showing that there is a sound, complete and universal translation into the category of reactions -- a more familiar object to study in computational chemistry.

Universality can be thought of as a consistency result for reactions: their definition captures exactly those rearrangements of chemical graphs which result from local, chemically motivated rewrite rules. Completeness says that there is no redundancy in the representation: treating the (dis)connection rules as terms, the terms can be endowed with equations such that the terms describing the same reaction are identified. As the decomposition of a reaction into a sequence of (dis)connection rules is algorithmic, these results can be used to automatically break a reaction (or its part) into smaller components: the purpose can be, {\em inter alia}, retrosynthetic analysis or storing reaction data in a systematic way.

\subsection{Future Work}

\paragraph{Chemical Questions.} An important part of chemical data is stereochemistry, that is, spatial orientation of the molecule: many molecules of interest (like pharmaceuticals) possess chiral enantiomers (i.e.~molecules that have the same atoms and connectivity, but are mirror images of each other due to spatial orientation) which have different properties. We therefore wish to incorporate spatial data into our categorical representation. While stereochemistry is relatively straightforward to account for on the level of chemical graphs and reactions~\cite{approach-ictac}, it is unclear how to do this for the disconnection rules, as they only operate at one or two vertices at a time. A more straightforward extension of the formalism presented here would introduce energy and dynamics into the disconnection rules by quantifying how much energy each (dis)connection (in a particular context) requires to occur.

\paragraph{Computational Questions.} Given the algorithmic nature of both completeness and universality proofs, the next step is to implement both. The first algorithm would take an arbitrary reaction as an input, and output a sequence of disconnection rules representing it. The second algorithm would decide whether two terms are equal or not, implementing the normalisation procedure. Another direction for connecting this work with more standard approaches to computational chemistry would be translating our formalism to a widely used notation such as SMILES~\cite{smiles88,daylight-smiles}.

\paragraph{Mathematical Questions.} An important mathematical development is to introduce monoidal terms in the disconnection category, so as to allow parallel reactions, as well as usage of graphical calculi for monoidal categories~\cite{selinger,piedeleu-zanasi}. Another mathematical question is whether the categories $\Disc$ and $\React$ have any interesting categorical structure, apart from being dagger categories. Finally, we would like to make precise the connection between the category of reactions and double pushout rewriting~\cite{inferring-rule-composition,intermediate-level,rewriting-life21}.

\bibliographystyle{plain}
\bibliography{bibliography}

\appendix
\section{$ICE$-form}\label{sec:appendix}

Here we give the detailed inductive proof of the fact that any term has an equivalent term in an $ICE$-form (Proposition~\ref{prop:ICE-form}).

\begin{lemma}\label{lma:I-commutes}
Let $\mathtt t$ be a term such that the term $\mathtt t;I^{uv}$ is defined. Then there exists a term $\mathtt t'$ such that $\mathtt t$ and $\mathtt t'$ have the same number of $I$-terms, and one of the following holds:
\begin{enumerate}[label={(\arabic*)}]
\item $\mathtt t;I^{uv}\equiv \mathtt t'$, or
\item there is a disconnection $I^{ab}$ such that $\mathtt t;I^{uv}\equiv I^{ab};\mathtt t'$.
\end{enumerate}
\end{lemma}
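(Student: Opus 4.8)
The plan is to induct on the length of the term $\mathtt t$, pushing the rightmost $I^{uv}$ leftward past one generating term at a time and invoking the inductive hypothesis on the shorter prefix. Write $\mathtt t = \mathtt t_0; g$ where $g$ is the last generating term. We want to commute $I^{uv}$ past $g$; the equations of Figure~\ref{fig:disc-axioms} that have an $I$-term on one side are exactly the relevant tools, namely~\eqref{disc-eq:comm3}--\eqref{disc-eq:comm8} (so that $I^{wz}$ commutes with every $C$, $E$, $\bar E$ and $\bar C$ term, in the $\lesssim$/$\simeq$ direction that we need: well-typedness of the side with $I$ on the right is what we are handed), together with~\eqref{disc-eq:comm1} for $I$-vs-$I$, and the renaming/$S$ interaction equations~\eqref{disc-eq:rd1}, \eqref{disc-eq:sd1}, \eqref{disc-eq:ddbar4-2} and the identities of Proposition~\ref{prop:disc-ids}. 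The only generating terms $g$ that $I^{uv}$ cannot simply slide past are the connection $\bar I^{ab}$ (when the indices interact) and a renaming term $R^{x\mapsto y}$ or $S$-term that overlaps $\{u,v\}$.

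First I would dispatch the easy case: if $g$ is a $C$, $\bar C$, $E^{<0}$, $E^{\geq 0}$, $\bar E^{\geq 0}$, $\bar E^{<0}$, or $I$-term, then $\mathtt t_0; g; I^{uv} \equiv \mathtt t_0; I^{uv}; g$ by the appropriate commutation equation; apply the inductive hypothesis to $\mathtt t_0; I^{uv}$ to get either $\mathtt t_0; I^{uv} \equiv \mathtt t_0'$ or $\mathtt t_0; I^{uv} \equiv I^{ab}; \mathtt t_0'$, and then $\mathtt t' \coloneqq \mathtt t_0'; g$ works in both cases (the $I$-count is preserved because commutation and the inductive hypothesis both preserve it). If $g$ is a renaming or $S$-term that is disjoint from $\{u,v\}$, the same argument applies using~\eqref{disc-eq:rd1}, \eqref{disc-eq:sd1} and~\eqref{disc-eq:scomm}. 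If $g = R^{x\mapsto y}$ with $y \in \{u,v\}$, then by well-typedness of $\mathtt t; I^{uv}$ the renaming produces the name that $I^{uv}$ consumes, so I would use~\eqref{disc-eq:rd2}-style reasoning (or rather the $I$-analogue: $R^{x\mapsto u}; I^{uv} \equiv I^{xv}; R^{x\mapsto u}$, derivable from~\eqref{disc-eq:rd1} applied to the underlying partial functions, or directly from the fact that $R$ just renames a vertex) to rewrite $\mathtt t_0; R^{x\mapsto u}; I^{uv} \equiv \mathtt t_0; I^{xv}; R^{x\mapsto u}$, and recurse.

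The genuinely delicate case is $g = \bar I^{ab}$, i.e. the term immediately to the left of $I^{uv}$ is a connection of an ionic bond. If $\{a,b\}$ and $\{u,v\}$ are disjoint we use~\eqref{disc-eq:comm2} and recurse. If they overlap, we have a disconnection $I^{uv}$ directly following a connection $\bar I^{ab}$ sharing a vertex; here I expect to use~\eqref{prop:disc-ids0} (the $\bar d; d$ identity instantiated at $d = I$) to collapse $\bar I^{uv}; I^{uv} \equiv S^u; S^v; R^{\cdots}$ when the indices match exactly, which strictly decreases the number of $I$-terms --- but the statement demands the $I$-count be \emph{preserved}, so this case must in fact not arise under the hypothesis, or must be reorganised: the point is that well-typedness of $\mathtt t; I^{uv}$ together with the structure forces any such $\bar I$ to have already been accounted for, so I would argue that after the disjoint-commutation reductions one always reaches either outcome~(1) or~(2) without ever needing to annihilate an $I$ against a $\bar I$. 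Pinning down exactly this --- that the overlapping $\bar I$/$I$ collision either does not occur or can be routed around while keeping the $I$-count fixed --- is the main obstacle, and I would handle it by a careful case analysis on which of $u,v$ is shared and on the charges involved (Condition~\ref{cgraph:ion} of Definition~\ref{def:chemgraph} constrains this sharply), mirroring the treatment of the analogous collisions in the proof of Proposition~\ref{prop:normal-form-existence}.
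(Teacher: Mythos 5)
Your overall strategy (structural induction, pushing $I^{uv}$ leftward past one generator at a time using the commutation equations \eqref{disc-eq:comm3}--\eqref{disc-eq:comm8}, \eqref{disc-eq:comm1}, \eqref{disc-eq:rd1}, \eqref{disc-eq:sd1}) is exactly the paper's, and your treatment of the easy cases is correct. But the case you single out as the main obstacle --- $g=\bar I^{ab}$ overlapping $I^{uv}$ --- rests on a misreading of the statement, and your proposed resolution would fail. The quantity that must be preserved is the number of $I$-terms (i.e.\ $I$-\emph{disconnections}) of $\mathtt t$ versus $\mathtt t'$; the $I^{uv}$ being pushed is not part of that count, and outcome~(1) exists precisely to let it be absorbed. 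In the collision $\bar I^{uv};I^{uv}$ one applies \eqref{disc-eq:ddbar4-2} (not~\eqref{prop:disc-ids0}, which concerns rules with fresh-vertex subscripts $D=ab$ and does not instantiate at $I$) to get $S^u;S^v$; since $\bar I^{uv}$ contributes $0$ to the $I$-count and so does $S^u;S^v$, this lands squarely in outcome~(1) with the count preserved. Your claim that ``this case must in fact not arise'' is false --- $\mathtt t=\bar I^{uv}$ followed by $I^{uv}$ is perfectly well-typed --- so a proof organised around routing around this collision has a genuine hole where the paper simply has a two-line base case. Note also that a \emph{partial} overlap ($\bar I^{uz};I^{uv}$ with $z\neq v$) cannot occur, since by condition~\ref{cgraph:ion} of Definition~\ref{def:chemgraph} a chemical vertex has at most one ionic neighbour; so the only subcases are full coincidence (annihilate) and disjointness (commute by~\eqref{disc-eq:comm2}).

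A second, smaller slip: your case $g=R^{x\mapsto y}$ with $y\in\{u,v\}$ is vacuous, because renaming terms are only generated for $\alpha$-vertices (Definition~\ref{def:terms}) while $I^{uv}$ requires $u,v$ to be oppositely charged chemical vertices joined by an ionic bond, which no $\alpha$-vertex can be (conditions~\ref{cgraph:alpha2} and~\ref{cgraph:alpha3}). Hence \eqref{disc-eq:rd1} always applies and no auxiliary rule $R^{x\mapsto u};I^{uv}\equiv I^{xv};R^{x\mapsto u}$ is needed; in any case such a rule cannot be justified by appeal to ``the underlying partial functions'' --- the equational theory of $\Disc$ is generated only by Figure~\ref{fig:disc-axioms}, and semantic reasoning in $\React$ is not available inside this proof.
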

\begin{proof}
We proceed by induction on the structure of $\mathtt t$.
\paragraph{Base cases:}
\begin{align*}
\id;I^{uv} &\equiv I^{uv};\id, \\
S^w;I^{uv} &\equiv I^{uv};S^w, \tag{by~\eqref{disc-eq:sd1}} \\
R^{w\mapsto z};I^{uv} &\equiv I^{uv};R^{w\mapsto z}, \tag{by~\eqref{disc-eq:rd1}} \\
I^{wz};I^{uv} &\phantom{\equiv} \text{ is already in the right form} \\
C^{wz}_{ab};I^{uv} &\equiv I^{uv};C^{wz}_{ab}, \tag{by~\eqref{disc-eq:comm3}} \\
E^{w}_{ab};I^{uv} &\equiv I^{uv};E^{w}_{ab}, \tag{by~\eqref{disc-eq:comm4}} \\
E^{wz};I^{uv} &\equiv  I^{uv};E^{wz}, \tag{by~\eqref{disc-eq:comm5}} \\
\bar I^{wz};I^{uv} &\equiv \begin{cases} S^u;S^v \text{ if } w=u, z=v \\
                                    I^{uv};\bar I^{wz} \text{ otherwise,}
                      \end{cases} \tag{by~\eqref{disc-eq:ddbar4-2} and~\eqref{disc-eq:comm2}} \\
\bar C^{wz}_{ab};I^{uv} &\equiv I^{uv};\bar C^{wz}_{ab}, \tag{by~\eqref{disc-eq:comm8}} \\
\bar E^{w}_{ab};I^{uv} &\equiv I^{uv};\bar E^{w}_{ab}, \tag{by~\eqref{disc-eq:comm7}} \\
\bar E^{wz};I^{uv} &\equiv I^{uv};\bar E^{wz}. \tag{by~\eqref{disc-eq:comm6}}
\end{align*}
\paragraph{Inductive case:} Let $\mathtt t:A\rightarrow B$ and $\mathtt s:B\rightarrow C$ be terms such that the statement of the lemma holds. Suppose that the term $\mathtt t;\mathtt s;I^{uv}$ is defined. Then also the term $\mathtt s;I^{uv}$ is defined, so by the inductive hypothesis for $\mathtt s$, there is a term $\mathtt s'$ with the same number of $I$-terms as $\mathtt s$ such that either (1) $\mathtt s;I^{uv}\equiv\mathtt s'$, or (2) $\mathtt s;I^{uv}\equiv I^{ab};\mathtt s'$ for some $I$-term $I^{ab}$. In the first case, we have
$$\mathtt t;\mathtt s;I^{uv} \equiv \mathtt t;\mathtt s',$$
and since $\mathtt t;\mathtt s'$ has the same number of $I$-terms as $\mathtt t;\mathtt s$, it is the sought-after term for the inductive case satisfying (1). In the second case, we have
$$\mathtt t;\mathtt s;I^{uv} \equiv \mathtt t;I^{ab};\mathtt s',$$
so that $\mathtt t;I^{ab}$ is defined. By the inductive hypothesis for $\mathtt t$, there is a term $\mathtt t'$ with the same number of $I$-terms as $\mathtt t$ such that either (1) $\mathtt t;I^{ab}\equiv\mathtt t'$, or (2) $\mathtt t;I^{ab}\equiv I^{wz};\mathtt t'$ for some $I$-term $I^{wz}$. In the first case, we get
$$\mathtt t;\mathtt s;I^{uv} \equiv \mathtt t';\mathtt s',$$
satisfying (1) for the inductive case, as $\mathtt t';\mathtt s'$ and $\mathtt t;\mathtt s$ have the same number of $I$-terms. In the second case, we obtain
$$\mathtt t;\mathtt s;I^{uv} \equiv I^{wz};\mathtt t';\mathtt s',$$
satisfying (2) for the inductive case.
\end{proof}

\begin{corollary}\label{cor:I-form}
Any term is equal to a term of the form $\mathtt I;\mathtt t$, where $\mathtt I$ is a sequence of $I$-terms, and the term $\mathtt t$ contains no $I$-terms.
\end{corollary}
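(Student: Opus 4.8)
The plan is to prove the corollary by induction on the number $n$ of $I$-terms occurring in the given term, using Lemma~\ref{lma:I-commutes} as the engine that either deletes a single $I$-term sitting at the right end of a composite or pushes it to the left end. If $n=0$ there is nothing to do: take the $I$-sequence to be empty and $\mathtt t$ to be the term itself. For the inductive step, let $\mathtt u$ be a term with $n\geq 1$ many $I$-terms, and isolate its rightmost $I$-term, writing (after re-associating, which is legitimate since associativity holds in $\Disc$) $\mathtt u\equiv\mathtt v;I^{uv};\mathtt w$ where $\mathtt w$ contains no $I$-terms. Since $\mathtt u$ is well-typed, the prefix $\mathtt v;I^{uv}$ is again well-typed, so Lemma~\ref{lma:I-commutes} applies to $\mathtt v$ together with the $I$-term $I^{uv}$.

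By the lemma there is a term $\mathtt v'$ with the same number of $I$-terms as $\mathtt v$ --- namely $n-1$, since $\mathtt w$ is $I$-free and $I^{uv}$ accounts for the remaining one --- such that either (1) $\mathtt v;I^{uv}\equiv\mathtt v'$, or (2) $\mathtt v;I^{uv}\equiv I^{ab};\mathtt v'$ for some $I$-term $I^{ab}$. In case (1) we get $\mathtt u\equiv\mathtt v';\mathtt w$, which has exactly $n-1$ many $I$-terms, and the induction hypothesis finishes the job. In case (2) we get $\mathtt u\equiv I^{ab};\mathtt v';\mathtt w$; here $\mathtt v';\mathtt w$ again has $n-1$ many $I$-terms, so by the induction hypothesis $\mathtt v';\mathtt w\equiv\mathtt I';\mathtt t'$ with $\mathtt I'$ a sequence of $I$-terms and $\mathtt t'$ free of $I$-terms, whence $\mathtt u\equiv I^{ab};\mathtt I';\mathtt t'$ is of the required shape, with $I^{ab};\mathtt I'$ a sequence of $I$-terms.

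The argument is essentially a bubble sort on the $I$-terms, so no individual step is hard; the one point that needs attention is the bookkeeping of the $I$-count. The reason the induction is well-founded is that Lemma~\ref{lma:I-commutes} returns a term $\mathtt v'$ whose $I$-count matches that of $\mathtt v$ rather than that of $\mathtt v;I^{uv}$, so each application of the lemma strictly decreases the number of $I$-terms that still need to be moved; without this feature (which itself comes from the possibility that $\bar I^{wz};I^{uv}$ collapses to $S^u;S^v$) the recursion would not obviously terminate. A secondary, very minor point is the observation that a contiguous prefix of a well-typed term is again well-typed, which is immediate from the typing discipline of Definition~\ref{def:terms}, since all the intermediate objects required to re-parenthesise $\mathtt u$ as $(\mathtt v;I^{uv});\mathtt w$ already exist.
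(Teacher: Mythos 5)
Your proof is correct and is precisely the argument the paper intends: the corollary is stated without proof as an immediate consequence of Lemma~\ref{lma:I-commutes}, and your induction on the number of $I$-terms, peeling off the rightmost one and invoking the lemma's dichotomy, is the standard way to discharge it. You also correctly identify the one delicate point — that the lemma's output $\mathtt v'$ matches the $I$-count of $\mathtt v$ rather than of $\mathtt v;I^{uv}$, which is what makes the induction terminate even when an $I$-term is absorbed.
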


\begin{lemma}\label{lma:C-commutes}
Let $\mathtt t$ be a term not containing any $I$-terms such that the term $\mathtt t;C^{uv}_{ab}$ is defined. Then there exists a term $\mathtt t'$ not containing any $I$-terms such that $\mathtt t$ and $\mathtt t'$ have the same number of $C$-terms, and one of the following holds:
\begin{enumerate}[label={(\arabic*)}]
\item $\mathtt t;C^{uv}_{ab}\equiv\mathtt t'$, or
\item there is a disconnection $C^{wz}_{cd}$ such that $\mathtt t;C^{uv}_{ab}\equiv C^{wz}_{cd};\mathtt t'$.
\end{enumerate}
\end{lemma}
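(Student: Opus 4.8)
The plan is to follow the proof of Lemma~\ref{lma:I-commutes} almost verbatim, replacing ``$I$-term'' throughout by ``$C$-term'' and inducting on the structure of $\mathtt t$. In the base case $\mathtt t$ is a generating term of Definition~\ref{def:terms}, and since $\mathtt t$ has no $I$-term the cases are $\id$, $S^w$, $R^{w\mapsto z}$, $C^{wz}_{cd}$, $E^{w}_{cd}$, $E^{wz}$, $\bar E^{w}_{cd}$, $\bar E^{wz}$, $\bar C^{wz}_{cd}$ and $\bar I^{wz}$. If $\mathtt t = C^{wz}_{cd}$ we are already in case~(2). Otherwise the goal is to commute $C^{uv}_{ab}$ leftward past the single generator, landing in case~(2): $\id$ is immediate; $S^w$ uses~\eqref{disc-eq:sd1}; $R^{w\mapsto z}$ uses~\eqref{disc-eq:rd1}, together with~\eqref{disc-eq:rd2},~\eqref{disc-eq:rd3} and the identities~\eqref{prop:disc-ids1}--\eqref{prop:disc-ids3} when the renamed name clashes with one of $u,v,a,b$; $E^{w}_{cd}$ uses~\eqref{disc-eq:comm9}; $E^{wz}$ uses~\eqref{disc-eq:comm10}; $\bar E^{wz}$ uses~\eqref{disc-eq:comm11}; and $\bar I^{wz}$ uses the dagger of~\eqref{disc-eq:comm8}. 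Every rewrite invoked here introduces at most $S$- and $R$-terms besides the $C$-term being moved, so the $C$-count is preserved and no $I$-term is created.

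Two base cases deserve a remark. When $\mathtt t = \bar C^{wz}_{cd}$ and $\{w,z\}\neq\{u,v\}$, the (dis)connections act on distinct bonds and commute via~\eqref{disc-eq:comm2} (renaming $c,d$ off $a,b$ by~\eqref{disc-eq:rd3} if needed), which is case~(2); when $\{w,z\}=\{u,v\}$, equations~\eqref{disc-eq:cs} and~\eqref{prop:disc-ids0} (equivalently~\eqref{disc-eq:ddbar4-2}) rewrite $\bar C^{wz}_{cd};C^{uv}_{ab}$ to a composite of $S$- and $R$-terms, which contains no $C$-term and hence matches the $C$-count of $\mathtt t = \bar C^{wz}_{cd}$, so we land in case~(1). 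The remaining case $\mathtt t = \bar E^{w}_{cd}$ is the only one not covered by a commutation axiom: here I would derive the needed identity by conjugating~\eqref{disc-eq:comm9} with $\bar E^{w}_{cd}$, using $E^{w}_{cd};\bar E^{w}_{cd}\simeq S^w$ (by~\eqref{disc-eq:ddbar4}), $\bar E^{w}_{cd};E^{w}_{cd}\simeq S^w;S^c;S^d$ (by~\eqref{disc-eq:ddbar4-2}), and the absorption rules~\eqref{disc-eq:sd1},~\eqref{disc-eq:sd2},~\eqref{prop:disc-sabsorb} (and their daggers) to absorb the spurious $S$-terms; this yields $\bar E^{w}_{cd};C^{uv}_{ab}\equiv C^{uv}_{ab};\bar E^{w}_{cd}$, again case~(2).

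The inductive step copies Lemma~\ref{lma:I-commutes} exactly. Write $\mathtt t = \mathtt t_1;\mathtt t_2$; since subterms of an $I$-free term are $I$-free, the inductive hypothesis is available for $\mathtt t_1$ and $\mathtt t_2$. Definedness of $\mathtt t;C^{uv}_{ab}$ gives definedness of $\mathtt t_2;C^{uv}_{ab}$; the inductive hypothesis for $\mathtt t_2$ either gives $\mathtt t_2;C^{uv}_{ab}\equiv\mathtt t_2'$, whence $\mathtt t;C^{uv}_{ab}\equiv\mathtt t_1;\mathtt t_2'$ settles case~(1), or gives $\mathtt t_2;C^{uv}_{ab}\equiv C^{wz}_{cd};\mathtt t_2'$, whence $\mathtt t_1;C^{wz}_{cd}$ is defined and a second application of the inductive hypothesis (to $\mathtt t_1$) settles case~(1) or case~(2). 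In every branch the number of $C$-terms adds up and no $I$-term appears. I expect the genuine work to be concentrated in the base case $\mathtt t = \bar E^{w}_{cd}$, which has no directly applicable axiom, and in the routine but fiddly bookkeeping of fresh vertex names in the $R$- and $\bar C$-cases, resolved by interleaving renaming equations much as in the proof of Lemma~\ref{lma:renaming-form}.
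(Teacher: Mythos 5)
Your overall strategy — structural induction mirroring Lemma~\ref{lma:I-commutes}, with a case analysis over the generators and the same inductive step — is exactly the paper's, and most of your base cases match the paper's proof line for line. However, there is a concrete gap in the base case $\mathtt t = \bar E^{w}_{cd}$. You claim this case ``has no directly applicable axiom'' and propose to derive the clean commutation $\bar E^{w}_{cd};C^{uv}_{ab}\equiv C^{uv}_{ab};\bar E^{w}_{cd}$ by conjugating~\eqref{disc-eq:comm9}. Both halves of this are off. First, axiom~\eqref{disc-eq:rd4} \emph{is} directly applicable (it is stated for $d,h$ ranging over $\{E,C,I\}$, so it covers a $C$-disconnection against an $\bar E^{<0}$-connection), and it is precisely the tool the paper uses here, yielding $\bar E^{w}_{cd};C^{uv}_{ab}\equiv C^{uv}_{ij};\bar E^{w}_{cd};R^{i\mapsto a};R^{j\mapsto b}$. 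Second, and more importantly, the identity you aim for is not well-typed in general: $\bar E^{w}_{cd}$ removes the vertex names $c,d$ before $C^{uv}_{ab}$ introduces the fresh names $a,b$, so nothing prevents $\{a,b\}\cap\{c,d\}\neq\eset$, in which case $C^{uv}_{ab}$ cannot be applied \emph{before} $\bar E^{w}_{cd}$. This is exactly why~\eqref{disc-eq:rd4} introduces fresh names $i,j$ and leaves trailing renaming terms rather than performing a bare swap; your conjugation argument, even if pushed through, would only establish the swap in the clash-free case.

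A similar, smaller issue affects your $\bar C^{wz}_{cd}$ case with $\{w,z\}\neq\{u,v\}$: you invoke~\eqref{disc-eq:comm2}, but that axiom is stated for superscript-only terms $d^{uv}$ (the paper uses it only for $E^{uv}$ and $I^{uv}$), and it does not account for the subscript names that $C$ and $\bar C$ introduce and remove. The paper again resorts to~\eqref{disc-eq:rd4}, obtaining $C^{uv}_{ij};\bar C^{wz}_{cd};R^{i\mapsto a};R^{j\mapsto b}$. Neither defect threatens the architecture of your proof — the diagonal cases ($\{w,z\}=\{u,v\}$), the $R$-term bookkeeping, and the inductive step are all handled correctly — but as written the two base cases above rely on identities that are either unavailable or not well-typed; replacing them with applications of~\eqref{disc-eq:rd4} repairs the argument and recovers the paper's proof.
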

\begin{proof}
By induction on $\mathtt t$.
\paragraph{Base cases:}
\begin{align*}
\id;C^{uv}_{ab} &\equiv C^{uv}_{ab};\id, \\
S^w;C^{uv}_{ab} &\equiv C^{uv}_{ab};S^w, \tag{by~\eqref{disc-eq:sd1}} \\
R^{w\mapsto z};C^{uv}_{ab} &\equiv \begin{cases} C^{uv}_{ab};R^{w\mapsto z} \text{ if } w\neq\{a,b\}, \\
                                            C^{uv}_{kb};R^{a\mapsto z};R^{k\mapsto a} \text{ if } w=a, \\
                                            C^{uv}_{ak};R^{b\mapsto z};R^{k\mapsto b} \text{ if } w=b,
                              \end{cases} \tag{by~\eqref{disc-eq:rd1} and~\eqref{disc-eq:rd3}} \\
C^{wz}_{cd};C^{uv}_{ab} &\phantom{\equiv} \text{ is already in the right form,} \\
E^{w}_{cd};C^{uv}_{ab} &\equiv C^{uv}_{ab};E^{w}_{cd}, \tag{by~\eqref{disc-eq:comm9}} \\
E^{wz};C^{uv}_{ab} &\equiv C^{uv}_{ab};E^{wz}, \tag{by~\eqref{disc-eq:comm10}} \\
\bar I^{wz};C^{uv}_{ab} &\equiv C^{uv}_{ab};\bar I^{wz}, \tag{by~\eqref{disc-eq:comm8}} \\
\bar C^{wz}_{cd};C^{uv}_{ab} &\equiv \begin{cases} S^w;S^z;R^{c\mapsto j};R^{d\mapsto b};R^{j\mapsto a} \text{ if } w=u, z=v, \\
                                                   S^w;S^z;R^{c\mapsto j};R^{d\mapsto a};R^{j\mapsto b} \text{ if } w=v, z=u, \\
                                                   C^{uv}_{ij};\bar C^{wz}_{cd};R^{i\mapsto a};R^{j\mapsto b} \text{ otherwise,}
                                     \end{cases} \tag{by~\eqref{prop:disc-ids0} and~\eqref{disc-eq:rd4}} \\
\bar E^{w}_{cd};C^{uv}_{ab} &\equiv C^{uv}_{ij};\bar E^{w}_{cd};R^{i\mapsto a};R^{j\mapsto b}, \tag{by~\eqref{disc-eq:rd4}} \\
\bar E^{wz};C^{uv}_{ab} &\equiv C^{uv}_{ab};\bar E^{wz}. \tag{by~\eqref{disc-eq:comm11}}
\end{align*}
The inductive case is very similar to that of Lemma~\ref{lma:I-commutes}.
\end{proof}

\begin{corollary}\label{cor:C-form}
Any term is equal to a term of the form $\mathtt I;\mathtt C;\mathtt t$, where $\mathtt I$ and $\mathtt C$ are sequences of $I$-terms and $C$-terms, and the term $\mathtt t$ contains no $I$-terms or $C$-terms.
\end{corollary}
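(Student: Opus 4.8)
The statement to prove is Corollary~\ref{cor:C-form}: \emph{Any term is equal to a term of the form $\mathtt I;\mathtt C;\mathtt t$, where $\mathtt I$ and $\mathtt C$ are sequences of $I$-terms and $C$-terms, and the term $\mathtt t$ contains no $I$-terms or $C$-terms.}

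This is a routine consequence of the two lemmas already in place. My plan is to chain Corollary~\ref{cor:I-form} with Lemma~\ref{lma:C-commutes}. First I would invoke Corollary~\ref{cor:I-form} to rewrite the given term as $\mathtt I;\mathtt t_0$, where $\mathtt I$ is a sequence of $I$-terms and $\mathtt t_0$ contains no $I$-terms at all. The task then reduces to showing that $\mathtt t_0$ — a term with no $I$-terms — is equal to a term of the form $\mathtt C;\mathtt t$ with $\mathtt C$ a sequence of $C$-terms and $\mathtt t$ containing neither $I$- nor $C$-terms.

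For this reduced claim I would argue by induction on the number $k$ of $C$-terms appearing in $\mathtt t_0$. If $k=0$ there is nothing to do (take $\mathtt C$ empty). For the inductive step, pick the rightmost $C$-term occurrence and write $\mathtt t_0 \equiv \mathtt p; C^{uv}_{ab}; \mathtt q$, where $\mathtt q$ contains no $C$-terms (and, being a subterm of $\mathtt t_0$, no $I$-terms either). Since $\mathtt q$ has no $I$-terms, Lemma~\ref{lma:C-commutes} does not directly apply to $\mathtt q$ from the left; instead I apply it to the prefix $\mathtt p$: because $\mathtt p$ contains no $I$-terms and $\mathtt p; C^{uv}_{ab}$ is defined, Lemma~\ref{lma:C-commutes} yields a term $\mathtt p'$ with no $I$-terms and the same number of $C$-terms as $\mathtt p$, such that either $\mathtt p; C^{uv}_{ab}\equiv \mathtt p'$ or $\mathtt p; C^{uv}_{ab}\equiv C^{wz}_{cd};\mathtt p'$. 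In the first case $\mathtt t_0 \equiv \mathtt p';\mathtt q$ has strictly fewer $C$-terms, so the inductive hypothesis finishes the job. In the second case $\mathtt t_0 \equiv C^{wz}_{cd};\mathtt p';\mathtt q$; here $\mathtt p';\mathtt q$ has one fewer $C$-term than $\mathtt t_0$ and no $I$-terms, so by the inductive hypothesis it equals $\mathtt C';\mathtt t$ with $\mathtt C'$ a sequence of $C$-terms and $\mathtt t$ free of both $I$- and $C$-terms; prepending $C^{wz}_{cd}$ gives the desired form with $\mathtt C = C^{wz}_{cd};\mathtt C'$.

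There is no real obstacle here — the content is entirely in Lemmas~\ref{lma:I-commutes} and~\ref{lma:C-commutes}, and this corollary is just bookkeeping. The one point requiring a little care is making sure the induction measure (number of $C$-terms) genuinely decreases in each branch, which it does because Lemma~\ref{lma:C-commutes} preserves the $C$-count of the prefix while we peel off the distinguished $C$-term; and that applying the lemma to $\mathtt p$ rather than to a term-with-$I$-terms keeps us within its hypotheses, since $\mathtt p$ inherits the property of having no $I$-terms from $\mathtt t_0$. One could alternatively present this directly as an induction on term structure mirroring Lemma~\ref{lma:I-commutes}, but reusing the lemma as a black box is cleaner.
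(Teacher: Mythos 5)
Your proof is correct and matches the paper's (implicit) argument: the paper derives Corollary~\ref{cor:C-form} exactly by combining Corollary~\ref{cor:I-form} with repeated applications of Lemma~\ref{lma:C-commutes}, peeling $C$-terms off to the left one at a time. The induction on the number of $C$-term occurrences that you spell out, including the check that the count strictly decreases and that the prefix inherits the ``no $I$-terms'' hypothesis, is precisely the intended bookkeeping.
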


\begin{lemma}\label{lma:Eneg-commutes}
Let $\mathtt t$ be a term not containing any $I$- or $C$-terms such that the term $\mathtt t;E^{u}_{ab}$ is defined. Then there exists a term $\mathtt t'$ not containing any $I$- or $C$-terms such that $\mathtt t$ and $\mathtt t'$ have the same number of $E^{<0}$-terms, and one of the following holds:
\begin{enumerate}[label={(\arabic*)}]
\item $\mathtt t;E^{u}_{ab}\equiv\mathtt t'$, or
\item there is a disconnection $E^{w}_{cd}$ such that $\mathtt t;E^{u}_{ab}\equiv E^{w}_{cd};\mathtt t'$.
\end{enumerate}
\end{lemma}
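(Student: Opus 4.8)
The plan is to mirror the structure of the two preceding lemmas (Lemma~\ref{lma:I-commutes} and Lemma~\ref{lma:C-commutes}) essentially verbatim, since this is the next step in the inductive staircase that builds up to the $ICE$-form. I would proceed by induction on the structure of the term $\mathtt t$, which by hypothesis contains no $I$- or $C$-terms, so the generating terms to handle in the base cases are: $\id$, $S^w$, $R^{w\mapsto z}$, the disconnections $E^{wz}$ (i.e.~$E^{\geq 0}$) and $E^w_{cd}$ (i.e.~$E^{<0}$), and the connections $\bar I^{wz}$, $\bar C^{wz}_{cd}$, $\bar E^{wz}$, $\bar E^w_{cd}$. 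For each of these I would produce the equation witnessing that $E^u_{ab}$ can be pushed to the left past it, tracking that the number of $E^{<0}$-terms is unchanged.

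The base cases split into three flavours. First, the easy commutations: $\id$ and $S^w$ commute with $E^u_{ab}$ by~\eqref{disc-eq:sd1}; $\bar I^{wz}$ commutes by~\eqref{disc-eq:comm7} (read with $d = E^u_{ab}$); $\bar C^{wz}_{cd}$ commutes by~\eqref{disc-eq:comm11} or~\eqref{disc-eq:comm13} as appropriate; $\bar E^{wz}$ commutes by~\eqref{disc-eq:comm13}; and $E^{wz}$ commutes by~\eqref{disc-eq:comm12}. Second, the renaming term $R^{w\mapsto z}$: as in Lemma~\ref{lma:C-commutes}, if $w\notin\{a,b\}$ it commutes by~\eqref{disc-eq:rd1}, and if $w=a$ (resp.~$w=b$) one uses~\eqref{disc-eq:rd1} together with~\eqref{disc-eq:rd3} to rewrite $R^{w\mapsto z};E^u_{ab}$ as $E^u_{kb};R^{a\mapsto z};R^{k\mapsto a}$ (resp.~the symmetric form), introducing a fresh name $k$. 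Third, the genuinely interesting case is the connection $\bar E^w_{cd}$ meeting $E^u_{ab}$: if $w\neq u$ there is no interaction and a $\bar d$--$d$ commutation applies (here we can commute via the generic connection--disconnection equation~\eqref{disc-eq:comm2}), putting it in form~(2); if $w = u$, then $\bar E^u_{cd};E^u_{ab}$ is a connection immediately followed by the matching disconnection, so equation~\eqref{disc-eq:ddbar4-2} (together with~\eqref{prop:disc-ids0} and the $R$/$S$-absorption equations, exactly as in the $\bar C;C$ case of Lemma~\ref{lma:C-commutes}) rewrites it to a product of $S$- and $R$-terms, landing in form~(1) with one fewer $E^{<0}$-term on each side --- but since we are only asked that $\mathtt t$ and $\mathtt t'$ have the \emph{same} number of $E^{<0}$-terms and the left-hand side $\mathtt t;E^u_{ab}$ has one more, this is consistent, matching how the $\bar I^{wz};I^{uv}$ and $\bar C^{wz}_{cd};C^{uv}_{ab}$ cancellation cases were handled in the previous two lemmas. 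The inductive case for a composite $\mathtt t;\mathtt s$ is then identical, word for word, to the inductive step of Lemma~\ref{lma:I-commutes}: apply the hypothesis to $\mathtt s$, then if a disconnection $E^w_{cd}$ is produced on the left, apply the hypothesis to $\mathtt t$ against that disconnection, and chain the two conclusions.

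The main obstacle --- such as it is --- is bookkeeping rather than mathematics: one must be careful that the new $E^{<0}$-term appearing in case~(2) is genuinely of $E^{<0}$ type (it is, since $C^{wz}_{cd}$ and $E^u_{ab}$ commuting never changes the rule class) and that none of the commutations secretly reintroduce an $I$- or $C$-term into $\mathtt t'$ (they do not, since every equation used only rearranges $E$-, $\bar E$-, $\bar I$-, $\bar C$-, $S$- and $R$-terms, or cancels an $\bar E^u;E^u$ pair into $S$/$R$-terms). I would therefore state and prove the lemma in exactly the template of Lemma~\ref{lma:C-commutes}, listing the base-case equations in an \texttt{align*} block with the justifying equation numbers as \texttt{\textbackslash tag}s, and then write ``the inductive case is very similar to that of Lemma~\ref{lma:I-commutes}''. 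A corollary analogous to Corollary~\ref{cor:C-form} would immediately follow, stating that any term is equal to one of the form $\mathtt I;\mathtt C;\mathtt E^{<0};\mathtt t$ where $\mathtt t$ contains no $I$-, $C$- or $E^{<0}$-terms.
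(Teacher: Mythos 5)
Your overall strategy is exactly the paper's: induction on the structure of $\mathtt t$, a base case for each generator that can still occur, and the same inductive step as in Lemma~\ref{lma:I-commutes}. Most of your base cases also match, including the three-way split for $R^{w\mapsto z}$ and the use of~\eqref{prop:disc-ids0} for the cancelling case $\bar E^u_{cd};E^u_{ab}$. However, there is a concrete gap in the two remaining connection cases. You claim that $\bar C^{wz}_{cd}$ commutes plainly with $E^u_{ab}$ via~\eqref{disc-eq:comm11} or~\eqref{disc-eq:comm13}, and that $\bar E^w_{cd};E^u_{ab}$ with $w\neq u$ commutes via~\eqref{disc-eq:comm2}. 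None of these equations has the right shape: \eqref{disc-eq:comm11} relates $\bar E^{uv}$ to $C^{wz}_{cd}$, \eqref{disc-eq:comm13} relates $\bar E^{uv}$ to $E^w_{cd}$, and~\eqref{disc-eq:comm2} is stated only for the superscript-only rules $\bar d^{uv};d^{wz}$ (the $I$- and $E^{\geq 0}$-shaped generators), not for a connection with subscripts $cd$ against a disconnection with subscripts $ab$.

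The deeper reason a plain commutation cannot work here is well-typedness: if $\bar C^{wz}_{cd};E^u_{ab}$ is defined, then $a$ and $b$ are fresh for the graph obtained \emph{after} $\bar C^{wz}_{cd}$ has removed $c$ and $d$, but they need not be fresh for the graph before it --- $a$ or $b$ may coincide with $c$ or $d$ --- so $E^u_{ab};\bar C^{wz}_{cd}$ can fail to be a term. The correct move, which the paper makes, is equation~\eqref{disc-eq:rd4}: one introduces fresh dummy subscripts and trailing renamings, giving $\bar C^{wz}_{cd};E^{u}_{ab} \equiv E^{u}_{ij};\bar C^{wz}_{cd};R^{i\mapsto a};R^{j\mapsto b}$ and likewise $\bar E^{w}_{cd};E^{u}_{ab} \equiv E^u_{ij};\bar E^w_{cd};R^{i\mapsto a};R^{j\mapsto b}$ when $w\neq u$. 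These still land in your form~(2) with the same count of $E^{<0}$-terms (the extra $R$-terms are harmless for the induction), so the lemma goes through once these two cases are repaired; the rest of your proposal, including the corollary, is fine.
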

\begin{proof}
By induction on $\mathtt t$.
\paragraph{Base cases:}
\begin{align*}
\id;E^{u}_{ab} &\equiv E^{u}_{ab};\id, \\
S^w;E^{u}_{ab} &\equiv E^{u}_{ab};S^w, \tag{by~\eqref{disc-eq:sd1}} \\
R^{w\mapsto z};E^{u}_{ab} &\equiv \begin{cases} E^{u}_{ab};R^{w\mapsto z} \text{ if } w\neq\{a,b\}, \\
                                           E^{u}_{kb};R^{a\mapsto z};R^{k\mapsto a} \text{ if } w=a, \\
                                           E^{u}_{ak};R^{b\mapsto z};R^{k\mapsto b} \text{ if } w=b,
                             \end{cases} \tag{by~\eqref{disc-eq:rd1} and~\eqref{disc-eq:rd3}} \\
E^{w}_{cd};E^{u}_{ab} &\phantom{\equiv} \text{ is already in the right form,} \\
E^{wz};E^{u}_{ab} &\equiv E^{u}_{ab};E^{wz}, \tag{by~\eqref{disc-eq:comm12}} \\
\bar I^{wz};E^{u}_{ab} &\equiv E^{u}_{ab};\bar I^{wz}, \tag{by~\eqref{disc-eq:comm7}} \\
\bar C^{wz}_{cd};E^{u}_{ab} &\equiv E^{u}_{ij};\bar C^{wz}_{cd};R^{i\mapsto a};R^{j\mapsto b}, \tag{by~\eqref{disc-eq:rd4}} \\
\bar E^{w}_{cd};E^{u}_{ab} &\equiv \begin{cases} S^w;R^{c\mapsto j};R^{d\mapsto b};R^{j\mapsto a} \text{ if } w=u, \\
                                                 E^u_{ij};\bar E^u_{cd};R^{i\mapsto a};R^{j\mapsto b} \text{ otherwise},
                                   \end{cases} \tag{by~\eqref{prop:disc-ids0} and~\eqref{disc-eq:rd4}} \\
\bar E^{wz};E^u_{ab} &\equiv E^u_{ab};\bar E^{wz}. \tag{by~\eqref{disc-eq:comm13}}
\end{align*}
The inductive case is very similar to that of Lemma~\ref{lma:I-commutes}.
\end{proof}

\begin{corollary}\label{cor:Eneg-form}
Any term is equal to a term of the form $\mathtt I;\mathtt C;\mathtt E^{<0};\mathtt t$, where $\mathtt I$, $\mathtt C$ and $\mathtt E^{<0}$ are sequences of $I$-, $C$-, and $E^{<0}$-terms, and the term $\mathtt t$ contains no $I$-, $C$-, or $E^{<0}$-terms.
\end{corollary}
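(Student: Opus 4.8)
The plan is to obtain Corollary~\ref{cor:Eneg-form} from Lemma~\ref{lma:Eneg-commutes} exactly as Corollary~\ref{cor:I-form} follows from Lemma~\ref{lma:I-commutes} and Corollary~\ref{cor:C-form} from Lemma~\ref{lma:C-commutes}. First I would invoke Corollary~\ref{cor:C-form} to rewrite an arbitrary term as $\mathtt I;\mathtt C;\mathtt t_0$, where $\mathtt t_0$ contains no $I$- or $C$-terms. Since the prefix $\mathtt I;\mathtt C$ will never be disturbed, it suffices to prove: any term $\mathtt t_0$ free of $I$- and $C$-terms is equal, under $\equiv$, to a term $\mathtt E^{<0};\mathtt t'$ with $\mathtt t'$ free of $I$-, $C$- and $E^{<0}$-terms.

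I would prove this by induction on $n$, the number of $E^{<0}$-terms occurring in $\mathtt t_0$. The base case $n=0$ is immediate with $\mathtt t'=\mathtt t_0$. For $n\geq 1$, write $\mathtt t_0=\mathtt p;E^{u}_{ab};\mathtt q$ where $E^{u}_{ab}$ is the leftmost $E^{<0}$-term of $\mathtt t_0$; then $\mathtt p$ is a prefix of $\mathtt t_0$ containing no $I$-, $C$- or $E^{<0}$-terms, the term $\mathtt p;E^{u}_{ab}$ is well-typed, and $\mathtt q$ contains exactly $n-1$ of the $E^{<0}$-terms. Applying Lemma~\ref{lma:Eneg-commutes} to $\mathtt p;E^{u}_{ab}$ yields a term $\mathtt p'$ still free of $I$- and $C$-terms and with the same number of $E^{<0}$-terms as $\mathtt p$ (namely none), such that either (1) $\mathtt p;E^{u}_{ab}\equiv\mathtt p'$, or (2) $\mathtt p;E^{u}_{ab}\equiv E^{w}_{cd};\mathtt p'$ for some $E^{<0}$-term $E^{w}_{cd}$. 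In case (1), $\mathtt t_0\equiv\mathtt p';\mathtt q$, a term with no $I$- or $C$-terms and only $n-1$ many $E^{<0}$-terms, so the inductive hypothesis applies directly. In case (2), $\mathtt t_0\equiv E^{w}_{cd};(\mathtt p';\mathtt q)$, and $\mathtt p';\mathtt q$ again has no $I$- or $C$-terms and exactly $n-1$ many $E^{<0}$-terms; the inductive hypothesis gives $\mathtt p';\mathtt q\equiv\mathtt E^{<0};\mathtt t'$ with $\mathtt t'$ free of $I$-, $C$- and $E^{<0}$-terms, whence $\mathtt t_0\equiv(E^{w}_{cd};\mathtt E^{<0});\mathtt t'$. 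Prepending $\mathtt I;\mathtt C$ shows the original term equals $\mathtt I;\mathtt C;\mathtt E^{<0};\mathtt t'$, as required.

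I do not anticipate a genuine obstacle. The only points needing care are that Lemma~\ref{lma:Eneg-commutes} preserves the absence of $I$- and $C$-terms (so the $\mathtt I;\mathtt C$ prefix stays untouched and the lemma remains applicable at each step), and that the number of $E^{<0}$-terms is a well-founded measure which strictly decreases at each step of the induction. The same template---pull the next class of generators leftward using its dedicated commutation lemma, inducting on the count of that class---is then iterated for $E^{\geq 0}$, then $\bar E^{\geq 0}$, $\bar E^{<0}$, $\bar C$ and finally $\bar I$, to assemble the full $ICE$-form of Proposition~\ref{prop:ICE-form}.
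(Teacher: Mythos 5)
Your argument is correct and is exactly the intended justification: the paper leaves the proof of Corollary~\ref{cor:Eneg-form} implicit, and your reconstruction---start from Corollary~\ref{cor:C-form}, then induct on the number of $E^{<0}$-terms, pulling the leftmost one to the front via Lemma~\ref{lma:Eneg-commutes} (noting that the lemma preserves the absence of $I$- and $C$-terms and that case~(1) strictly decreases the count)---is the same template the authors describe in the proof sketch of Proposition~\ref{prop:ICE-form}. No gaps.
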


\begin{lemma}\label{lma:Enonneg-commutes}
Let $\mathtt t$ be a term not containing any $I$-, $C$-, or $E^{<0}$-terms such that the term $\mathtt t;E^{uv}$ is defined. Then there exists a term $\mathtt t'$ not containing any $I$-, $C$-, or $E^{<0}$-terms such that $\mathtt t$ and $\mathtt t'$ have the same number of $E^{\geq 0}$-terms, and one of the following holds:
\begin{enumerate}[label={(\arabic*)}]
\item $\mathtt t;E^{uv}\equiv\mathtt t'$, or
\item there is a disconnection $E^{wz}$ such that $\mathtt t;E^{uv}\equiv E^{wz};\mathtt t'$.
\end{enumerate}
\end{lemma}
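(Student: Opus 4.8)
The plan is to reproduce, essentially verbatim, the pattern of the three preceding lemmas (Lemmas~\ref{lma:I-commutes}, \ref{lma:C-commutes} and~\ref{lma:Eneg-commutes}): I would argue by structural induction on $\mathtt t$, with the inductive step for a composite $\mathtt t;\mathtt s$ handled exactly as in Lemma~\ref{lma:I-commutes} (first apply the inductive hypothesis to $\mathtt s$; if an $E^{\geq 0}$-term is produced at the front of the right factor, apply the inductive hypothesis to $\mathtt t$ together with that term; in all branches the number of $E^{\geq 0}$-terms is unchanged). All of the real content is in the base cases.

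In the base cases, $\mathtt t$ is a single generator of one of the eight kinds allowed by the hypothesis (every generator except $I$-, $C$- and $E^{<0}$-terms): $\id$, $S^w$, $R^{w\mapsto z}$, $E^{wz}$, $\bar E^{wz}$, $\bar E^w_{cd}$, $\bar C^{wz}_{cd}$, $\bar I^{wz}$. The term $\id;E^{uv}$ is already of the required form; $S^w;E^{uv}\equiv E^{uv};S^w$ by~\eqref{disc-eq:sd1}; and for $R^{w\mapsto z}$ the renaming term commutes with $E^{uv}$ by~\eqref{disc-eq:rd1} when it does not interact with the superscripts, by~\eqref{disc-eq:rd2} when its target is the $\alpha$-superscript $v$, and otherwise by first splitting off an auxiliary fresh name exactly as in the analogous base cases of Lemmas~\ref{lma:C-commutes} and~\ref{lma:Eneg-commutes}. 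The case $\mathtt t=E^{wz}$ needs nothing, as $E^{wz};E^{uv}$ is already of the form in item~(2).

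The substantive base cases are the four connections. The key simplification compared with Lemma~\ref{lma:Eneg-commutes} is that $E^{uv}$ introduces no new vertices, so no fresh-name bookkeeping of type~\eqref{disc-eq:rd4} is needed: a connection can interact with $E^{uv}$ only through the chemical superscript $u$. A short analysis of the charge and valence side-conditions shows that $\bar E^w_{cd};E^{uv}$, $\bar C^{wz}_{cd};E^{uv}$ and $\bar I^{wz};E^{uv}$ are well-typed only when $u$ is disjoint from the superscripts of the connection, in which case they commute; the relevant identities are the daggers of~\eqref{disc-eq:comm13},~\eqref{disc-eq:comm11} and~\eqref{disc-eq:comm6}, available since $\equiv$ is closed under the dagger of $\Disc$. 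For $\bar E^{wz};E^{uv}$ there are two possibilities: either $\bar E^{wz}\neq\bar E^{uv}$, and then~\eqref{disc-eq:comm2} moves $E^{uv}$ to the front (item~(2)); or $\bar E^{wz}=\bar E^{uv}$, and then~\eqref{disc-eq:ddbar4-2} collapses the pair to $S^u;S^v$ (item~(1)). In every case the resulting $\mathtt t'$ keeps the $E^{\geq 0}$-count of $\mathtt t$ and remains free of $I$-, $C$- and $E^{<0}$-terms.

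The main obstacle I anticipate is precisely the charge and valence bookkeeping in the connection base cases: for each of $\bar E^{wz}$, $\bar E^w_{cd}$, $\bar C^{wz}_{cd}$, $\bar I^{wz}$ one must check, case by case on the signs of the relevant charges, that after applying $E^{uv}$ (respectively after applying $E^{uv}$ first and then the connection) the result is still a chemical graph, so as to decide which of the two alternatives of the statement applies and to rule out spurious interactions. A secondary, more routine annoyance is tracking the subcases of the renaming base case in which the source or target of $R^{w\mapsto z}$ collides with $u$ or $v$; but this is dispatched by the same fresh-name trick already used in Lemmas~\ref{lma:C-commutes} and~\ref{lma:Eneg-commutes}.
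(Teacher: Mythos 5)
Your proposal is essentially the paper's own proof: the same structural induction with the inductive step copied from Lemma~\ref{lma:I-commutes}, and the same base cases discharged by the same equations (\eqref{disc-eq:sd1}, \eqref{disc-eq:rd1}, \eqref{disc-eq:rd2}, and the daggers of \eqref{disc-eq:comm6}, \eqref{disc-eq:comm11}, \eqref{disc-eq:comm13}, plus \eqref{disc-eq:ddbar4-2}/\eqref{disc-eq:comm2} for $\bar E^{wz};E^{uv}$). One small inaccuracy: your claim that $\bar C^{wz}_{cd};E^{uv}$ and $\bar I^{wz};E^{uv}$ are well-typed only when $u$ avoids the connection's superscripts is false (e.g.\ $\bar I^{wz};E^{wv}$ can be well-typed, since $w$ is positively, not negatively, charged), and likewise the extra ``fresh name'' subcase for $R^{w\mapsto z}$ is vacuous because $E^{uv}$ has no subscripts --- but neither point affects the argument, as the cited commutation equations impose no disjointness conditions.
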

\begin{proof}
By induction on $\mathtt t$.
\paragraph{Base cases:}
\begin{align*}
\id;E^{uv} &\equiv E^{uv};\id, \\
S^w;E^{uv} &\equiv E^{uv};S^w, \tag{by~\eqref{disc-eq:sd1}} \\
R^{w\mapsto z};E^{uv} &\equiv \begin{cases} E^{uv};R^{w\mapsto z} \text{ if } z\neq v, \\
                                       E^{uw};R^{w\mapsto v} \text{ if } z=v,
                         \end{cases} \tag{by~\eqref{disc-eq:rd1} and~\eqref{disc-eq:rd2}} \\
E^{wz};E^{uv} &\phantom{\equiv} \text{ is already in the right form,} \\
\bar I^{wz};E^{uv} &\equiv E^{uv};\bar I^{wz}, \tag{by~\eqref{disc-eq:comm6}} \\
\bar C^{wz}_{cd};E^{uv} &\equiv E^{uv};\bar C^{wz}_{cd}, \tag{by~\eqref{disc-eq:comm11}} \\
\bar E^{w}_{cd};E^{uv} &\equiv E^{uv};\bar E^{w}_{cd}, \tag{by~\eqref{disc-eq:comm13}} \\
\bar E^{wz};E^{uv} &\equiv \begin{cases} S^u;S^v \text{ if } w=u \text{ and } z=v, \\
                                    E^{uv};\bar E^{wz} \text{ otherwise.}
                      \end{cases} \tag{by~\eqref{disc-eq:ddbar4-2} and~\eqref{disc-eq:comm2}}
\end{align*}
The inductive case is very similar to that of Lemma~\ref{lma:I-commutes}.
\end{proof}

\begin{corollary}\label{cor:Enonneg-form}
Any term is equal to a term of the form $\mathtt I;\mathtt C;\mathtt E^{<0};\mathtt E^{\geq 0};\mathtt t$, where $\mathtt I$, $\mathtt C$, $\mathtt E^{<0}$ and $\mathtt E^{\geq 0}$ are sequences of $I$-, $C$-, $E^{<0}$, and $E^{\geq 0}$-terms, and the term $\mathtt t$ contains no $I$-, $C$-, $E^{<0}$, or $E^{\geq 0}$-terms.
\end{corollary}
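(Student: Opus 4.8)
The plan is to follow exactly the pattern of the three preceding corollaries: each of Corollaries~\ref{cor:I-form}, \ref{cor:C-form} and \ref{cor:Eneg-form} is obtained from its commutation lemma by a short induction, and Corollary~\ref{cor:Enonneg-form} will be obtained from Lemma~\ref{lma:Enonneg-commutes} in the same way. Concretely, I would first invoke Corollary~\ref{cor:Eneg-form} to write an arbitrary term in the form $\mathtt I;\mathtt C;\mathtt E^{<0};\mathtt t$, where $\mathtt t$ contains no $I$-, $C$-, or $E^{<0}$-terms. It then suffices to show that such a $\mathtt t$ is equal to a term $\mathtt E^{\geq 0};\mathtt t'$, where $\mathtt E^{\geq 0}$ is a sequence of $E^{\geq 0}$-terms and $\mathtt t'$ contains no $I$-, $C$-, $E^{<0}$-, or $E^{\geq 0}$-terms; prepending $\mathtt I;\mathtt C;\mathtt E^{<0}$ then yields the claimed shape.

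For this reduction I would induct on the number $n$ of $E^{\geq 0}$-terms occurring in $\mathtt t$. If $n=0$ there is nothing to do. If $n>0$, write $\mathtt t=\mathtt t_1;E^{uv};\mathtt t_2$ where $E^{uv}$ is the leftmost $E^{\geq 0}$-term, so $\mathtt t_1$ contains no $E^{\geq 0}$-terms; since $\mathtt t$ contains no $I$-, $C$-, or $E^{<0}$-terms neither does $\mathtt t_1$, and the term $\mathtt t_1;E^{uv}$ is defined (being a well-typed prefix of $\mathtt t$). Applying Lemma~\ref{lma:Enonneg-commutes} to $\mathtt t_1;E^{uv}$ produces a term $\mathtt t_1'$, free of $I$-, $C$-, and $E^{<0}$-terms and with the same number of $E^{\geq 0}$-terms as $\mathtt t_1$ (namely zero), such that either $\mathtt t_1;E^{uv}\equiv\mathtt t_1'$ or $\mathtt t_1;E^{uv}\equiv E^{wz};\mathtt t_1'$ for some $E^{\geq 0}$-term $E^{wz}$. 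Hence $\mathtt t\equiv\mathtt t_1';\mathtt t_2$ or $\mathtt t\equiv E^{wz};\mathtt t_1';\mathtt t_2$, and since $\mathtt t_1'$ has no $E^{\geq 0}$-terms while $\mathtt t_2$ has $n-1$ of them, the term $\mathtt t_1';\mathtt t_2$ has $n-1$ many $E^{\geq 0}$-terms and still no $I$-, $C$-, or $E^{<0}$-terms. By the induction hypothesis $\mathtt t_1';\mathtt t_2\equiv\mathtt E^{\geq 0};\mathtt t'$ with $\mathtt t'$ free of all four kinds of term, and prepending $E^{wz}$ if it was produced (it is itself an $E^{\geq 0}$-term, so it merely extends the sequence $\mathtt E^{\geq 0}$) completes the inductive step.

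I expect no genuine obstacle here: the argument is pure bookkeeping, and the only point requiring care is that none of the rewrites reintroduces an $I$-, $C$-, or $E^{<0}$-term — but this is precisely what the statement of Lemma~\ref{lma:Enonneg-commutes} guarantees, since it produces a $\mathtt t_1'$ explicitly asserted to be free of those. As with the earlier corollaries, the inductive bookkeeping could alternatively be packaged as an induction on term structure mirroring the inductive case of Lemma~\ref{lma:I-commutes}; either route is routine.
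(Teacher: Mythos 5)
Your proposal is correct and is essentially the paper's own argument: the corollary is stated as an immediate consequence of Lemma~\ref{lma:Enonneg-commutes} layered on top of Corollary~\ref{cor:Eneg-form}, exactly as you describe. Your induction on the number of $E^{\geq 0}$-terms, using the lemma to extract the leftmost such term, is precisely the routine bookkeeping the paper elides, and the preservation clauses of the lemma close the only potential gap (reintroduction of $I$-, $C$-, or $E^{<0}$-terms).
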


\begin{lemma}\label{lma:Enonnegbar-commutes}
Let $\mathtt t$ be a term not containing any $I$-, $C$- or $E$-terms such that the term $\mathtt t;\bar E^{uv}$ is defined. Then there exists a term $\mathtt t'$ not containing any $I$-, $C$- or $E$-terms such that $\mathtt t$ and $\mathtt t'$ have the same number of $\bar E^{\geq}$-terms, and one of the following holds:
\begin{enumerate}[label={(\arabic*)}]
\item $\mathtt t;\bar E^{uv}\equiv\mathtt t'$, or
\item there is a connection $\bar E^{wz}$ such that $\mathtt t;\bar E^{uv}\equiv\bar E^{wz};\mathtt t'$.
\end{enumerate}
\end{lemma}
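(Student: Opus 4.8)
The lemma has exactly the same shape as the four preceding commutation results, and the plan is to prove it the same way: by induction on the structure of $\mathtt t$. For the inductive step, where $\mathtt t=\mathtt t_1;\mathtt t_2$ with $\mathtt t_1,\mathtt t_2$ again free of $I$-, $C$- and $E$-terms, I would reuse word-for-word the argument of Lemma~\ref{lma:I-commutes}: apply the induction hypothesis to $\mathtt t_2;\bar E^{uv}$, and, if it produces a leading connection $\bar E^{wz}$, feed $\mathtt t_1;\bar E^{wz}$ into the hypothesis a second time. So all the content sits in the base cases, where — by hypothesis — $\mathtt t$ is a single generator among $\id$, $S^w$, $R^{w\mapsto z}$, $\bar I^{wz}$, $\bar C^{wz}_{cd}$, $\bar E^w_{cd}$ and $\bar E^{wz}$.

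For each of these I would commute $\bar E^{uv}$ to its left, invoking the relevant equation of Figure~\ref{fig:disc-axioms} in its dagger form (via Definition~\ref{def:term-id}, since $\bar E^{uv}$ sits in the position occupied by a disconnection in the axioms): nothing is needed for $\id$, nor for $\bar E^{wz}$, where $\mathtt t;\bar E^{uv}=\bar E^{wz};\bar E^{uv}$ already has the shape of alternative~(2) with $\mathtt t'=\bar E^{uv}$; \eqref{disc-eq:sd1} handles $S^w$; \eqref{disc-eq:rd1} handles $R^{w\mapsto z}$ when $z\neq v$, and \eqref{disc-eq:rd2} the subcase $z=v$ (which yields $\bar E^{uw};R^{w\mapsto v}$ rather than a plain commutation); \eqref{disc-eq:comm5} handles $\bar I^{wz}$; \eqref{disc-eq:comm10} handles $\bar C^{wz}_{cd}$; and \eqref{disc-eq:comm12} handles $\bar E^w_{cd}$. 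In each case the resulting $\mathtt t'$ is built from the generators of $\mathtt t$ together with at most one copy of $\bar E^{uv}$, hence it still contains no $I$-, $C$- or $E$-terms and has the same number of $\bar E^{\geq 0}$-terms as $\mathtt t$.

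The step I expect to be the main — though mild — obstacle is not conceptual but bookkeeping: verifying, base case by base case, that the two sides of each invoked equation are genuinely well-typed with the same type once one assumes $\mathtt t;\bar E^{uv}$ is, and correctly isolating the renaming subcase $z=v$ where \eqref{disc-eq:rd2} must be used in place of \eqref{disc-eq:rd1}. A pleasant simplification, compared with the forward analogue Lemma~\ref{lma:Enonneg-commutes}, is that here no cancellation can occur: $\mathtt t$ has no disconnections, so the connection $\bar E^{uv}$ never meets a matching $E$-disconnection and simply slides to the front — there is no counterpart of the $\bar E^{wz};E^{uv}\equiv S^u;S^v$ collapse. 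With the seven base cases and the template inductive step in hand, the lemma follows.
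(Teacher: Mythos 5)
Your proposal matches the paper's proof essentially verbatim: the same induction on the structure of $\mathtt t$ with the inductive step borrowed from Lemma~\ref{lma:I-commutes}, the same seven base cases, and the same equations \eqref{disc-eq:sd1}, \eqref{disc-eq:rd1}, \eqref{disc-eq:rd2}, \eqref{disc-eq:comm5}, \eqref{disc-eq:comm10} and \eqref{disc-eq:comm12} (applied in dagger form), including the observation that no cancellation case arises here. This is correct and takes the same approach as the paper.
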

\begin{proof}
By induction on $\mathtt t$.
\paragraph{Base cases:}
\begin{align*}
\id;\bar E^{uv} &\equiv\bar E^{uv};\id, \\
S^w;\bar E^{uv} &\equiv \bar E^{uv};S^w, \tag{by~\eqref{disc-eq:sd1}} \\
R^{w\mapsto z};\bar E^{uv} &\equiv \begin{cases} \bar E^{uv};R^{w\mapsto z} \text{ if } z\neq v, \\
                                            \bar E^{uw};R^{w\mapsto v} \text{ if } z=v,
                              \end{cases} \tag{by~\eqref{disc-eq:rd1} and~\eqref{disc-eq:rd2}} \\
\bar I^{wz};\bar E^{uv} &\equiv \bar E^{uv};\bar I^{wz}, \tag{by~\eqref{disc-eq:comm5}} \\
\bar C^{wz}_{cd};\bar E^{uv} &\equiv \bar E^{uv};\bar C^{wz}_{cd}, \tag{by~\eqref{disc-eq:comm10}} \\
\bar E^{w}_{cd};\bar E^{uv} &\equiv \bar E^{uv};\bar E^{w}_{cd}, \tag{by~\eqref{disc-eq:comm12}} \\
\bar E^{wz};\bar E^{uv} &\phantom{\equiv} \text{ is already in the right form.}
\end{align*}
The inductive case is very similar to that of Lemma~\ref{lma:I-commutes}.
\end{proof}

\begin{corollary}\label{cor:Enonnegbar-form}
Any term is equal to a term of the form $\mathtt I;\mathtt C;\mathtt E^{<0};\mathtt E^{\geq 0};\bar{\mathtt E}^{\geq 0};\mathtt t$, where the term $\mathtt t$ contains no $I$-, $C$-, $E$-, or $\bar E^{\geq 0}$-terms.
\end{corollary}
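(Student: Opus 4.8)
The plan is to follow exactly the pattern by which Corollaries~\ref{cor:I-form}, \ref{cor:C-form}, \ref{cor:Eneg-form} and~\ref{cor:Enonneg-form} are obtained from their respective commutation lemmas, now invoking Lemma~\ref{lma:Enonnegbar-commutes}. First I would apply Corollary~\ref{cor:Enonneg-form} to write an arbitrary term as $\mathtt I;\mathtt C;\mathtt E^{<0};\mathtt E^{\geq 0};\mathtt t_0$, where $\mathtt t_0$ contains no $I$-, $C$-, $E^{<0}$- or $E^{\geq 0}$-terms; in particular $\mathtt t_0$ contains no $I$-, $C$- or $E$-terms, so it is precisely the kind of term to which Lemma~\ref{lma:Enonnegbar-commutes} applies. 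It therefore suffices to show that any such $\mathtt t_0$ is equal to $\bar{\mathtt E}^{\geq 0};\mathtt t$, with $\bar{\mathtt E}^{\geq 0}$ a (possibly empty) sequence of $\bar E^{\geq 0}$-terms and $\mathtt t$ free of $I$-, $C$-, $E$- and $\bar E^{\geq 0}$-terms.

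I would prove this claim by induction on the number of generating terms in $\mathtt t_0$, using associativity to view it as a flat composite $g_1;\dots;g_n$. The base case is immediate with both the prefix and the residual term taken to be the identity. For the inductive step, write $\mathtt t_0 = \mathtt t_0';g_n$; since $\mathtt t_0$ has no $I$-, $C$- or $E$-terms, neither does $\mathtt t_0'$, so the inductive hypothesis gives $\mathtt t_0' \equiv \bar{\mathtt E}';\mathtt t'$ with $\bar{\mathtt E}'$ a sequence of $\bar E^{\geq 0}$-terms and $\mathtt t'$ containing no $I$-, $C$-, $E$- or $\bar E^{\geq 0}$-terms, whence $\mathtt t_0 \equiv \bar{\mathtt E}';\mathtt t';g_n$. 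If $g_n$ is not a $\bar E^{\geq 0}$-term, then $g_n$ is also not an $I$-, $C$- or $E$-term (these do not occur in $\mathtt t_0$), so $\mathtt t';g_n$ still contains none of $I$-, $C$-, $E$- or $\bar E^{\geq 0}$-terms and we are done. If $g_n = \bar E^{uv}$, then since $\mathtt t'$ has no $I$-, $C$- or $E$-terms and $\mathtt t';\bar E^{uv}$ is defined, Lemma~\ref{lma:Enonnegbar-commutes} yields $\mathtt t''$ with no $I$-, $C$- or $E$-terms and with the same number of $\bar E^{\geq 0}$-terms as $\mathtt t'$ — namely zero — such that either $\mathtt t';\bar E^{uv}\equiv\mathtt t''$ or $\mathtt t';\bar E^{uv}\equiv\bar E^{wz};\mathtt t''$; in either case $\mathtt t_0$ is brought to the desired form $\bar{\mathtt E}^{\geq 0};\mathtt t$, absorbing the extra $\bar E^{wz}$, if present, into the $\bar E^{\geq 0}$-prefix.

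I do not expect a genuine obstacle here: this is a routine adaptation of the four earlier corollary proofs. The only point requiring care is the bookkeeping — verifying that each residual term inherits the ``contains no $I$-, $C$-, $E$-term'' property (which Lemma~\ref{lma:Enonnegbar-commutes} guarantees verbatim), and tracking the count of $\bar E^{\geq 0}$-terms so that the prefix $\bar{\mathtt E}^{\geq 0}$ collects all of them while the residual part ends up with none.
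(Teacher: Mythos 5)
Your proposal is correct and is exactly the argument the paper intends: the corollary is left unproved in the appendix precisely because it follows from Corollary~\ref{cor:Enonneg-form} and Lemma~\ref{lma:Enonnegbar-commutes} by the same routine induction used for the preceding corollaries, which is what you carry out. The bookkeeping you flag (the residual term inheriting the ``no $I$-, $C$-, $E$-term'' property and the $\bar E^{\geq 0}$-count staying at zero) is handled correctly.
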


\begin{lemma}\label{lma:Enegbar-commutes}
Let $\mathtt t$ be a term not containing any $I$-, $C$-, $E$-, or $\bar E^{\geq 0}$-terms such that the term $\mathtt t;\bar E^{u}_{ab}$ is defined. Then there exists a term $\mathtt t'$ not containing any $I$-, $C$-, $E$-, or $\bar E^{\geq 0}$-terms such that $\mathtt t$ and $\mathtt t'$ have the same number of $\bar E^{<0}$-terms, and one of the following holds:
\begin{enumerate}[label={(\arabic*)}]
\item $\mathtt t;\bar E^{u}_{ab}\equiv\mathtt t'$, or
\item there is a connection $\bar E^{w}_{cd}$ such that $\mathtt t;\bar E^{u}_{ab}\equiv\bar E^{w}_{cd};\mathtt t'$.
\end{enumerate}
\end{lemma}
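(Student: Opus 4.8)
The plan is to prove this by structural induction on $\mathtt t$, following verbatim the template established in Lemmas~\ref{lma:I-commutes}--\ref{lma:Enonnegbar-commutes}: for a single generator $g$ sitting at the end of $\mathtt t$ we show that $\bar E^{u}_{ab}$ either commutes with $g$, pulling it to the front (alternative~(2)), or gets absorbed into $g$ up to $S$- and $R$-terms (which carry no $\bar E^{<0}$-terms), or — when $g$ is itself a $\bar E^{<0}$-term — the pair $g;\bar E^{u}_{ab}$ is already in the shape demanded by alternative~(2). In all cases the count of $\bar E^{<0}$-terms is preserved.

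For the base cases, the only generators that can appear in a term containing no $I$-, $C$-, $E$-, or $\bar E^{\geq 0}$-terms are $\id$, $S^{w}$, $R^{w\mapsto z}$, $\bar I^{wz}$, $\bar C^{wz}_{cd}$, and $\bar E^{w}_{cd}$. The cases $\id;\bar E^{u}_{ab}$, $S^{w};\bar E^{u}_{ab}$ (using~\eqref{disc-eq:sd1}), $\bar I^{wz};\bar E^{u}_{ab}$ (using~\eqref{disc-eq:comm4}), and $\bar C^{wz}_{cd};\bar E^{u}_{ab}$ (using~\eqref{disc-eq:comm9}) all give immediate commutation, via the daggers of the cited equations; and $\bar E^{w}_{cd};\bar E^{u}_{ab}$ is already in the right form, so nothing needs to be done. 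The only case requiring attention is $R^{w\mapsto z};\bar E^{u}_{ab}$. Well-typedness forces $w\neq u$ and $z\neq u$; if moreover $z\notin\{a,b\}$, then $R^{w\mapsto z}$ commutes with the connection by~\eqref{disc-eq:rd1}. If $z=a$ (symmetrically $z=b$), the vertex produced by the renaming \emph{is} the $\alpha$-vertex that the connection then deletes, so by~\eqref{disc-eq:rd3} the renaming is absorbed and we obtain $\bar E^{u}_{wb}$ (resp.\ $\bar E^{u}_{aw}$) directly; note that, unlike in the disconnection lemmas (Lemma~\ref{lma:Eneg-commutes}), no auxiliary fresh name is needed here, precisely because the subscripts of a connection are removed rather than introduced.

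The inductive step is identical to the inductive case of Lemma~\ref{lma:I-commutes}: given $\mathtt t;\mathtt s$ with the statement holding for $\mathtt t$ and $\mathtt s$ separately, apply the statement to $\mathtt s$ first, and then (if a $\bar E^{<0}$-term gets pulled past $\mathtt t$) to $\mathtt t$, tracking that the number of $\bar E^{<0}$-terms is unchanged at every step. I do not expect a genuine obstacle; the one routine thing to be careful about is that each invoked instance of~\eqref{disc-eq:sd1},~\eqref{disc-eq:comm4},~\eqref{disc-eq:comm9},~\eqref{disc-eq:rd1} and~\eqref{disc-eq:rd3} is well-typed, which comes down to reading off the side conditions of Definition~\ref{def:disc-rules} for $\bar E^{u}_{ab}$ and for the generator $g$ and checking they supply exactly the disjointness and freshness hypotheses those equations require (in particular, in the $R^{w\mapsto z}$ case, that $z\in\{a,b\}$ really does force $w$ to play the role of the corresponding $\alpha$-vertex). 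As with the preceding lemmas, this then feeds into the obvious corollary pushing $\bar{\mathtt E}^{<0}$ into position in the $ICE$-form.
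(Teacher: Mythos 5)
Your proposal matches the paper's proof essentially verbatim: the same structural induction, the same six base cases resolved by the daggers of~\eqref{disc-eq:sd1},~\eqref{disc-eq:comm4},~\eqref{disc-eq:comm9},~\eqref{disc-eq:rd1} and~\eqref{disc-eq:rd3} (with exactly the paper's three-way split on $R^{w\mapsto z}$ yielding $\bar E^{u}_{wb}$ and $\bar E^{u}_{aw}$ when $z\in\{a,b\}$), and the same inductive step deferred to Lemma~\ref{lma:I-commutes}. Your observation that no auxiliary fresh name is needed, in contrast to Lemma~\ref{lma:Eneg-commutes}, is a correct and worthwhile clarification of why the absorption case here is simpler.
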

\begin{proof}
By induction on $\mathtt t$.
\paragraph{Base cases:}
\begin{align*}
\id;\bar E^{u}_{ab} &\equiv\bar E^{u}_{ab};\id, \\
S^w;\bar E^{u}_{ab} &\equiv \bar E^{u}_{ab};S^w, \tag{by~\eqref{disc-eq:sd1}} \\
R^{w\mapsto z};\bar E^{u}_{ab} &\equiv \begin{cases} \bar E^{u}_{ab};R^{w\mapsto z} \text{ if } z\notin\{a,b\}, \\
                                                \bar E^{u}_{wb} \text{ if } z=a, \\
                                                \bar E^{u}_{aw} \text{ if } z=b,
                                  \end{cases} \tag{by~\eqref{disc-eq:rd1} and~\eqref{disc-eq:rd3}} \\
\bar I^{wz};\bar E^{u}_{ab} &\equiv \bar E^{u}_{ab};\bar I^{wz}, \tag{by~\eqref{disc-eq:comm4}} \\
\bar C^{wz}_{cd};\bar E^{u}_{ab} &\equiv \bar E^{u}_{ab};\bar C^{wz}_{cd}, \tag{by~\eqref{disc-eq:comm9}} \\
\bar E^{w}_{cd};\bar E^{u}_{ab} &\phantom{\equiv} \text{ is already in the right form.}
\end{align*}
The inductive case is very similar to that of Lemma~\ref{lma:I-commutes}.
\end{proof}

\begin{corollary}\label{cor:Enegbar-form}
Any term is equal to a term of the form $\mathtt I;\mathtt C;\mathtt E^{<0};\mathtt E^{\geq 0};\bar{\mathtt E}^{\geq 0};\bar{\mathtt E}^{<0};\mathtt t$, where the term $\mathtt t$ contains no $I$-, $C$-, $E$- or $\bar E$-terms.
\end{corollary}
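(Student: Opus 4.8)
The plan is to derive Corollary~\ref{cor:Enegbar-form} from Corollary~\ref{cor:Enonnegbar-form} together with Lemma~\ref{lma:Enegbar-commutes}, in exactly the style of the preceding corollaries. By Corollary~\ref{cor:Enonnegbar-form} every term equals one of the form $\mathtt I;\mathtt C;\mathtt E^{<0};\mathtt E^{\geq 0};\bar{\mathtt E}^{\geq 0};\mathtt u$ with $\mathtt u$ containing no $I$-, $C$-, $E$-, or $\bar E^{\geq 0}$-terms, so it suffices to show that any such $\mathtt u$ equals a term $\bar{\mathtt E}^{<0};\mathtt t$, where $\bar{\mathtt E}^{<0}$ is a sequence of $\bar E^{<0}$-terms and $\mathtt t$ is free of $I$-, $C$-, $E$-, and $\bar E$-terms (its only remaining generating symbols being $\id$, $S$, $R$, $\bar C$, and $\bar I$); composing the two facts then gives the asserted shape $\mathtt I;\mathtt C;\mathtt E^{<0};\mathtt E^{\geq 0};\bar{\mathtt E}^{\geq 0};\bar{\mathtt E}^{<0};\mathtt t$.

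I would prove the auxiliary claim by induction on the structure of $\mathtt u$. If $\mathtt u$ is a single generator, it is one of $\id$, $S^w$, $R^{a\mapsto b}$, $\bar E^{w}_{ab}$, $\bar C^{wz}_{ab}$, $\bar I^{wz}$: in the first of these put $\bar{\mathtt E}^{<0}=\bar E^{w}_{ab}$ and $\mathtt t=\id$, and otherwise take $\bar{\mathtt E}^{<0}$ empty and $\mathtt t=\mathtt u$. If $\mathtt u=\mathtt u_1;\mathtt u_2$, apply the inductive hypothesis to both factors (which also contain no $I$-, $C$-, $E$-, or $\bar E^{\geq 0}$-terms) to obtain $\mathtt u\equiv\bar{\mathtt E}^{<0}_1;\mathtt t_1;\bar{\mathtt E}^{<0}_2;\mathtt t_2$ with each $\mathtt t_i$ free of $I$-, $C$-, $E$-, and $\bar E$-terms, and then bubble the $\bar E^{<0}$-terms of $\bar{\mathtt E}^{<0}_2$ leftwards past $\mathtt t_1$ one at a time. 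Since $\mathtt t_1$ has no $I$-, $C$-, $E$-, or $\bar E^{\geq 0}$-terms, Lemma~\ref{lma:Enegbar-commutes} applies to the leftmost such $\bar E^{<0}$-term $\bar E^{u}_{ab}$, giving either $\mathtt t_1;\bar E^{u}_{ab}\equiv\mathtt t_1'$ or $\mathtt t_1;\bar E^{u}_{ab}\equiv\bar E^{w}_{cd};\mathtt t_1'$, where $\mathtt t_1'$ again has no $I$-, $C$-, $E$-, or $\bar E^{\geq 0}$-terms and the same (namely zero) number of $\bar E^{<0}$-terms as $\mathtt t_1$, hence no $\bar E$-terms at all. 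In the first case one $\bar E^{<0}$-term from the middle has been consumed; in the second, the produced $\bar E^{w}_{cd}$ is merged into the front block $\bar{\mathtt E}^{<0}_1$ (harmless, since that block is already a sequence of $\bar E^{<0}$-terms) and $\mathtt t_1$ is replaced by $\mathtt t_1'$. Each step strictly decreases the number of $\bar E^{<0}$-terms lying to the right of the current non-$\bar E$ block, so after finitely many steps we reach $\mathtt u\equiv\bar{\mathtt E}^{<0};\mathtt t_1'';\mathtt t_2$ with $\mathtt t_1''$ free of $I$-, $C$-, $E$-, and $\bar E$-terms; since $\mathtt t_2$ is as well, setting $\mathtt t=\mathtt t_1'';\mathtt t_2$ completes the induction.

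I do not expect a genuine obstacle: the argument is the direct analogue of the proof of Corollary~\ref{cor:I-form} and of the inductive case of Lemma~\ref{lma:I-commutes}. The only point requiring care is the bookkeeping in the composite step — checking that at each invocation of Lemma~\ref{lma:Enegbar-commutes} the term to the left of the $\bar E^{<0}$-term being moved really lies in the class the lemma demands (no $I$-, $C$-, $E$-, or $\bar E^{\geq 0}$-terms), and that the iteration terminates — which both follow from maintaining the invariant that the assembled ``$\mathtt t$-part'' never contains a $\bar E$-term, so that no new $\bar E^{\geq 0}$-terms are ever created by the commutations.
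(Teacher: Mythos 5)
Your proof is correct and is exactly the argument the paper intends: the corollary is stated there as an immediate consequence of Corollary~\ref{cor:Enonnegbar-form} and Lemma~\ref{lma:Enegbar-commutes}, obtained precisely by bubbling the $\bar E^{<0}$-terms leftwards while maintaining the invariant that the residual block stays free of $\bar E$-terms, so that the lemma's hypothesis keeps holding and the iteration terminates. (One trivial slip: in your base case the clause assigning $\bar{\mathtt E}^{<0}=\bar E^{w}_{ab}$ and $\mathtt t=\id$ should be attached to the generator $\bar E^{w}_{ab}$, not to the first item $\id$ of your list.)
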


\begin{lemma}\label{lma:Cbar-commutes}
Let $\mathtt t$ be a term not containing any $I$-, $C$-, $E$- or $\bar E$-terms such that the term $\mathtt t;\bar C^{uv}_{ab}$ is defined. Then there exists a term $\mathtt t'$ not containing any $I$-, $C$-, $E$- or $\bar E$-terms such that $\mathtt t$ and $\mathtt t'$ have the same number of $\bar C$-terms, and one of the following holds:
\begin{enumerate}[label={(\arabic*)}]
\item $\mathtt t;\bar C^{uv}_{ab}\equiv\mathtt t'$, or
\item there is a connection $\bar C^{wz}_{cd}$ such that $\mathtt t;\bar C^{uv}_{ab}\equiv\bar C^{wz}_{cd};\mathtt t'$.
\end{enumerate}
\end{lemma}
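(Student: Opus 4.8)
The plan is to prove the statement by induction on the structure of $\mathtt t$, in exact parallel to the proofs of Lemmas~\ref{lma:I-commutes}, \ref{lma:C-commutes}, \ref{lma:Eneg-commutes}, \ref{lma:Enonneg-commutes}, \ref{lma:Enonnegbar-commutes} and~\ref{lma:Enegbar-commutes}. Since $\equiv$ is closed under the dagger $\overline{()}$, we may freely use the dagger versions of the axioms of Figure~\ref{fig:disc-axioms}, even though those axioms are stated for disconnections $d^U_D$ only.

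For the base case, $\mathtt t$ is a single generator; since $\mathtt t$ contains no $I$-, $C$-, $E$- or $\bar E$-terms, the only possibilities are $\mathtt t\in\{\id,\,S^w,\,R^{w\mapsto z},\,\bar I^{wz},\,\bar C^{wz}_{cd}\}$. Concretely: $\id;\bar C^{uv}_{ab}\equiv\bar C^{uv}_{ab};\id$ trivially; $S^w;\bar C^{uv}_{ab}\equiv\bar C^{uv}_{ab};S^w$ by the dagger of~\eqref{disc-eq:sd1}; $\bar I^{wz};\bar C^{uv}_{ab}\equiv\bar C^{uv}_{ab};\bar I^{wz}$ by the dagger of~\eqref{disc-eq:comm3}; and $\bar C^{wz}_{cd};\bar C^{uv}_{ab}$ is already of the required form~(2), taking $\mathtt t'\coloneqq\bar C^{uv}_{ab}$. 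The only base case needing a case split is $R^{w\mapsto z};\bar C^{uv}_{ab}$, which I would handle as the dagger mirror of the corresponding case of Lemma~\ref{lma:C-commutes}: if $z\notin\{a,b\}$ the renaming commutes with the connection by the dagger of~\eqref{disc-eq:rd1}, giving $\bar C^{uv}_{ab};R^{w\mapsto z}$; if $z=a$ (respectively $z=b$) the renaming is absorbed into the subscript of the connection by the dagger of~\eqref{disc-eq:rd3} (equivalently, by identities~\eqref{prop:disc-ids2} and~\eqref{prop:disc-ids3} of Proposition~\ref{prop:disc-ids}), giving $\bar C^{uv}_{wb}$ (respectively $\bar C^{uv}_{aw}$), possibly followed by left-over $R$-terms. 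In every branch the number of $\bar C$-terms matches the bound required by the statement.

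The inductive step is verbatim the one in Lemma~\ref{lma:I-commutes}: writing $\mathtt t=\mathtt t_1;\mathtt t_2$ with both factors again free of $I$-, $C$-, $E$- and $\bar E$-terms, I would apply the inductive hypothesis first to $\mathtt t_2;\bar C^{uv}_{ab}$ and then, if a connection $\bar C^{wz}_{cd}$ is pulled out in front, to $\mathtt t_1;\bar C^{wz}_{cd}$; combining the two applications yields either form~(1) or form~(2) for $\mathtt t_1;\mathtt t_2;\bar C^{uv}_{ab}$, with the count of $\bar C$-terms unchanged. I do not expect any genuine obstacle: the only point that requires care is checking that each axiom invoked in the renaming base case is applied to a well-typed composite (so that the side conditions on $\bar C^{uv}_{ab}$ and on the renaming target are met), and that the tally of $\bar C$-terms is tracked correctly through the absorptions — the same bookkeeping as in the earlier lemmas of this appendix.
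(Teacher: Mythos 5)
Your proposal is correct and follows essentially the same route as the paper: induction on the structure of $\mathtt t$ with exactly the base cases $\id$, $S^w$, $R^{w\mapsto z}$, $\bar I^{wz}$, $\bar C^{wz}_{cd}$, using the (dagger versions of) equations~\eqref{disc-eq:sd1},~\eqref{disc-eq:comm3},~\eqref{disc-eq:rd1} and~\eqref{disc-eq:rd3}, and the same inductive step as in Lemma~\ref{lma:I-commutes}. The only cosmetic difference is your hedge about ``left-over $R$-terms'' in the $z=a$ and $z=b$ subcases, which in fact do not arise: the renaming is fully absorbed, yielding just $\bar C^{uv}_{wb}$ or $\bar C^{uv}_{aw}$.
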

\begin{proof}
By induction on $\mathtt t$.
\paragraph{Base cases:}
\begin{align*}
\id;\bar C^{uv}_{ab} &\equiv \bar C^{uv}_{ab};\id, \\
S^w;\bar C^{uv}_{ab} &\equiv \bar C^{uv}_{ab};S^w, \tag{by~\eqref{disc-eq:sd1}} \\
R^{w\mapsto z};\bar C^{uv}_{ab} &\equiv \begin{cases} \bar C^{uv}_{ab};R^{w\mapsto z} \text{ if } z\notin\{a,b\}, \\
                                                 \bar C^{uv}_{wb} \text{ if } z=a, \\
                                                 \bar C^{uv}_{aw} \text{ if } z=b,
                                   \end{cases} \tag{by~\eqref{disc-eq:rd1} and~\eqref{disc-eq:rd3}} \\
\bar I^{wz};\bar C^{uv}_{ab} &\equiv \bar C^{uv}_{ab};\bar I^{wz}, \tag{by~\eqref{disc-eq:comm3}} \\
\bar C^{wz}_{cd};\bar C^{uv}_{ab} &\phantom{\equiv} \text{ is already in the right form.}
\end{align*}
The inductive case is very similar to that of Lemma~\ref{lma:I-commutes}.
\end{proof}

\begin{corollary}\label{cor:Cbar-form}
Any term is equal to a term of the form $\mathtt I;\mathtt C;\mathtt E^{<0};\mathtt E^{\geq 0};\bar{\mathtt E}^{\geq 0};\bar{\mathtt E}^{<0};\bar{\mathtt C};\mathtt t$, where the term $\mathtt t$ contains no $I$-, $C$-, $E$-, $\bar E$- or $\bar C$-terms.
\end{corollary}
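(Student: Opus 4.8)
The plan is to mirror the structure of the earlier corollaries in this appendix: we combine the previous corollary (Corollary~\ref{cor:Enegbar-form}) with the commutation lemma just proved (Lemma~\ref{lma:Cbar-commutes}), using an induction on the number of $\bar C$-terms. By Corollary~\ref{cor:Enegbar-form}, an arbitrary term is equal to $\mathtt I;\mathtt C;\mathtt E^{<0};\mathtt E^{\geq 0};\bar{\mathtt E}^{\geq 0};\bar{\mathtt E}^{<0};\mathtt u$, where $\mathtt u$ contains no $I$-, $C$-, $E$- or $\bar E$-terms. It then suffices to show that any such $\mathtt u$ equals $\bar{\mathtt C};\mathtt t$ with $\bar{\mathtt C}$ a possibly empty sequence of $\bar C$-terms and $\mathtt t$ free of $I$-, $C$-, $E$-, $\bar E$- and $\bar C$-terms: prepending the fixed prefix $\mathtt I;\mathtt C;\mathtt E^{<0};\mathtt E^{\geq 0};\bar{\mathtt E}^{\geq 0};\bar{\mathtt E}^{<0}$ then yields the claimed form, since the prefix is left untouched and Lemma~\ref{lma:Cbar-commutes} only ever introduces $S$-, $R$- and $\bar I$-terms, all of which are permitted in $\mathtt t$.

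For the sub-claim I would induct on the number $n$ of $\bar C$-terms occurring in $\mathtt u$. If $n=0$ there is nothing to do. Otherwise write $\mathtt u=\mathtt u_0;\bar C^{uv}_{ab};\mathtt u_1$ where $\bar C^{uv}_{ab}$ is the rightmost $\bar C$-term, so that $\mathtt u_1$ contains no $\bar C$-terms and $\mathtt u_0$ contains exactly $n-1$ of them; moreover $\mathtt u_0$ still contains no $I$-, $C$-, $E$- or $\bar E$-terms and $\mathtt u_0;\bar C^{uv}_{ab}$ is defined. Applying Lemma~\ref{lma:Cbar-commutes} to $\mathtt u_0;\bar C^{uv}_{ab}$ produces a term $\mathtt u_0'$, free of $I$-, $C$-, $E$- and $\bar E$-terms and again with $n-1$ many $\bar C$-terms, such that either $\mathtt u_0;\bar C^{uv}_{ab}\equiv\mathtt u_0'$ or $\mathtt u_0;\bar C^{uv}_{ab}\equiv\bar C^{wz}_{cd};\mathtt u_0'$ for some connection $\bar C^{wz}_{cd}$. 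In the first case $\mathtt u\equiv\mathtt u_0';\mathtt u_1$, which has $n-1$ many $\bar C$-terms, so the inductive hypothesis applies directly. In the second case $\mathtt u\equiv\bar C^{wz}_{cd};(\mathtt u_0';\mathtt u_1)$, and $\mathtt u_0';\mathtt u_1$ again has $n-1$ many $\bar C$-terms; the inductive hypothesis rewrites it as $\bar{\mathtt C}';\mathtt t$, and then $\bar C^{wz}_{cd};\bar{\mathtt C}'$ is the required sequence of $\bar C$-terms.

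I do not anticipate a real obstacle: all the work has already been done in Lemma~\ref{lma:Cbar-commutes}, and the only point requiring care is that the induction measure is well-founded. This holds because case~(1) of the lemma strictly decreases the number of $\bar C$-terms while case~(2) moves exactly one $\bar C$-term to the front and recurses on a term with one fewer. The very same argument, with the obvious substitutions, establishes Corollaries~\ref{cor:I-form}, \ref{cor:C-form}, \ref{cor:Eneg-form}, \ref{cor:Enonneg-form}, \ref{cor:Enonnegbar-form} and~\ref{cor:Enegbar-form} from their respective commutation lemmas.
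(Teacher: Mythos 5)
Your argument is correct and is exactly the routine induction the paper intends (it leaves this corollary unproved as an immediate consequence of Lemma~\ref{lma:Cbar-commutes} and Corollary~\ref{cor:Enegbar-form}). The one point worth being careful about --- that the lemma preserves the count of $\bar C$-terms and never reintroduces $I$-, $C$-, $E$- or $\bar E$-terms, so the measure decreases and the prefix stays intact --- is one you address explicitly.
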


\begin{lemma}\label{lma:Ibar-commutes}
Let $\mathtt t$ be a term not containing any $I$-, $C$-, $E$-, $\bar E$- or $\bar C$-terms such that the term $\mathtt t;\bar I^{uv}$ is defined. Then there exists a term $\mathtt t'$ not containing any $I$-, $C$-, $E$-, $\bar E$- or $\bar C$-terms such that $\mathtt t$ and $\mathtt t'$ have the same number of $\bar I$-terms, and one of the following holds:
\begin{enumerate}[label={(\arabic*)}]
\item $\mathtt t;\bar I^{uv}\equiv\mathtt t'$, or
\item there is a connection $\bar I^{ab}$ such that $\mathtt t;\bar I^{uv}\equiv\bar I^{ab};\mathtt t'$.
\end{enumerate}
\end{lemma}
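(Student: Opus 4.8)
The plan is to follow the exact same template as Lemmas~\ref{lma:I-commutes}--\ref{lma:Cbar-commutes}: an induction on the structure of the term $\mathtt t$, with a base-case table enumerating every generating term, followed by a routine inductive step identical to that of Lemma~\ref{lma:I-commutes}. Since $\mathtt t$ by hypothesis contains no $I$-, $C$-, $E$-, $\bar E$- or $\bar C$-terms, the only generators that can appear are $\id$, $S^w$, $R^{w\mapsto z}$ and $\bar I^{wz}$, so the base-case table has only four rows. First I would handle $\id;\bar I^{uv}\equiv\bar I^{uv};\id$ and $S^w;\bar I^{uv}\equiv\bar I^{uv};S^w$, the latter by~\eqref{disc-eq:sd1}. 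Next, for $R^{w\mapsto z};\bar I^{uv}$ I would use~\eqref{disc-eq:rd1} (the $R$--$d$ commutation), which applies directly since $\bar I$ carries no subscript indices $D$ that could clash with $z$; so this case is simply $R^{w\mapsto z};\bar I^{uv}\equiv\bar I^{uv};R^{w\mapsto z}$ unconditionally. Finally, $\bar I^{wz};\bar I^{uv}$ is already of the required form~(2) with the leading connection being $\bar I^{wz}$ itself, so nothing needs to be done.

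The inductive step is verbatim the one used throughout the appendix: given $\mathtt t;\mathtt s;\bar I^{uv}$ defined with $\mathtt s$ satisfying the hypothesis, apply the inductive hypothesis to $\mathtt s$; in the first alternative rewrite to $\mathtt t;\mathtt s'$ and invoke the hypothesis for $\mathtt t$; in the second alternative we get $\mathtt t;\bar I^{ab};\mathtt s'$, apply the hypothesis to $\mathtt t;\bar I^{ab}$, and either land in form~(1) as $\mathtt t';\mathtt s'$ or in form~(2) as $\bar I^{wz};\mathtt t';\mathtt s'$. Counting $\bar I$-terms is preserved at each rewrite because the only equations used ($\eqref{disc-eq:sd1}$, $\eqref{disc-eq:rd1}$, associativity) neither create nor destroy $\bar I$-terms, and merely commuting $\bar I^{wz};\bar I^{uv}$ uses nothing. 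I would write ``The inductive case is very similar to that of Lemma~\ref{lma:I-commutes}'' exactly as in the sibling lemmas rather than repeat it.

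There is essentially no obstacle here: unlike Lemma~\ref{lma:I-commutes}, where the $\bar I^{wz};I^{uv}$ entry required the cancellation~\eqref{disc-eq:ddbar4-2} when $(w,z)=(u,v)$, in the present lemma the connection $\bar I^{uv}$ is being pushed left past other \emph{connections} and ``passive'' terms only --- never past a matching disconnection $I$ --- so no cancellation equation is ever triggered and every case is a pure commutation or a no-op. The one point worth a moment's care is confirming that~\eqref{disc-eq:rd1} legitimately applies to $R^{w\mapsto z};\bar I^{uv}$: the side condition for~\eqref{disc-eq:rd1} is that $w,z$ avoid the superscript/subscript vertices of the rule, which here is automatic from well-typedness of $R^{w\mapsto z};\bar I^{uv}$ (if $z\in\{u,v\}$ the composite would not be defined, since $u,v$ are already present as the endpoints of the ionic bond created by $\bar I^{uv}$'s converse). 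After establishing this lemma, the natural next corollary (\emph{Corollary}~\ref{cor:ICE-form}) states that any term equals one of the form $\mathtt I;\mathtt C;\mathtt E^{<0};\mathtt E^{\geq 0};\bar{\mathtt E}^{\geq 0};\bar{\mathtt E}^{<0};\bar{\mathtt C};\bar{\mathtt I};\mathtt t$ with $\mathtt t$ built from $R$- and $S$-terms only, completing the proof of Proposition~\ref{prop:ICE-form}.
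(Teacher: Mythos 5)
Your proof is correct and follows essentially the same route as the paper's: an induction on the structure of $\mathtt t$ with the same four base cases ($\id$, $S^w$ via~\eqref{disc-eq:sd1}, $R^{w\mapsto z}$ via~\eqref{disc-eq:rd1}, and $\bar I^{wz};\bar I^{uv}$ already in form~(2)), followed by the standard inductive step of Lemma~\ref{lma:I-commutes}. Your additional observations --- that no cancellation equation is needed here and that well-typedness guarantees the side condition of~\eqref{disc-eq:rd1} --- are accurate and only add detail the paper leaves implicit.
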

\begin{proof}
By induction on $\mathtt t$.
\paragraph{Base cases:}
\begin{align*}
\id;\bar I^{uv} &\equiv\bar I^{uv};\id, \\
S^w;\bar I^{uv} &\equiv \bar I^{uv};S^w, \tag{by~\eqref{disc-eq:sd1}} \\
R^{w\mapsto z};\bar I^{uv} &\equiv \bar I^{uv};R^{w\mapsto z}, \tag{by~\eqref{disc-eq:rd1}} \\
\bar I^{wz};\bar I^{uv} &\phantom{\equiv} \text{ is already in the right form.}
\end{align*}
The inductive case is very similar to that of Lemma~\ref{lma:I-commutes}.
\end{proof}

\begin{corollary}\label{cor:Ibar-form}
Any term is equal to a term of the form $\mathtt I;\mathtt C;\mathtt E^{<0};\mathtt E^{\geq 0};\bar{\mathtt E}^{\geq 0};\bar{\mathtt E}^{<0};\bar{\mathtt C};\bar{\mathtt I};\mathtt t$, where the term $\mathtt t$ contains only $S$-, $R$-, and identity terms.
\end{corollary}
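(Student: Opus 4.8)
The plan is to derive this corollary from Lemma~\ref{lma:Ibar-commutes} by exactly the same bootstrapping argument that produced each of Corollaries~\ref{cor:I-form}--\ref{cor:Cbar-form} from its associated commutation lemma. By Corollary~\ref{cor:Cbar-form}, every term is equal to one of the form $\mathtt I;\mathtt C;\mathtt E^{<0};\mathtt E^{\geq 0};\bar{\mathtt E}^{\geq 0};\bar{\mathtt E}^{<0};\bar{\mathtt C};\mathtt t$ with $\mathtt t$ containing no $I$-, $C$-, $E$-, $\bar E$- or $\bar C$-terms, so it suffices to show that such a $\mathtt t$ is equal to a term $\bar{\mathtt I};\mathtt t'$, where $\bar{\mathtt I}$ is a (possibly empty) sequence of $\bar I$-terms and $\mathtt t'$ consists only of $S$-, $R$- and identity terms.

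I would prove this last claim by induction on the length of $\mathtt t$, peeling off the final generating term $g$, so that $\mathtt t = \mathtt t_0;g$. If $g$ is an $S$-, $R$- or identity term, apply the inductive hypothesis to $\mathtt t_0$ to get $\mathtt t_0 \equiv \bar{\mathtt I}_0;\mathtt t_0'$ with $\mathtt t_0'$ consisting only of $S$-, $R$- and identity terms; then $\mathtt t = \mathtt t_0;g \equiv \bar{\mathtt I}_0;(\mathtt t_0';g)$, and $\mathtt t_0';g$ is again of the required kind. If $g = \bar I^{uv}$, apply the inductive hypothesis to $\mathtt t_0$ as before to get $\mathtt t_0 \equiv \bar{\mathtt I}_0;\mathtt t_0'$; now $\mathtt t_0';\bar I^{uv}$ is a term containing no $I$-, $C$-, $E$-, $\bar E$- or $\bar C$-terms, so Lemma~\ref{lma:Ibar-commutes} applies and yields a term $\mathtt t_0''$ with the same (zero) number of $\bar I$-terms as $\mathtt t_0'$, hence consisting only of $S$-, $R$- and identity terms, such that either $\mathtt t_0';\bar I^{uv} \equiv \mathtt t_0''$ or $\mathtt t_0';\bar I^{uv} \equiv \bar I^{ab};\mathtt t_0''$. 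In the first case $\mathtt t \equiv \bar{\mathtt I}_0;\mathtt t_0''$, and in the second $\mathtt t \equiv \bar{\mathtt I}_0;\bar I^{ab};\mathtt t_0''$, where appending $\bar I^{ab}$ to $\bar{\mathtt I}_0$ gives the desired sequence $\bar{\mathtt I}$ of $\bar I$-terms; in both cases we use that $\equiv$ is a congruence for composition.

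The only point needing attention is that each application of Lemma~\ref{lma:Ibar-commutes} must not reintroduce any $I$-, $C$-, $E$-, $\bar E$- or $\bar C$-term ahead of where it has already been collected; this is precisely what the lemma guarantees, since both its hypothesis and its conclusion are stated for terms free of those five kinds. I therefore do not expect any genuine obstacle here: the argument is a verbatim copy of the bootstrapping used for the earlier corollaries, and assembling the pieces gives the claimed form $\mathtt I;\mathtt C;\mathtt E^{<0};\mathtt E^{\geq 0};\bar{\mathtt E}^{\geq 0};\bar{\mathtt E}^{<0};\bar{\mathtt C};\bar{\mathtt I};\mathtt t'$ with $\mathtt t'$ containing only $S$-, $R$- and identity terms. (This corollary is the last link in the chain, and combined with Corollaries~\ref{cor:I-form}--\ref{cor:Cbar-form} it establishes Proposition~\ref{prop:ICE-form}.)
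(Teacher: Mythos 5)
Your proposal is correct and matches the paper's intended argument: the paper states Corollary~\ref{cor:Ibar-form} without explicit proof, as the immediate consequence of Lemma~\ref{lma:Ibar-commutes} via the same left-to-right bootstrapping induction used for Corollaries~\ref{cor:I-form}--\ref{cor:Cbar-form}, which is exactly what you spell out. The check that the residual term $\mathtt t_0''$ inherits zero $\bar I$-terms and none of the excluded kinds is the right point to verify, and you handle it correctly.
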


\begin{lemma}\label{lma:S-commutes}
Let $\mathtt t$ be a term containing only $S$-, $R$-, and identity terms such that the term $S^{u};\mathtt t$ is defined. Then there exists a term $\mathtt t'$ containing only $S$-, $R$-, and identity terms such that $\mathtt t$ and $\mathtt t'$ have the same number of $S$-terms, and one of the following holds:
\begin{enumerate}[label={(\arabic*)}]
\item $S^{u};\mathtt t\equiv\mathtt t'$, or
\item there is a term $S^{v}$ such that $S^{u};\mathtt t\equiv\mathtt t';S^{v}$.
\end{enumerate}
\end{lemma}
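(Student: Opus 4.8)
The statement to prove is Lemma~\ref{lma:S-commutes}: we want to commute a single $S^u$-term rightwards past a term $\mathtt t$ built only from $S$-, $R$-, and identity terms, either absorbing it or pushing out some $S^v$ on the right, keeping the $S$-count of $\mathtt t$ unchanged.

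\bigskip

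The plan is to proceed by induction on the structure of $\mathtt t$, exactly mirroring the pattern of Lemma~\ref{lma:I-commutes} and all the subsequent commutation lemmas. First I would handle the base cases, i.e.~when $\mathtt t$ is a single generator. If $\mathtt t = \id$, then $S^u;\id \equiv \id;S^u$, satisfying~(2) with $v = u$. If $\mathtt t = S^w$, then we split on whether $w = u$: if $w = u$, equation~\eqref{disc-eq:sidem} gives $S^u;S^u \simeq S^u$, which satisfies~(1) with $\mathtt t' = S^u$ (note the $S$-count of $\mathtt t = S^w$ is preserved, as $\mathtt t'$ also has one $S$-term); if $w \neq u$, equation~\eqref{disc-eq:scomm} gives $S^u;S^w \simeq S^w;S^u$, satisfying~(2) with $\mathtt t' = S^w$ and $v = u$. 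If $\mathtt t = R^{w\mapsto z}$, there are a few subcases governed by equations~\eqref{disc-eq:sr1} and~\eqref{disc-eq:sr2}: if $u \notin \{w, z\}$, then $S^u;R^{w\mapsto z} \approx R^{w\mapsto z};S^u$ by~\eqref{disc-eq:sr1}, giving~(2); if $u = w$, then $S^u;R^{u\mapsto z} \equiv S^w;R^{w\mapsto z}$ and by the middle equality of~\eqref{disc-eq:sr2}, $S^w;R^{w\mapsto z} \simeq R^{w\mapsto z}$, giving~(1); if $u = z$ (so $u \neq w$ since names in an $R$-term are distinct), then $S^z;R^{w\mapsto z}$ --- here we should use~\eqref{disc-eq:sr1} to commute and then note $R^{w\mapsto z};S^z \simeq R^{w\mapsto z}$ by~\eqref{disc-eq:sr2}, giving~(1).

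\bigskip

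For the inductive case, suppose $\mathtt t = \mathtt r;\mathtt s$ with both $\mathtt r$ and $\mathtt s$ satisfying the lemma, and $S^u;\mathtt r;\mathtt s$ is defined. Then $S^u;\mathtt r$ is defined, so by the inductive hypothesis for $\mathtt r$ there is $\mathtt r'$ (same $S$-count as $\mathtt r$) with either $S^u;\mathtt r \equiv \mathtt r'$ or $S^u;\mathtt r \equiv \mathtt r';S^w$. In the first case, $S^u;\mathtt t \equiv \mathtt r';\mathtt s$, and since $\mathtt r';\mathtt s$ has the same $S$-count as $\mathtt t$, this satisfies~(1). In the second case, $S^w;\mathtt s$ is defined, so by the inductive hypothesis for $\mathtt s$ there is $\mathtt s'$ (same $S$-count as $\mathtt s$) with either $S^w;\mathtt s \equiv \mathtt s'$ --- giving $S^u;\mathtt t \equiv \mathtt r';\mathtt s'$, satisfying~(1) --- or $S^w;\mathtt s \equiv \mathtt s';S^v$, giving $S^u;\mathtt t \equiv \mathtt r';\mathtt s';S^v$, satisfying~(2). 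In each branch the $S$-count is preserved because each half-term's count is preserved and the pushed-out $S$-terms are never accumulated.

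\bigskip

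I do not expect a serious obstacle here: the argument is entirely parallel to the earlier commutation lemmas, and the relevant equations (\eqref{disc-eq:scomm}, \eqref{disc-eq:sidem}, \eqref{disc-eq:sr1}, \eqref{disc-eq:sr2}) are exactly the ones governing interactions of $S$ with $S$ and $R$. The one point requiring mild care is bookkeeping the $S$-count in the $R^{w\mapsto z}$ base cases where $S^u$ gets absorbed (cases $u=w$ and $u=z$): here $\mathtt t$ has zero $S$-terms and $\mathtt t' = R^{w\mapsto z}$ also has zero, so the count is trivially preserved --- the absorbed $S^u$ is the one we started with on the left, not part of $\mathtt t$. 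This is consistent with how the statement is phrased. Hence the lemma follows by the structural induction.
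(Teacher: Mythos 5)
Your proof is correct and follows essentially the same structural induction as the paper's own proof, using the same equations (\eqref{disc-eq:scomm}, \eqref{disc-eq:sidem}, \eqref{disc-eq:sr1}, \eqref{disc-eq:sr2}) in the base cases and the same two-stage application of the inductive hypothesis for composites. The only cosmetic differences are that the paper treats $S^u;S^w$ as already being of the required form $\mathtt t';S^v$ without a case split, and your subcase $u=z$ for $\mathtt t=R^{w\mapsto z}$ is vacuous, since $S^u;R^{w\mapsto z}$ with $u=z\neq w$ is never well-typed.
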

\begin{proof}
By induction on $\mathtt t$.
\paragraph{Base cases:}
\begin{align*}
S^u;\id &\equiv \id;S^u, \\
S^u;S^w &\phantom{\equiv} \text{ is already in the right form,} \\
S^u;R^{w\mapsto z} &\equiv \begin{cases} R^{w\mapsto z};S^u \text{ if } u\neq w, \\
                                    R^{u\mapsto z} \text{ if } u=w.
                      \end{cases} \tag{by~\eqref{disc-eq:sr1} and~\eqref{disc-eq:sr2}}
\end{align*}
The inductive case is very similar to that of Lemma~\ref{lma:I-commutes}.
\end{proof}

\begin{corollary}[Proposition~\ref{prop:ICE-form}]\label{cor:ICE-form}
Any term is equal to a term in an $ICE$-form.
\end{corollary}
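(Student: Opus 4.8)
The strategy is to complete the block-by-block ``bubble sort'' of generators that Lemmas~\ref{lma:I-commutes}--\ref{lma:S-commutes} and Corollaries~\ref{cor:I-form}--\ref{cor:Ibar-form} have almost finished: each ``commutes'' lemma lets a single generator of one class be transported all the way to the left past everything preceding it (modulo possibly spawning some auxiliary $S$- and $R$-terms, or being absorbed outright), and the matching corollary promotes this, by an easy induction on the number of such generators, to the statement that \emph{all} generators of that class can be gathered into a single prefix block. The present corollary is the last instance of this pattern, built on Lemma~\ref{lma:S-commutes}.

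First I would invoke Corollary~\ref{cor:Ibar-form}: every term is equal to one of the form
$$\mathtt I;\mathtt C;\mathtt E^{<0};\mathtt E^{\geq 0};\bar{\mathtt E}^{\geq 0};\bar{\mathtt E}^{<0};\bar{\mathtt C};\bar{\mathtt I};\mathtt t,$$
where $\mathtt t$ contains only $S$-, $R$-, and identity terms. It then suffices to show that any such $\mathtt t$ is equal to a term $\mathtt R;\mathtt S$ with $\mathtt R$ a (possibly empty) sequence of $R$-terms and $\mathtt S$ a (possibly empty) sequence of $S$-terms. I would prove this by induction on the number of $S$-terms in $\mathtt t$. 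If there are none, deleting the identity terms by unitality leaves a sequence of $R$-terms, which has the required shape with $\mathtt S$ empty. For the inductive step, locate the leftmost $S$-term and write $\mathtt t \equiv \mathtt t_1;S^u;\mathtt t_2$ with $\mathtt t_1$ free of $S$-terms; then apply Lemma~\ref{lma:S-commutes} to the subterm $S^u;\mathtt t_2$, rewriting it either to a term $\mathtt t_2'$ or to a term $\mathtt t_2';S^v$. In both cases $\mathtt t_2'$ again uses only $S$-, $R$-, and identity terms and has the same number of $S$-terms as $\mathtt t_2$, so $\mathtt t_1;\mathtt t_2'$ uses only $S$-, $R$-, and identity terms and has one fewer $S$-term than $\mathtt t$. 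The inductive hypothesis yields $\mathtt t_1;\mathtt t_2' \equiv \mathtt R';\mathtt S'$, whence $\mathtt t$ is equal to $\mathtt R';\mathtt S'$ or to $\mathtt R';\mathtt S';S^v$, both of the form $\mathtt R;\mathtt S$.

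Assembling the two steps shows that every non-identity term is equal to a term of the shape in Definition~\ref{def:ICE-form}; an identity term is already an $ICE$-form by definition, so the statement holds for all terms. I do not anticipate any genuine difficulty at this stage: all the substantive case analysis lives in the commutation lemmas, where every pairwise interaction of generators had to be matched against an axiom of Figure~\ref{fig:disc-axioms}. The only thing to watch is the bookkeeping of the nested inductions -- in particular, checking that no commutation step reintroduces a generator of a block that has already been sorted -- and this is exactly what the ``not containing $\dots$-terms'' and ``same number of $\dots$-terms'' clauses woven through the lemma statements are designed to guarantee.
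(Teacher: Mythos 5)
Your proposal is correct and matches the paper's intended argument: the paper leaves this final corollary's proof implicit, but it is exactly your two steps — invoke Corollary~\ref{cor:Ibar-form} to obtain the sorted prefix blocks, then use Lemma~\ref{lma:S-commutes} (the one lemma that pushes its generator to the \emph{right} rather than the left) in an induction on the number of $S$-terms to rearrange the remaining tail into $\mathtt R;\mathtt S$. The bookkeeping you spell out (the ``same number of $S$-terms'' clause guaranteeing the induction measure decreases) is precisely what the lemma statements were designed to support.
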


\end{document}